\newcommand{\inConference}[1]{\iftoggle{conference}{#1}{}} % For text that should appear in the conference version only
\newcommand{\inArxiv}[1]{\iftoggle{conference}{}{#1}}  % For text that should appear in the ArXiv version only
	\newcommand{\qedhere}{}
\begin{document}

\newcommand {\ignore} [1] {}

\inArxiv{
\newtheorem{theorem}{Theorem}[section]
\newtheorem{lemma}[theorem]{Lemma}
\newtheorem{fact}[theorem]{Fact}
\newtheorem{corollary}[theorem]{Corollary}
\newtheorem{definition}{Definition}[section]
\newtheorem{proposition}[theorem]{Proposition}
\newtheorem{observation}[theorem]{Observation}
\newtheorem{claim}[theorem]{Claim}
\newtheorem{assumption}[theorem]{Assumption}
\newtheorem{notation}[theorem]{Notation}
\newtheorem{reduction}{Reduction}
}
\inConference{
	\newtheorem{observation}{Observation}
}

\def \aa   {\alpha}
\def \bb   {\beta}
\def \gg   {\gamma}
\def \ee   {\varepsilon}
\def \el   {\ell}
\def \ss   {\sigma}
\def \dd   {\delta}
\def \Om   {\Omega}

\def \PP   {{\cal P}}
\def \QQ   {{\cal Q}}
\def \DD   {{\cal D}}
\def \NN   {{\cal N}}
\def \AA   {{\cal A}}
\def \MM   {{\cal M}}
\def \II   {{\cal I}}
\def \TT   {{\cal T}}

\newcommand{\ie}{{\it i.e.}}
\newcommand{\eg}{{\it e.g.}}
\newcommand{\MC}{{\texttt{Max Cut}}}
\newcommand{\MDC}{\texttt{Max DiCut}}
\newcommand{\MkC}{\texttt{Max $k$-Coverage}}
\newcommand{\GA}{\texttt{Generalized Assignment}}
\newcommand{\FL}{\texttt{Max Facility Location}}
\newcommand{\MB}{\texttt{Max Bisection}}
\newcommand{\RSet}{{\mathtt{R}}}
\newcommand{\opt}{{\mathtt{opt}}}

\pagenumbering{arabic}

\title{Comparing Apples and Oranges:\\Query Tradeoff in Submodular Maximization}

\author{
 Niv Buchbinder\thanks{Statistics and Operations Research Dept., Tel Aviv University, Israel. E-mail: \texttt{niv.buchbinder@gmail.com}. Research supported by ISF grant 954/11 and BSF grant 2010426.}
 \and
 Moran Feldman\thanks{School of Computer and Communication Sciences, EPFL, Switzerland. E-mail: \texttt{moran.feldman@epfl.ch}. Research supported in part by ERC Starting Grant 335288-OptApprox.}
 \and
 Roy Schwartz\thanks{Dept. of Computer Science, Princeton University, Princeton, NJ. E-mail: \texttt{roysch@cs.princeton.edu}.}
}

\inConference{\date{}}
\maketitle

\begin{abstract}
\expandafter\csname\inConference{small}\endcsname
\inConference{\baselineskip=9pt}
Fast algorithms for submodular maximization problems have a vast potential use in applicative settings, such as machine learning, social networks, and economics.
Though fast algorithms were known for some special cases, only recently Badanidiyuru and Vondr\'{a}k \cite{BV14} were the first to explicitly look for such algorithms in the general case of maximizing a monotone submodular function subject to a matroid independence constraint.
The algorithm of Badanidiyuru and Vondr\'{a}k matches the best possible approximation guarantee, while trying to reduce the number of value oracle queries the algorithm performs.

Our main result is a new algorithm for this general case which establishes a surprising {\em tradeoff} between two seemingly unrelated quantities: the number of value oracle queries and the number of matroid independence queries performed by the algorithm.
Specifically, one can decrease the former by increasing the latter and vice versa, while maintaining the best possible approximation guarantee.
Such a tradeoff is very useful since various applications might incur significantly different costs in querying the value and matroid independence oracles.
Furthermore, in case the rank of the matroid is $O(n^c)$, where $n$ is the size of the ground set and $c$ is an absolute constant smaller than $1$,
the total number of oracle queries our algorithm uses can be made to have a smaller magnitude compared to that needed by~\cite{BV14}.
%our algorithm can use a smaller magnitude of oracle queries compared to that needed by~\cite{BV14}.
We also provide even faster algorithms for the well studied special cases of a cardinality constraint and a partition matroid independence constraint,
both of which capture many real-world applications and have been widely studied both theorically and in practice.
%improving the state of the art for these two widely used constraints.

%Our main result is a somewhat more involved fast algorithm for this problem having the following interesting property: value oracle queries can be traded for independence oracle queries. This property is very useful when independence and value oracle queries have different costs. Moreover, when the rank of the matroid is not very large (compared to the size of the ground set), our algorithm can use strictly less oracle queries in total. We also describe fast algorithms for cardinality and partition matroids improving on the state of the art results for these constraints in terms of the number of oracle queries.
\end{abstract}

\inArxiv{
	\thispagestyle{empty}
	\newpage
	\setcounter{page}{1}
}

\section{Introduction}

The study of combinatorial optimization problems with a submodular objective has attracted much attention in recent years, as submodular functions arise naturally in various disciplines, \eg, combinatorics, economics, and machine learning.
Many well-known problems in combinatorial optimization are in fact submodular maximization problems, including: {\MC} \cite{GW95,H01,K72,KKMO07,TSSW00}, {\MDC} \cite{FG95,GW95,HZ01}, {\GA} \cite{CK05,CKR06,FV06,FGMS06}, {\MkC} \cite{F98,KMN99}, {\MB} \cite{ABG13,FJ95}, and {\FL} \cite{AS99,CFN77a,CFN77b}.
Furthermore, practical applications of submodular maximization problems are common in social networks \cite{HMS08,KKT03}, vision \cite{BJ01,JB11}, machine learning \cite{KSG08,KG05,KLGVF08,LB10,LB11} (the reader is referred to a comprehensive survey by Bach \cite{Bach13}), and many other areas.
Elegant algorithmic techniques were developed in the course of this line of research which achieved provable, and in some cases even tight, approximation guarantees.
A prime example for the latter is the {\em continuous greedy} algorithm of~\cite{CCPV11} for maximizing a monotone submodular function subject to a matroid independence constraint.
Unfortunately, most of these techniques result in algorithms which are efficient in theory but are not practical.
Hence, a natural research question is whether one can obtain {\em faster} algorithms with provable tight guarantees for basic submodular optimization problems.

How does one measure the speed of an algorithm for a submodular maximization problem?
%How does one measure how fast is an algorithm for maximizing a submodular function $f$?
Since an explicit representation of the submodular function might be exponential in the size of its ground set, the algorithm is assumed to access the objective function $f$ via a {\em value oracle}\footnote{Other types of oracles exist, however, value oracles are the most commonly used type in the literature.} which returns the value of $f(S)$ given any subset $S$ of the ground set. Usually, the number of value oracle queries dominates the number of arithmetic operations in the algorithm up to a polylogarithmic factor. Hence, it is natural to use the number of value oracle queries as a measure for the algorithm's speed. The use of this measure is also facilitated by the observation that implementions of the value oracle have a non-neglagible time complexity in many applications.

Badanidiyuru and Vondr\'{a}k \cite{BV14} were the first to consider the question of finding fast algorithms with provable guarantees for maximizing a submodular function in its full generality.
They presented algorithms that achieve an almost tight approximation guarantee of $1-\nicefrac[]{1}{e}-\varepsilon$, for any $\varepsilon > 0$, for both the cardinality and the more general matroid independence constraints. The algorithms they designed use $O\left( \frac{n}{\varepsilon}\log{\left( \frac{n}{\varepsilon}\right)}\right)$ value oracle queries for the cardinality constraint and $O\left( \frac{nk}{\varepsilon ^4}\log ^{2}{\left( \frac{n}{\varepsilon}\right)}\right)$ value oracle queries for a general matroid independence constraint.
Here $k$ denotes the rank of the matroid and $n$ is the size of the ground set.

In the context of a simple constraint such as a cardinality constraint, it is easy to determine whether a given solution $S$ is feasible.
However, when considering more complex constraints, such as a general matroid independence constraint, one usually assumes the existence of an {\em independence oracle}.
This oracle determines whether a given subset $S$ of elements of the ground set is independent in the matroid, \ie, feasible.
In all previous works, as far as we know, the number of value oracle queries dominates the number of independence oracle queries, and thus, the latter is usually overlooked.
This overlook is unfortunate since the implementation of these two {\em distinct} oracles in various applications might have running times of completely different magnitudes.\footnote{For example, the independence oracle can be implemented very efficiently when the constraint is a uniform or partition matroid. However, no linear time implementation is known when the constraint is a matching or linear matroid.}
In particular, minimizing the number of value oracle queries, as implicitly done by previous works, might not be the correct goal.
Furthermore, it is not clear whether the two different goals mentioned above, \ie, minimizing the number of value oracle queries and minimizing the number of independence oracle queries, are related.

\subsection{Our Results}

%\subsubsection{Main Result}

Our main result is the design of an algorithm for maximizing a monotone submodular function subject to a matroid independence constraint.
Our algorithm establishes a {\em tradeoff} between the following two seemingly unrelated quantities: the number of value oracle queries and the number of independence oracle queries performed by the algorithm. The following theorem summarizes the result.

\begin{theorem} \label{th:general_matroid}
There exists an algorithm that given a non-negative monotone submodular function $f : 2^\NN \to \mathbb{R}^+$, a matroid $\MM = (\NN, \II)$ of rank $k$, and parameters $\ee > 0$ and $\lambda \in [1, k]$, finds a solution $S\in \II$ where:
%There exists an algorithm for the problem of maximizing a non-negative monotone submodular function $f : 2^\NN \to \mathbb{R}^+$ over a matroid $\MM = (\NN, \II)$ of rank $k$ with the following properties.
%Given parameters $\ee > 0$ and $\lambda \in [1, k]$:
\begin{enumerate}
\item  $f(S) \geq \left(1 - \nicefrac[]{1}{e} - \ee\right) \cdot \max \left\{ f(T):T\in\II\right\}$.
\item The algorithm performs $O\left(k\lambda + \frac{kn}{\lambda \ee^5} \ln^2\left(\frac{n}{\ee}\right)\right)$ value oracle queries.
\item The algorithm performs $O\left(\frac{k^2}{\ee} + \frac{\lambda n}{\ee^{2}} \ln\left(\frac{n}{\ee}\right)\right)$ independence oracle queries.
\end{enumerate}
\end{theorem}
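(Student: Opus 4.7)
My plan is to build on the decreasing-threshold continuous greedy framework of Badanidiyuru and Vondr\'{a}k \cite{BV14}, introducing $\lambda$ as a batching granularity that trades value-oracle work against independence-oracle work. The skeleton algorithm maintains a fractional solution $x \in P(\MM)$ built over $O(1/\ee)$ outer rounds, where each round takes a step of size $\ee$ along some independent set $I \in \II$. That independent set is grown by cycling through geometrically decreasing thresholds $\tau_j = d(1-\ee)^j$ with $d = \max_u f(u)$, from $d$ down to $\Theta(\ee d / n)$, yielding $O(\ee^{-1}\ln(n/\ee))$ threshold levels per round.

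The tradeoff is obtained by decoupling the work of \emph{estimating} marginal gains from the work of \emph{certifying} feasibility. At each threshold, the ground set is processed in blocks of size $\Theta(\lambda)$: within a block, one value-oracle estimate per element filters candidates whose marginal exceeds $\tau$, and only this filtered sublist is passed to the independence oracle for insertion attempts. Increasing $\lambda$ lengthens each block, so the algorithm amortizes fewer full rescans per committed element, giving the $kn/(\lambda\ee^5)\ln^2(n/\ee)$ term after accounting for the $O(k/\ee)$ rounds, $O(\ln(n/\ee)/\ee)$ thresholds each, and an extra $\ee^{-3}$ factor needed to keep block-level estimation errors tolerable across the analysis. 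Symmetrically, larger $\lambda$ forces more independence checks per block, contributing the $\lambda n / \ee^2 \ln(n/\ee)$ term. The additive $k\lambda$ value queries arise from an initial $\lambda$-sized scan used to locate a candidate at the very first threshold of each of the $k$ insertions in a round, while the additive $k^2/\ee$ independence queries come from the $O(k)$ feasibility checks needed to finalize the support of $x$ in each of the $O(k/\ee)$ outer rounds.

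The main obstacle is showing that the $(1 - 1/e - \ee)$ approximation survives the batching. I would adapt the potential-function analysis of \cite{BV14}: letting $F$ denote the multilinear extension of $f$, each outer round should increase $F(x)$ by at least $\frac{\ee}{1+\ee}(\opt - F(x))$. Batching introduces two error sources. The first is staleness of marginal estimates within a block, which is controlled because by submodularity the marginal of any element drops by at most an $(1-\ee)$ factor per commit, and each block commits at most $O(1/\ee)$ items before the next refresh, so the true marginal at commit time remains at least $(1-O(\ee))\tau$. The second is threshold misclassification of borderline elements, handled via a union bound over the $O(k \ee^{-2} \ln(n/\ee))$ threshold decision events spanning the algorithm; this union bound is precisely what forces the $\ln^2$ factor into the value-query count. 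Combining these with the standard telescoping over thresholds and rounds, and rounding the final fractional solution to an integral $S \in \II$ via swap rounding (which is independence-oracle cheap), yields the claimed approximation together with the stated oracle-complexity bounds.
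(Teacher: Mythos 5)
Your proposal takes a genuinely different route from the paper, and unfortunately it does not go through as described. The paper obtains the tradeoff with a \emph{two-phase hybrid}: it first runs $I = \Theta(\lambda)$ iterations of a \emph{residual random greedy} (Algorithm~\ref{alg:RandomLazyGreedy1}), which is cheap in value queries (about $O(Ik + n\delta^{-1}\ln(k/\delta))$, giving the $k\lambda$ term) but costs $O(In\delta^{-1}\ln(k/\delta))$ independence queries (giving the $\lambda n$ term). That phase has a specific job: reduce the residual quantity $c$ with $\max_{S' \in \II/S}\sum_{u \in S'} f(u\mid S) \le c\cdot f(OPT)$ from the trivial $c = k$ down to $c = O(k/\lambda)$. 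The crucial observation (Corollary~\ref{co:continuous_greedy_guarantee}) is that the number of samples per marginal estimate inside the continuous greedy scales linearly with $c$, so a smaller $c$ directly translates into the $O(cn\delta^{-4}\ln^2(n/\delta))$ value-query bound, i.e.\ $O((k/\lambda) n \ee^{-4}\ln^2)$. The tradeoff is literally $c$ vs.\ the cost of driving $c$ down.

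Your proposal stays entirely inside a single threshold/continuous-greedy phase and introduces $\lambda$ as a block-processing granularity. The missing idea is precisely the one above: you never explain why batching would reduce the sample count $m_b$ that dominates the value-query cost. In~\cite{BV14}, estimating $\partial_u F(x)$ to additive error $\delta f(OPT)/k$ requires $\Theta(k\ln n/\delta^2)$ samples regardless of how elements are grouped into blocks; batching stale estimates changes when you resample, not how many samples each estimate needs, so it cannot by itself produce the $k/\lambda$ factor in the value-query bound. Conversely, your claim that larger blocks ``force more independence checks'' has no concrete mechanism either: the number of elements that survive a threshold filter and hence get independence-checked does not scale with $\lambda$. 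Finally the bookkeeping does not match the theorem: you posit $O(k/\ee)$ outer rounds (the standard continuous greedy uses $O(1/\ee)$), and ``an initial $\lambda$-sized scan per insertion per round'' would cost $O(k\cdot\lambda\cdot k/\ee) = O(k^2\lambda/\ee)$, not the claimed $O(k\lambda)$. To make progress you would need some way to lower the per-estimate sample complexity, which is exactly what prepending the random greedy phase accomplishes.
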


%It is important to note that the above theorem presents a surprising tradeoff between two seemingly unrelated quantities: the number of value oracle queries and the number of matroid independence queries the algorithm performs.
An example, in which the tradeoff between the number of the two distinct oracle queries is perhaps most insightful, is when $k=\Theta \left( \sqrt{n}\right)$.
%This tradeoff is perhaps most striking in the case $k=\Theta (\sqrt{n})$.
In this case, if one chooses $\lambda=1$ our algorithm performs $\tilde{O} _{\varepsilon}\left( n^{\nicefrac[]{3}{2}}\right)$ value queries, but only $\tilde{O} _{\varepsilon}\left( n\right)$ independence oracle queries.\footnote{Here $\tilde{O} _{\varepsilon}$ hides polylogarithmic factors in $n$ and polynomial factors in $\varepsilon ^{-1}$.}
However, if one chooses $\lambda=k$ the algorithm performs only $\tilde{O} _{\varepsilon}\left( n\right)$ value queries while the number of independence oracle queries grows to $\tilde{O} _{\varepsilon}\left( n^{\nicefrac[]{3}{2}}\right)$. This allows flexibility when the two types of queries have different time complexities in the application at hand. Note that in this case when $k=\Theta \left(\sqrt{n}\right)$ the algorithm of \cite{BV14} corresponds to choosing $\lambda =1$ in our algorithm, as it performs $\tilde{O} _{\varepsilon}\left( n^{\nicefrac[]{3}{2}}\right)$ value queries and $\tilde{O} _{\varepsilon}\left( n\right)$ independence oracle queries.

It is important to note that our algorithm not only establishes a surprising tradeoff between the two different types of queries, but can also provide a significant speedup to the running time when compared to the state of the art algorithm of \cite{BV14}.
Consider, for simplicity, the case where both types of oracles have the same running time.
In this case, obtaining a fast algorithm requires reducing the {\em total} number of oracle queries regardless of their types.
While the algorithm of \cite{BV14} requires $\tilde{O} _{\varepsilon}\left( k^2+nk\right)$ queries, our algorithm requires only $\tilde{O}_{\varepsilon} ( k^2+\sqrt{k}n)$ queries if one sets $\lambda = \sqrt{k}$.
In the above example, where $k=\Theta \left( \sqrt{n}\right)$, it reduces the number of oracle queries from $\tilde{O} _{\varepsilon}\left( n^{\nicefrac[]{3}{2}}\right)$ to $\tilde{O} _{\varepsilon}\left( n^{\nicefrac[]{5}{4}}\right)$.
In fact, if $k=O(n^c)$ for some absolute constant $c \leq 2/3$, $\lambda$ can be chosen such that the our algorithm uses $\tilde{O} _{\varepsilon} \left(n^{1+\nicefrac[]{c}{2}}\right)$ oracle queries, whereas \cite{BV14}'s algorithm needs $\tilde{O}_{\varepsilon} \left( n^{1+c}\right)$ oracle queries (for $2/3 < c < 1$ we still get an improvement, but a smaller one).

\paragraph{Additional Results.}
We consider two \inArxiv{well-studied}\inConference{interesting} special cases of the general problem considered above. The first is the case of a cardinality constraint and the second is the case of a partition matroid independence constraint.
Building upon the ideas developed in the context of the main result, we present even faster algorithms for the above two special cases.
The reader should note that in these two cases the implementation of the independence oracle is trivial, and hence, we focus only on minimizing the number of value oracle queries. The following three theorems summarize these results.

%The ideas we develop allow us to design faster algorithms for the commonly used simpler case of submudular maximization over a partition matroid. For this simpler cases, there is no longer an independence oracle, thus, we are only interested in the number of value oracle queries used by the algorithm (and guaranteeing that the time complexity is bounded by the same expression).

\begin{theorem} \label{th:partition_matroid}
There exists an algorithm that given a non-negative monotone submodular function $f : 2^\NN \to \mathbb{R}^+$, a generalized partition matroid $\MM = (\NN, \II)$ of rank $k$, and a parameter $\ee > 0$, finds a solution $S\in\II$ where:
$f(S) \geq \left( 1-\nicefrac[]{1}{e}-\ee\right)\cdot\max \left\{ f(T):T\in\II\right\}$ and the algorithm performs $ O\left( k\sqrt{\frac{n}{\ee^5}}\ln{\left( \frac{n}{\ee}\right)}+\frac{n}{\ee^5}\ln^2\left( \frac{n}{\ee}\right)\right)$ value oracle queries.

%achieves an approximation ratio of $(1 - e^{-1} - \ee)$ for the problem $\max \{f(S) : S \in \II\}$ using $O(k\sqrt{n\ee^{-5}}\ln(\frac{n}{\ee}) + n\ee^{-5}\ln^2(\frac{n}{\ee}))$ value oracle queries.
\end{theorem}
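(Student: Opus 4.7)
The plan is to adapt the general-matroid algorithm promised by Theorem~\ref{th:general_matroid} to the generalized partition matroid setting, exploiting the fact that here the independence oracle is essentially free: it suffices to maintain a counter $c_i$ of the elements currently selected from each part $N_i$, and a set $S$ is independent iff $c_i \leq k_i$ for every $i$. Consequently, independence queries do not appear in the complexity bound and all that remains is to minimize value oracle queries.

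As a first step, I would instantiate Theorem~\ref{th:general_matroid} with $\lambda = \Theta\bigl(\sqrt{n/\ee^5}\,\ln(n/\ee)\bigr)$. This choice balances the two terms $k\lambda$ and $kn/(\lambda\ee^5)\ln^2(n/\ee)$ in the value query count, immediately producing the leading term $k\sqrt{n/\ee^5}\,\ln(n/\ee)$ of the claimed bound. Note that the independence query count of Theorem~\ref{th:general_matroid} for this choice of $\lambda$ is irrelevant, since partition independence is checked in $O(1)$ time without any oracle call.

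The second step is to argue that the remaining scanning work of the algorithm can be made to depend on $n$ but not on $k$, yielding the additive term $(n/\ee^5)\ln^2(n/\ee)$. I would revisit the decreasing-threshold framework underlying the main algorithm, in which thresholds $\tau_j \propto (1-\ee)^j$ are considered over $O(\ee^{-1}\ln(n/\ee))$ levels. For each part $N_i$, once the capacity $k_i$ has been reached the part is frozen and its elements are never re-examined at subsequent thresholds; hence the total number of (element, threshold)-pairs ever scanned is $O(n \cdot \ee^{-1}\ln(n/\ee))$ regardless of $k$. Combined with the sampling-based estimation used inside each phase to decide whether a threshold is still useful (the sample size being $\Theta(\ee^{-4}\ln(n/\ee))$ as in the main algorithm), the amortized scanning cost contributes the claimed additive $(n/\ee^5)\ln^2(n/\ee)$ term.

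The hard part will be verifying that the $(1-1/e-\ee)$ approximation analysis of Theorem~\ref{th:general_matroid} survives these simplifications. In particular, I need to check that (i) the exchange argument used to lower-bound the value of the final solution against $\max\{f(T):T\in\II\}$ still applies when feasibility is enforced part-by-part rather than by general matroid exchange, which is straightforward since in a partition matroid the obvious replacement of an excess element inside a saturated part is always independent; and (ii) freezing saturated parts does not invalidate the threshold-decrease invariant, because any element of a frozen part would contribute only to a lower bound on the missing marginal gain, and this bound is absorbed by the $\ee$-slack in the threshold schedule. Once these two invariants are confirmed, the approximation guarantee follows from essentially the same analysis as in Theorem~\ref{th:general_matroid}, and combining the two query counts above yields the bound stated in the theorem.
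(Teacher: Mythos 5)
Your step one is the right idea, but it is incomplete in a way that then sends your step two off track. Theorem~\ref{th:general_matroid} requires $\lambda \in [1,k]$, and you cannot simply set $\lambda = \Theta\bigl(\sqrt{n/\ee^5}\,\ln(n/\ee)\bigr)$ when $k$ is smaller than that. The correct move is $\lambda = \min\{k,\ \sqrt{n/\ee^5}\,\ln(n/\ee)\}$: when the minimum is $\sqrt{n/\ee^5}\ln(n/\ee)$ the two terms $k\lambda$ and $kn/(\lambda\ee^5)\ln^2(n/\ee)$ balance to $k\sqrt{n/\ee^5}\ln(n/\ee)$ as you say; when the minimum is $k$ (the small-$k$ regime), the bound becomes $O(k^2 + n\ee^{-5}\ln^2(n/\ee)) = O(n\ee^{-5}\ln^2(n/\ee))$ because in that regime $k^2 \le n\ee^{-5}\ln^2(n/\ee)$. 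Taking the max over the two cases gives exactly the claimed $O\bigl(k\sqrt{n/\ee^5}\ln(n/\ee) + n\ee^{-5}\ln^2(n/\ee)\bigr)$. In other words, the additive term you try to explain in step two is already present in the Theorem~\ref{th:general_matroid} value-query bound once $\lambda$ is capped at $k$; no algorithmic modification is needed to obtain the \emph{value-oracle} count.

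Your step two as written does not hold together. You attribute the additive term to a ``scanning'' budget of $O(n\ee^{-1}\ln(n/\ee))$ coming from the decreasing-threshold phase and then multiply by an $\ee^{-4}\ln(n/\ee)$ sampling factor. But the thresholding routine (\FuncSty{LinearGreedy}) does not sample --- each of its checks is a single marginal-value query --- while the sampling factor $\Theta(\delta^{-4}\ln(n/\delta))$ belongs to the multilinear-extension estimates in the continuous-greedy phase, a separate stage of the algorithm. Multiplying the two is not meaningful; the paper never composes them in that way. Likewise, the ``freezing saturated parts'' invariant you propose is not what the paper does, and its two correctness conditions (i)--(ii) concern an algorithm you are inventing rather than Algorithm~\ref{alg:RandomLazyGreedy1}/\ref{alg:Final}. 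Finally, you should be aware that the paper does do extra work for partition matroids --- a partition-aware \FuncSty{LinearGreedy} (Algorithm~\ref{alg:new_implementation_greedy_linear}) and the $O(bk)$ swap-rounding bound (Observation~\ref{ob:pipage_time}) --- but this is to control the overall \emph{time complexity}, not the value-oracle count: without it, replacing independence queries by $O(1)$-time counter checks still leaves $\tilde O(\lambda n)$ arithmetic operations from the naive scans. If you only need the value-query bound claimed in the theorem statement, the capped-$\lambda$ argument above is already complete and your step two should be dropped.
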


%Finally, we also consider the even simpler case of a cardinality constraint. For monotone objective functions, we give a formal proof of the following folklore result.

\begin{theorem} \label{th:monotone_cardinality}
There exists an algorithm that given a non-negative monotone submodular function $f : 2^\NN \to \mathbb{R}^+$, and parameters $k\geq 1$ and $\ee >0$,
finds a solution $ S\subseteq \NN$ of size $|S|\leq k$ where: $f(S) \geq \left( 1-\nicefrac[]{1}{e}-\ee\right)\cdot\max \left\{ f(T) : T \subseteq \NN, |T|\leq k\right\}$ and the algorithm performs $ O\left(n \ln \left(\frac{1}{\ee}\right)\right)$ value oracle queries.
%For every constant $\ee > 0$, there exists a $(1 - e^{-1} - \ee)$-approximation algorithm for the problem $\max \{f(S) : |S| \leq k\}$, where $f$ is a non-negative monotone submodular function, using $O(n \ln \ee^{-1})$ value oracle queries.
\end{theorem}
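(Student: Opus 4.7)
The plan is to use a stochastic greedy paradigm, in which each of the $k$ greedy iterations samples only a small random subset of $\NN$ rather than scanning the entire ground set. Concretely, maintain a current set $S$, initially empty, and for $i = 1,\ldots,k$: (a)~draw a uniformly random subset $R_i \subseteq \NN \setminus S$ of size $s = \lceil (n/k) \ln(1/\ee) \rceil$ (or the whole remaining set if smaller), (b)~query the marginal $\Delta(e \mid S) = f(S \cup \{e\}) - f(S)$ for every $e \in R_i$, and (c)~add to $S$ the element of $R_i$ with the largest marginal. Each iteration uses $s$ value oracle queries, giving a total of $k \cdot s = O(n\ln(1/\ee))$ queries.

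For the approximation guarantee, fix an optimal set $T^* \subseteq \NN$ with $|T^*| \le k$ and $f(T^*) = \opt$, and condition on the state $S_{i-1}$ at the start of iteration $i$. By monotonicity and submodularity, $\sum_{e \in T^* \setminus S_{i-1}} \Delta(e \mid S_{i-1}) \ge \opt - f(S_{i-1})$, so the average marginal of an element in $T^* \setminus S_{i-1}$ is at least $(\opt - f(S_{i-1}))/k$. The central probabilistic step is to show that $R_i$ hits $T^* \setminus S_{i-1}$ with probability at least $1 - \ee$: the probability of $R_i$ missing the target is at most $\bigl(1 - |T^* \setminus S_{i-1}|/|\NN\setminus S_{i-1}|\bigr)^{s} \le \exp(-s\cdot|T^* \setminus S_{i-1}|/n)$, which for the chosen $s$ is at most $\ee$ after a short case analysis (the corner case where $|T^* \setminus S_{i-1}|$ is very small is handled separately by direct accounting, using that $|\NN \setminus S_{i-1}| \le n$ throughout).

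Combining these observations, the expected marginal gain in iteration $i$ is at least $(1-\ee)\cdot(\opt - f(S_{i-1}))/k$. Telescoping the recurrence $\mathbb{E}[\opt - f(S_i)] \le \bigl(1 - (1-\ee)/k\bigr) \cdot \mathbb{E}[\opt - f(S_{i-1})]$ across all $k$ iterations yields $\mathbb{E}[f(S_k)] \ge \bigl(1 - e^{-(1-\ee)}\bigr)\cdot\opt \ge (1 - 1/e - \ee)\cdot\opt$, after absorbing lower-order $\ee$ terms by a mild reparametrization. The main obstacle will be the careful accounting in the probabilistic lemma: the sample size $s$ is fixed a priori while $|\NN \setminus S_{i-1}|$ shrinks and $|T^* \setminus S_{i-1}|$ can drop to one near the end, so the $(1-\ee)$ expected-gain factor must be preserved in all regimes. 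A secondary concern, if the theorem is to hold with certainty rather than in expectation, is to derandomize via conditional expectations or to boost the success probability by independent repetition.
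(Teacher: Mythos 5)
Your algorithm and overall plan are the standard stochastic-greedy route (sample $\lceil (n/k)\ln(1/\ee)\rceil$ elements per iteration and take the one of largest marginal), which differs in detail from the paper's argument, but your central probabilistic lemma is false as stated. With $s=\lceil (n/k)\ln(1/\ee)\rceil$, the probability that $R_i$ misses $T^*\setminus S_{i-1}$ is roughly $\exp\bigl(-s\,|T^*\setminus S_{i-1}|/n\bigr)=\ee^{|T^*\setminus S_{i-1}|/k}$; when $|T^*\setminus S_{i-1}|$ is small compared to $k$ (e.g., a single high-value optimal element remains uncovered, which can persist over many iterations), this is about $1-\ln(1/\ee)/k$, so the hit probability is only $\Theta(\ln(1/\ee)/k)$, nowhere near $1-\ee$. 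No ``direct accounting using $|\NN\setminus S_{i-1}|\le n$'' can repair this, because the failure probability genuinely tends to $1$ as $k$ grows; you yourself flag this regime as the unresolved ``main obstacle,'' so the key step of the proof is missing rather than merely unpolished.

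The per-iteration conclusion you want, $\mathbb{E}[f(S_i)-f(S_{i-1})]\ge \frac{1-\ee}{k}\bigl(\opt-f(S_{i-1})\bigr)$, is nevertheless true, but its proof must couple the hit probability with the conditional gain differently: conditioned on hitting, the hit element is uniform over $T^*\setminus S_{i-1}$, so the expected gain is at least $\Pr[\mathrm{hit}]\cdot\frac{1}{|T^*\setminus S_{i-1}|}\sum_{e\in T^*\setminus S_{i-1}}\Delta(e\mid S_{i-1})$ --- dividing by $|T^*\setminus S_{i-1}|$, not by $k$ --- and then one uses the concavity bound $1-e^{-sx/n}\ge\bigl(1-e^{-sk/n}\bigr)\frac{x}{k}=(1-\ee)\frac{x}{k}$ for $0\le x\le k$ with $x=|T^*\setminus S_{i-1}|$; the small-$x$ regime where your hit probability collapses is exactly compensated by the larger conditional average. (The paper sidesteps the issue entirely: its samples are drawn from all of $\NN$, the picked element is compared against the $k$ elements of largest marginal rather than against $T^*\setminus S_{i-1}$, and the $(1-\ee)/k$ factor comes from an Abel-summation bound together with Chebyshev's sum inequality; also, the final step $1-e^{\ee-1}\ge 1-\nicefrac[]{1}{e}-\ee$ is proved directly, so no reparametrization of $\ee$ is needed, and the guarantee is in expectation, so your derandomization concern is not required to match the theorem.)
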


\begin{theorem} \label{th:non_monotone_cardinality}
There exists an algorithm that given a general non-negative submodular function $f : 2^\NN \to \mathbb{R}^+$, and parameters $k\geq 1$ and $\ee >0$,
finds a solution $ S\subseteq \NN$ of size $|S|\leq k$ where: $f(S) \geq \left( \nicefrac[]{1}{e}-\ee\right)\cdot\max \left\{ f(T) : T \subseteq \NN, |T|\leq k\right\}$ and the number of value oracle queries the algorithm performs is $ \min\left\{O\left(\frac{n}{\ee^2}\ln \left(\frac{1}{\ee}\right)\right), O\left(k\sqrt{\frac{n}{\ee} \ln \left(\frac{k}{\ee}\right)} + \frac{n}{\ee} \ln \left(\frac{k}{\ee}\right)\right)\right\}$.
%For every constant $\ee > 0$, there exists a $(e^{-1} - \ee)$-approximation algorithm for the problem $\max \{f(S) : |S| \leq k\}$, where $f$ is a non-negative submodular function, using, in expectation, $\min\{O(n \ee^{-2} \ln \ee^{-1}), O(k\sqrt{n\ee^{-1} \ln (k/\ee)} + n\ee^{-1} \ln (k/\ee))\}$ value oracle queries.
\end{theorem}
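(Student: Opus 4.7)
The plan is to adapt the random greedy algorithm of Buchbinder, Feldman, Naor and Schwartz---which achieves a $1/e$ approximation for non-monotone cardinality-constrained submodular maximization---by combining it with thresholded sampling in the spirit of the speedups developed for Theorems~\ref{th:general_matroid}--\ref{th:monotone_cardinality}. Recall that random greedy performs $k$ iterations; in iteration $i$ it identifies the set $M_i \subseteq \NN \setminus S_{i-1}$ of $k$ elements with largest marginal value with respect to $S_{i-1}$ (padded by dummies if fewer than $k$ elements have positive marginal), picks $u_i$ uniformly at random from $M_i$, and sets $S_i = S_{i-1}+u_i$. I would replace the exact computation of $M_i$ by sampling from the pool $B_\tau = \{u \in \NN\setminus S_{i-1} : f(u\mid S_{i-1}) \geq \tau\}$, where $\tau$ decreases geometrically by a factor $(1-\ee)$ across phases from $\max_u f(\{u\})$ down to $\ee \cdot \max_u f(\{u\}) / k$. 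Sampling is performed only when $|B_\tau|\geq k$, ensuring that each fixed element is chosen with probability at most $1/k$ in every iteration.

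For the approximation guarantee, the standard random greedy analysis has to be tweaked to accommodate sampling from $B_\tau$ rather than from $M_i$ exactly. The key observation is that any optimal element $o$ not in $B_\tau$ has marginal at most $\tau/(1-\ee)$, which, once $\tau$ reaches its final level, is at most $\ee\cdot \opt/k$, so the cumulative contribution of such elements to the potential is absorbed into the additive $\ee$ error. Meanwhile, each element of $B_\tau$ has marginal at least $\tau$, and $|B_\tau|\geq k$, so the expected marginal gain of a uniformly random $u_i\in B_\tau$ is at least a $(1-\ee)$ fraction of the average marginal of $O\cap B_\tau$. Substituting these into the random greedy recurrence and unrolling yields $\mathbb{E}[f(S_k)] \geq (1/e - O(\ee))\cdot \opt$, and rescaling $\ee$ produces the claimed ratio.

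For the query complexity, two separate implementations yield the two bounds inside the $\min$. The first, giving $O((n/\ee^2)\ln(1/\ee))$ queries, uses a single-pass thresholding template analogous to Theorem~\ref{th:monotone_cardinality}: only elements whose last known marginal (an upper bound by submodularity) still exceeds the current threshold are re-evaluated, and a lazy-evaluation amortization shows that each element contributes $O(1/\ee)$ queries across all $O(\ln(k/\ee)/\ee)$ threshold phases. The second bound is obtained by the tradeoff template used in Theorem~\ref{th:general_matroid}: iterations are grouped into $\lambda$-sized batches, with $k\lambda$ queries spent on per-batch sampling and roughly $(n/\lambda)\cdot\ln(k/\ee)/\ee$ queries on updating the thresholded pool; choosing $\lambda = \Theta\bigl(\sqrt{(n/(\ee k))\ln(k/\ee)}\bigr)$ balances the two terms to produce the stated expression.

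The main obstacle I anticipate is the randomized analysis in the non-monotone setting. Unlike monotone maximization, where any element with positive marginal is safe to add, in the non-monotone case choosing a suboptimal element can actively decrease $f(S_i \cup O)$ and so undermine the potential $\mathbb{E}[f(S_i\cup O)]$ that underlies the random greedy proof. One therefore has to couple the threshold-induced randomness with the uniform distribution over $M_i$ carefully enough that the crucial bound $\Pr[u_i=o]\leq 1/k$ continues to hold for each $o\in O$ even when $B_\tau$ is strictly larger than $M_i$, and that the residual terms from optimum elements missed by $B_\tau$ remain controllable by $\ee\cdot\opt$. All remaining pieces are routine combinations of the techniques used to prove Theorems~\ref{th:general_matroid}--\ref{th:monotone_cardinality}.
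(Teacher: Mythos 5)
There are genuine gaps, both in the claimed query bounds and in the approximation analysis. First, the bound $O(n\ee^{-2}\ln(1/\ee))$ cannot come out of the thresholding/lazy-evaluation template you describe: any scheme that sweeps thresholds geometrically from $\max_u f(u)$ down to $\ee\cdot\max_u f(u)/k$ pays a $\Theta(\ee^{-1}\ln(k/\ee))$ factor per element in the worst case (and your own accounting of ``$O(1/\ee)$ queries per element'' would give $O(n/\ee)$, not the stated bound, so the arithmetic does not close either). In the paper this branch of the $\min$ is achieved by a completely different, threshold-free algorithm: in each of the $k$ iterations one draws a uniformly random sample of $\lceil pn\rceil$ elements with $p=\Theta(k^{-1}\ee^{-2}\ln(1/\ee))$ and picks a uniformly random element among its top $s\approx kp$ elements; the analysis needs a Chernoff bound for the hypergeometric overlap of the sample with the true top-$k$, a coupling argument showing that the probability of picking the $j$-th best element is non-increasing in $j$, and Chebyshev's sum inequality to convert this into the random-greedy gain bound. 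None of these ingredients appear in your plan, and they are what removes the $\ln(k/\ee)$ dependence. Second, for the bound $O(k\sqrt{(n/\ee)\ln(k/\ee)}+(n/\ee)\ln(k/\ee))$, your $\lambda$-batching sketch does not reproduce the stated expression: balancing $k\lambda$ against $(n/\lambda)\ee^{-1}\ln(k/\ee)$ gives $\sqrt{kn\ee^{-1}\ln(k/\ee)}$, off by a $\sqrt{k}$ factor from the theorem's first term, which signals that the cost model behind the batching is not right. The paper instead keeps a pool $M$ of exactly $k$ candidates, first draws the random element and only rescans/refills $M$ if the drawn element turned out to be stale; the complexity then follows from an amortization in which the expected refill size in iteration $i$ is at least $\mathbb{E}[X_i^2]/k$ (with $X_i$ the number of stale elements), so the total number of insertions bounds $\sum_i(\mathbb{E}[X_i])^2/k$ and Cauchy--Schwarz yields $\sum_i\mathbb{E}[X_i]=O(k\sqrt{n\ee^{-1}\ln(k/\ee)})$. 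This quadratic amortization is the key idea for that term and is absent from your proposal.

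There is also a gap in the approximation argument itself. Sampling uniformly from the full superlevel set $B_\tau=\{u: f(u\mid S_{i-1})\ge\tau\}$ whenever $|B_\tau|\ge k$ breaks the random-greedy recurrence when $|B_\tau|\gg k$: a single optimal element with marginal $V\gg k\tau$ is then chosen only with probability $1/|B_\tau|\ll 1/k$, so the expected gain can be of order $\tau+V/|B_\tau|$, far below the required $(1-\ee)\bigl(f(OPT\cup S_{i-1})-f(S_{i-1})\bigr)/k\approx V/k$. The issue is not, as you anticipate, the optimum elements that fall below the threshold (those are indeed absorbed into the $\ee\cdot\opt$ error), but the dilution of high-marginal elements inside an oversized pool. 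The paper's fix is structural: the candidate set is always truncated to exactly $k$ elements, filled in decreasing threshold order and padded with zero-value dummies, which simultaneously guarantees the per-element inclusion probability of exactly $1/k$ and the lower bound $f(M_i:S_{i-1})\ge(1-\delta)f(O_i:S_{i-1})-\delta f(OPT)$ on the pool's total marginal value. With that modification your second construction essentially becomes the paper's Algorithm for the thresholding branch, but the first branch of the $\min$ still requires the separate random-sampling algorithm described above.
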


The best previously known result for partition matroids is identical to the one that was known for general matroids, \ie, it uses $O\left( \frac{nk}{\varepsilon ^4}\log ^{2}{\left( \frac{n}{\varepsilon}\right)}\right)$ value oracle queries. We note that Theorem \ref{th:monotone_cardinality} is a folklore result that improves over the best previously formally published result of~\cite{BV14}, who described an algorithm using $O(\frac{n}{\ee} \ln \frac{n}{\ee})$ value oracle queries. %For the same constraint with a non-monotone submodular function we prove the following theorem.

\subsection{Additional Related Work}
The literature on submodular maximization is rich and has a long history. We mention here only a few of the most relevant works.
The classical result of Nemhauser et al.~\cite{NWF78} states that the simple discrete greedy algorithm provides an approximation of $\left( 1-\nicefrac[]{1}{e}\right)$ for maximizing a monotone submodular function subject to a cardinality constraint.
This result is known to be tight by the work of Nemhauser et al.~\cite{NW78}.
Feige~\cite{F98} proved the latter holds even when the objective function is restricted to being a coverage function.
Calinescu et al.~\cite{CCPV11} presented the continuous greedy algorithm, which enabled one to achieve the same tight $\left( 1-\nicefrac[]{1}{e}\right)$ guarantee for the more general matroid constraint.

However, when one considers submodular objectives which are not monotone, less is known.
An approximation of $0.309$ was given by Vondr\'{a}k~\cite{V13} for the general matroid independence constraint, which was later improved
to $0.325$ by Oveis Gharan and Vondr\'{a}k~\cite{GV11} using a simulated annealing technique.
Extending the continuous greedy algorithm of~\cite{CCPV11} to general non-negative submodular objectives, Feldman et al.~\cite{FNS11} obtained an improved approximation of $\nicefrac[]{1}{e}-o(1)$ for the same problem.

When considering the special case of a cardinality constraint and a submodular objective which is not necessarily monotone, Buchbinder et al.~\cite{BFNS14} presented a $\nicefrac[]{1}{e}$-approximation algorithm, called ``random greedy'' whose running time is as fast as the discrete greedy algorithm of Nemhauser et al.~\cite{NWF78}.
Furthermore, \cite{BFNS14} also described a slower polynomial time $\left(\nicefrac[]{1}{e} + 0.004\right)$-approximation algorithm, demonstrating that $\nicefrac[]{1}{e}$ is not the right approximation ratio for the problem. On the hardness side, it is known that no polynomial time algorithm can have an approximation ratio better than $0.491$~\cite{GV11}.

\paragraph{Paper Organization.}
Section~\ref{sec:preliminaries} gives general preliminaries. Section~\ref{sec:matroid} describes our results for general and partition matroids (Theorems~\ref{th:general_matroid} and~\ref{th:partition_matroid}). Finally, Sections~\ref{sec:random_sampling} and~\ref{sec:lazy_greedy} prove Theorem~\ref{th:non_monotone_cardinality}. The proof of the folklore result given by Theorem~\ref{th:monotone_cardinality} can be found in Appendix~\ref{app:monotone_cardinality}.
%\inshort{
%In the next sections we prove our main result (Theorem~\ref{th:general_matroid}). The proofs of the other results are omitted from this extended abstract. Section~\ref{sec:preliminaries} gives general preliminaries. Section~\ref{sec:intuition} describes out techniques and the intuition behind our algorithm. Finally, Sections~\ref{sec:main} and~\ref{ssc:random_greedy} give our algorithm and analyze it.
%}

\section{Preliminaries} \label{sec:preliminaries}

Given a non-negative submodular function $f : 2^\NN \to \mathbb{R}^+$, a set $S \subseteq \NN$ and an element $u \in \NN$, we denote by $f(u \mid S) = f(S \cup \{u\}) - f(S)$ the marginal contribution of $u$ to $S$. The following similar lemmata of~\cite{FMV11} and~\cite{BFNS14} are used in many of our proofs.
%\inshort{Finally, we also need the following lemma of~\cite{FMV11}.}

\begin{lemma}[Lemma~2.2 of~\cite{FMV11}] \label{le:distribution_reverse}
Let $f : 2^\NN \to \mathbb{R}$ be submodular. Denote by $A(p)$ a random subset of $A$ where each element appears with probability $p$ (not necessarily independently). Then, $\mathbb{E}[f(A(p))] \geq (1 - p) f(\varnothing) + p \cdot f(A)$.
\end{lemma}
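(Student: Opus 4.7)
The plan is to reduce the lemma to a pointwise comparison between $f$ and a carefully chosen modular function on $A$, whose expectation over $A(p)$ uses only the single-element marginal probabilities and is therefore insensitive to how elements are correlated.

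First, I would fix an arbitrary ordering $a_1, \ldots, a_n$ of the elements of $A$ and define $y_i := f(\{a_1, \ldots, a_i\}) - f(\{a_1, \ldots, a_{i-1}\})$ for $i = 1, \ldots, n$, so that by telescoping $\sum_{i=1}^n y_i = f(A) - f(\varnothing)$. The goal is to show that the modular function $S \mapsto f(\varnothing) + \sum_{i : a_i \in S} y_i$ is a pointwise lower bound on $f$ across all $S \subseteq A$.

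Establishing this pointwise inequality is the main obstacle. To prove it, I would write $S = \{a_{j_1}, \ldots, a_{j_k}\}$ with $j_1 < \cdots < j_k$ and telescope along this subchain:
$$f(S) - f(\varnothing) = \sum_{m=1}^k f\bigl(a_{j_m} \mid \{a_{j_1}, \ldots, a_{j_{m-1}}\}\bigr).$$
Because $\{a_{j_1}, \ldots, a_{j_{m-1}}\} \subseteq \{a_1, \ldots, a_{j_m - 1}\}$, submodularity (i.e.\ decreasing marginals) implies that each summand is at least $f(a_{j_m} \mid \{a_1, \ldots, a_{j_m - 1}\}) = y_{j_m}$, and summing over $m$ yields $f(S) - f(\varnothing) \geq \sum_{i : a_i \in S} y_i$. (Readers fluent in polymatroids will recognize $(y_i)$ as the greedy extreme point of the base polytope of the submodular function $g(S) = f(S) - f(\varnothing)$ on $2^A$, but no such machinery is needed for the direct argument above.)

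Once the pointwise bound is in hand, the lemma follows in one line from linearity of expectation, since only the single-element probabilities $\Pr[a_i \in A(p)] = p$ enter the computation:
$$\mathbb{E}[f(A(p))] \geq f(\varnothing) + \sum_{i=1}^n y_i \cdot \Pr[a_i \in A(p)] = f(\varnothing) + p\,(f(A) - f(\varnothing)) = (1-p)\, f(\varnothing) + p\, f(A).$$
Crucially, the joint distribution of $A(p)$ drops out entirely at this stage, which is precisely why the stated bound holds for every joint distribution consistent with the prescribed marginals.
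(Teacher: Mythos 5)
Your proof is correct. The paper does not reprove this lemma (it only cites it as Lemma~2.2 of \cite{FMV11}), and your argument---linearize $f$ via the greedy chain marginals $y_i$, verify by diminishing returns that the resulting modular function is a pointwise lower bound on $f$ over $2^A$, and then take expectations so that only the single-element probabilities $\Pr[a_i \in A(p)] = p$ matter---is precisely the standard and correct proof of this fact.
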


\begin{lemma}[Lemma~2.2 of~\cite{BFNS14}] \label{le:distribution}
Let $f : 2^\NN \to \mathbb{R}^+$ be non-negative and submodular. Denote by $A(p)$ a random subset of $A$ where each element appears with probability at most $p$ (not necessarily independently). Then, $E[f(A(p))] \geq (1 - p) f(\varnothing)$.
\end{lemma}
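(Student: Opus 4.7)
Plan. I will prove the lemma through the Lov\'{a}sz extension $\hat f : [0,1]^A \to \mathbb{R}$, defined by $\hat f(\vec y) = \mathbb{E}_{\theta \sim U[0,1]}\bigl[f(\{u \in A : y_u \geq \theta\})\bigr]$. Two classical consequences of submodularity will be invoked: $\hat f$ is convex, and $\hat f(\mathbf{1}_S) = f(S)$ for every $S \subseteq A$.

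First, set $q_u = \Pr[u \in A(p)]$ so that $q_u \leq p$ by hypothesis, and write $\vec q = (q_u)_{u \in A}$; note $\mathbb{E}[\mathbf{1}_{A(p)}] = \vec q$. Jensen's inequality applied to the convex function $\hat f$ gives
\[\mathbb{E}[f(A(p))] \;=\; \mathbb{E}\bigl[\hat f(\mathbf{1}_{A(p)})\bigr] \;\geq\; \hat f(\vec q).\]

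Next, I use the defining integral
\[\hat f(\vec q) \;=\; \int_0^1 f\bigl(\{u \in A : q_u \geq \theta\}\bigr)\,d\theta.\]
Since every $q_u \leq p$, the level set is empty whenever $\theta > p$, so the integral over $(p, 1]$ equals exactly $(1 - p)\, f(\varnothing)$. Over $[0, p]$ the integrand is non-negative because $f \geq 0$, so that portion contributes non-negatively. Adding the two pieces yields $\hat f(\vec q) \geq (1-p)\, f(\varnothing)$, which combined with the Jensen step is precisely the claim.

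The main obstacle is that this argument leans on the convexity of $\hat f$ and its integral representation---classical facts that are not otherwise developed in the paper and must be cited or briefly sketched. I also considered a more elementary reduction to Lemma~\ref{le:distribution_reverse}, namely coupling $A(p)$ with a random set $A'$ whose marginals equal $p$ exactly (for instance by independently adding each $u \notin A(p)$ with probability $(p - q_u)/(1 - q_u)$). Although Lemma~\ref{le:distribution_reverse} would then give $\mathbb{E}[f(A')] \geq (1-p)\, f(\varnothing) + p\, f(A) \geq (1-p)\, f(\varnothing)$, the required domination $\mathbb{E}[f(A(p))] \geq \mathbb{E}[f(A')]$ is false in general without monotonicity, so this reduction stalls. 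A direct induction on $|\{u : q_u > 0\}|$ is similarly blocked, since conditioning on the inclusion or exclusion of a single element can push the marginals of the remaining elements above $p$.
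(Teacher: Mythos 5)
Your proof is correct, and it is a clean, self-contained argument. The convexity of the Lov\'{a}sz extension $\hat f$ for submodular $f$, the endpoint identity $\hat f(\mathbf 1_S)=f(S)$, and the threshold-set integral representation are all classical facts that can safely be cited, and the Jensen step together with the split of $\int_0^1$ at $\theta=p$ then delivers the bound directly. Note that the paper itself does not prove this lemma --- it cites it as Lemma~2.2 of~\cite{BFNS14} --- so there is no in-text argument to match; the proof in~\cite{BFNS14} is the discrete analogue of your integral splitting: sort $A=\{a_1,\dots,a_m\}$ so that $q_1\ge\cdots\ge q_m$, write $f(A(p))\ge f(\varnothing)+\sum_{i\colon a_i\in A(p)}f(a_i\mid A_{i-1})$ by submodularity (with $A_{i-1}=\{a_1,\dots,a_{i-1}\}$), take expectations, and Abel-sum $\sum_i q_i\,f(a_i\mid A_{i-1})$ using $f(A_i)\ge 0$ and $q_1\le p$ to conclude $\mathbb{E}[f(A(p))]\ge f(\varnothing)-q_1 f(\varnothing)\ge(1-p)f(\varnothing)$. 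That version is entirely elementary and avoids invoking the extension's convexity; yours is shorter once the extension's standard properties are taken for granted, and the two are really the same rearrangement in continuous versus discrete form. Your side remark about the coupling reduction is also correct: with $A(p)\subseteq A'$, the comparison $\mathbb{E}[f(A(p))]\ge\mathbb{E}[f(A')]$ has no reason to hold for a general submodular $f$ (indeed, for monotone $f$ the inequality points the wrong way), which is precisely why~\cite{BFNS14} state the ``at most $p$'' version separately rather than deriving it from Lemma~\ref{le:distribution_reverse}.
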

\section{General Matroid Constraint} \label{sec:matroid}

In this section we describe algorithms for the problem $\max \{f(S) : S \in \II\}$, where $f : 2^\NN \to \mathbb{R}^+$ is a non-negative monotone submodular function and $M = (\NN, \II)$ is a matroid. Throughout the section we use $k$ to denote the rank of $M$ and assume $f(u) \leq f(OPT)$ for every $u \in \NN$. The last assumption can be justified by observing that every element having $f(u) > f(OPT)$ must be a self-loop, and thus, all such elements can be removed from $\MM$ in linear time.

\subsection{Problem Specific Preliminaries}

Given a non-negative submodular function $f : 2^\NN \to \mathbb{R}^+$, its \emph{multilinear} extension is a function $F : [0, 1]^\NN \to \mathbb{R}^+$ defined by $F(x) = \mathbb{E}[f(\RSet(x))]$, where $\RSet(x)$ is a random set containing every element $u \in \NN$ with probability $x_u$, independently. We denote by $\partial_u F(x)$ the derivative of $F$ at point $x$ with respect to the coordinate corresponding to $u$. The multilinear extension has been extensively used for maximizing submodular functions subject to a matroid constrained, starting with the continuous greedy algorithm of Calinescu et al.~\cite{CCPV11}. All algorithms based on the multilinear extension approximate its value at various points using sampling. Unfortunately, this sampling is often responsible for the quite poor time complexity of these algorithms. For this reason, we use a more cautious approach to sampling in this work.

Let $b : \NN \rightarrow \mathbb{R}^+$ be a non-negative function such that $\sum_{u \in S} b(u) \leq f(OPT)$ for every independent set $S \in \II$. Let $m_b$ be a number of samples which is sufficent to approximate $\partial_u F(x)$ up to a multiplicative error of $\delta$ and an additive error of $\delta \cdot b(u)$ with high probability\footnote{By ``high probability'' we mean that the complementary event occurs with a polynomially small probability in $n$.}, for every given choice of $u \in \NN$ and $x \in [0, 1]^\NN$. Badanidiyuru and Vondr\'{a}k~\cite{BV14} describe a version of the continuous greedy algorithm which, together with swap rounding~\cite{CVZ10}, provides an approximation ratio of $(1 - e^{-1} - \delta)$ using $O(m_b n\delta^{-2} \ln(\frac{n}{\delta}))$ value oracle queries and $O(n\delta^{-2} \ln(\frac{n}{\delta}) + k^2\delta^{-1})$ independence oracle queries. Badanidiyuru and Vondr\'{a}k~\cite{BV14} assume $b(u) = f(OPT) / k$ for every $u \in \NN$,\footnote{In fact, \cite{BV14} proves explicitly only this case, but the proof can be easily extended to every function $b$ obeying the condition defined above.} and bound $m_b$ using the following lemma.

\begin{lemma}[Lemma~2.3 of \cite{BV14}]
Let $X_1, X_2, \ldots, X_m$ be independent random variables such that for each $1 \leq i \leq m$, $X_i \in [0, 1]$. Let $X = \frac{1}{m} \cdot \sum_{i = 1}^m X_i$ and $\mu = E[X]$. Then
\begin{align*}
	&
	Pr[X > (1 + \alpha)\mu + \beta] \leq e^{-m\alpha\beta/3}
	\enspace,\\
	&
	Pr[X < (1 - \alpha)\mu - \beta] \leq e^{-m\alpha\beta/2}
	\enspace.
\end{align*}
\end{lemma}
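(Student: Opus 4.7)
The statement is a standard Chernoff-type concentration inequality, except that the right-hand side of the event mixes a multiplicative deviation $\alpha\mu$ with an additive deviation $\beta$, and the exponent features the product $\alpha\beta$ instead of the more familiar $\alpha^2\mu$ or $\beta^2$. My plan is to derive it from the moment generating function (MGF) method and then invoke AM-GM to convert the standard Chernoff exponent into the hybrid form stated here.

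For the upper tail, I would start as usual: for any $\lambda > 0$, by Markov's inequality applied to $e^{\lambda \sum_i X_i}$ together with independence,
\[
	\Pr[X > (1+\alpha)\mu + \beta] \;\leq\; e^{-\lambda m((1+\alpha)\mu + \beta)} \prod_{i=1}^m \mathbb{E}[e^{\lambda X_i}].
\]
Since each $X_i \in [0,1]$, convexity gives $e^{\lambda X_i} \leq 1 + X_i(e^\lambda - 1)$, so $\mathbb{E}[e^{\lambda X_i}] \leq \exp(\mu_i(e^\lambda - 1))$, where $\mu_i = \mathbb{E}[X_i]$. Summing and choosing $\lambda = \ln(1 + \alpha + \beta/\mu)$ (assuming $\mu > 0$; the case $\mu = 0$ is trivial because then $X \equiv 0$), standard algebra yields the classical one-sided multiplicative bound
\[
	\Pr[X > (1+\gamma)\mu] \;\leq\; \left[\frac{e^{\gamma}}{(1+\gamma)^{1+\gamma}}\right]^{m\mu}
	\qquad \text{with } \gamma \;=\; \alpha + \beta/\mu.
\]

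The hybrid exponent now drops out from the identity $\gamma\mu = \alpha\mu + \beta$ together with a case analysis on $\gamma$. When $\gamma \leq 1$, the refined bound $[e^\gamma/(1+\gamma)^{1+\gamma}]^{m\mu} \leq e^{-m\gamma^2\mu/3}$ applies, and AM-GM gives $\gamma^2\mu = (\alpha + \beta/\mu)^2\mu \geq 4\alpha\beta$, which already suffices to get $e^{-m\alpha\beta/3}$ (with room to spare). When $\gamma \geq 1$, one uses instead the tail form $e^{-m\gamma\mu/3}$ and observes $\gamma\mu = \alpha\mu + \beta \geq \max\{\alpha\mu, \beta\}$; combined with the fact that we may assume $(1+\alpha)\mu + \beta \leq 1$ (otherwise the event has zero probability since $X \leq 1$), this forces $\alpha\beta \leq \gamma\mu$, again yielding the desired bound. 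The lower-tail inequality is proved by the same MGF strategy applied to $-X_i$; the improved constant $1/2$ (versus $1/3$) comes from the tighter one-sided bound $[e^{-\gamma}/(1-\gamma)^{1-\gamma}]^{m\mu} \leq e^{-m\gamma^2\mu/2}$ that is available for $\gamma \in [0,1]$ on the lower tail.

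I expect the main nuisance to be the case-analysis bookkeeping: getting the constants $1/3$ and $1/2$ to come out correctly while handling both small and large $\gamma$, and handling the degenerate regimes $\mu = 0$ and $(1+\alpha)\mu + \beta > 1$. None of these introduce a genuine difficulty — the inequality is essentially a restatement of standard Chernoff, with the product $\alpha\beta$ appearing in place of $\gamma^2\mu$ via the elementary bound $(\alpha + \beta/\mu)^2\mu \geq 4\alpha\beta$ — but writing the whole thing out cleanly requires patience rather than new ideas.
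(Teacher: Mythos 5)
First, a point of reference: the paper does not prove this lemma at all --- it is quoted verbatim from~\cite{BV14} and used only with $\alpha=\beta=\delta\le 1$ --- so your attempt has to stand on its own. Most of it does: the reduction to the one-sided multiplicative Chernoff bound with $\gamma=\alpha+\beta/\mu$, the $\gamma\le 1$ branch via $e^{-\gamma^2 m\mu/3}$ and AM--GM ($\gamma^2\mu\ge 4\alpha\beta$), the lower tail via $e^{-\gamma^2 m\mu/2}$ (where the event is nontrivial only when $\gamma\le 1$), and the degenerate cases are all sound.

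The genuine gap is in your $\gamma\ge 1$ branch of the upper tail. You claim that the assumption $(1+\alpha)\mu+\beta\le 1$ ``forces $\alpha\beta\le\gamma\mu$.'' That implication is false: take $\mu=10^{-3}$, $\alpha=100$, $\beta=\tfrac12$; then $(1+\alpha)\mu+\beta=0.601\le 1$, yet $\gamma\mu=\alpha\mu+\beta=0.6$ while $\alpha\beta=50$, so the bound $e^{-m\gamma\mu/3}$ you obtain is nowhere near $e^{-m\alpha\beta/3}$. Moreover, this case cannot be repaired, because the lemma as literally stated fails there: for i.i.d.\ Bernoulli($10^{-3}$) variables, $\Pr[X>0.601]$ is of order $e^{-m\,D(0.601\Vert 0.001)}\approx e^{-3.5m}$ up to subexponential factors, which for large $m$ exceeds the claimed bound $e^{-m\alpha\beta/3}=e^{-50m/3}$. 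The statement implicitly requires $\alpha\le 1$ (as in every use in this paper and in~\cite{BV14}), and once you assume that, your problematic case becomes a one-liner that needs no appeal to the threshold being at most $1$: $\alpha\beta\le\beta\le\alpha\mu+\beta=\gamma\mu$. Alternatively, you can avoid the case split on $\gamma$ entirely, as \cite{BV14} do, by using the form $\Pr[X>(1+\gamma)\mu]\le e^{-\gamma^2 m\mu/(2+\gamma)}$ and observing that $\frac{\gamma^2\mu}{2+\gamma}=(\gamma\mu)\cdot\frac{\gamma}{2+\gamma}\ge\beta\cdot\frac{\alpha}{2+\alpha}\ge\frac{\alpha\beta}{3}$ whenever $\alpha\le 1$; the lower tail likewise follows from $\frac{\gamma^2\mu}{2}\ge\frac{\gamma\beta}{2}\ge\frac{\alpha\beta}{2}$ with no extra case analysis. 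So: correct skeleton, but the large-$\gamma$ case as written is wrong, and the honest fix is to state and use the restriction $\alpha\le 1$.
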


The assumption $f(u) \leq f(OPT)$ for every $u \in \NN$ allowed \cite{BV14} to prove, using the above lemma, that $m_b$ can be set to $k \ln n / \delta^2$. Assume now $\sum_{u \in S} f(u) \leq c \cdot f(OPT)$ for every independent set $S$ and some value $c$, and let us define $b(u) = f(u) / c$. Clearly, $b$ obeys the required condition. Moreover, the above lemma can now be used to show that $m_b$ can be set to $c \ln n / \delta^2$. Thus, we get the following corollary.

\begin{corollary} \label{co:continuous_greedy_guarantee}
If $\max_{S \in \II} \sum_{u \in S} f(u) \leq c \cdot f(OPT)$ for some value $c$, then for every $\delta > 0$ there exists a $(1 - e^{-1} - \delta)$-approximation algorithm for $\max \{f(S) \mid S \in \II\}$ using $O(cn\delta^{-4} \ln^2(\frac{n}{\delta}))$ value oracle queries and $O(n\delta^{-2} \ln(\frac{n}{\delta}) + k^2\delta^{-1})$ independence oracle queries.
\end{corollary}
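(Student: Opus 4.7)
The plan is to directly instantiate the generic continuous-greedy framework described in the paragraph preceding the corollary, with the choice $b(u) = f(u)/c$. First I would verify that this $b$ satisfies the defining property $\sum_{u \in S} b(u) \leq f(OPT)$ for every $S \in \II$, which is immediate from the hypothesis $\max_{S \in \II} \sum_{u \in S} f(u) \leq c \cdot f(OPT)$ after dividing by $c$.

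The main step is to pin down the correct value of the sample count $m_b$. I would write $\partial_u F(x) = \mathbb{E}[f(u \mid \RSet(x) \setminus \{u\})]$ and, since $f$ is monotone and non-negative, each sample $f(u \mid \RSet_i(x) \setminus \{u\})$ lies in $[0, f(u)]$. Dividing by $f(u)$ puts the samples into $[0,1]$, so the Chernoff-style lemma recalled right before the corollary applies. I would choose $\alpha = \delta$ and $\beta = \delta / c$, so that after scaling back by $f(u)$ the additive error becomes $\delta \cdot f(u)/c = \delta \cdot b(u)$, exactly as the framework demands. Both tail bounds then take the form $\exp(-m \delta^2 / (3c))$, and choosing $m = \Theta(c \delta^{-2} \ln n)$ drives this below any fixed inverse polynomial in $n$, which is enough after a union bound over the polynomially many partial-derivative estimates made during the algorithm's execution. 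Hence $m_b = O(c \delta^{-2} \ln n)$.

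Substituting this $m_b$ into the quoted value oracle complexity of the Badanidiyuru--Vondr\'{a}k framework gives $O(m_b \cdot n \delta^{-2} \ln(n/\delta)) = O(c n \delta^{-4} \ln^2(n/\delta))$, while the independence oracle count $O(n\delta^{-2}\ln(n/\delta) + k^2 \delta^{-1})$ does not depend on $m_b$ and is inherited unchanged; the approximation ratio $1 - e^{-1} - \delta$ likewise carries over directly. There is no real obstacle here: the one point that requires care is to remember the normalization by $f(u)$ when invoking the concentration lemma, so that the additive slack scales with $b(u) = f(u)/c$ rather than with $f(OPT)/k$, which is precisely what decouples the sampling cost from the matroid rank $k$ and replaces it with the potentially much smaller parameter $c$.
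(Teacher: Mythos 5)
Your proposal is correct and follows essentially the same route as the paper: it sets $b(u) = f(u)/c$, notes the required condition $\sum_{u \in S} b(u) \leq f(OPT)$ follows from the hypothesis, uses the quoted Chernoff-type lemma (with the samples normalized into $[0,1]$) to conclude $m_b = O(c\delta^{-2}\ln n)$, and plugs this into the Badanidiyuru--Vondr\'{a}k continuous greedy framework to obtain the stated query bounds. The only difference is that you spell out the normalization and the choice $\alpha = \delta$, $\beta = \delta/c$, which the paper leaves implicit.
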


Some of the algorithms we describe need access to a quick constant approximation of $f(OPT)$. The following lemma provides such an approximation.
\begin{lemma} \label{le:crude_approximation}
There exists a $(1/3)$-approximation algorithm for $\max \{f(S) \mid S \in \II\}$ using $O(n \ln k)$ value and independence oracle queries.
\end{lemma}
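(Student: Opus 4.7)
My plan is to run a decreasing-threshold greedy (in the style of Badanidiyuru-Vondr\'{a}k) with a constant geometric ratio $1-\alpha$ for a small constant $\alpha$ (say $\alpha = 1/5$). First, I query every singleton to compute $M = \max_{u \in \NN} f(u)$ using $n$ value queries; by monotonicity $M \leq f(OPT)$ and by subadditivity $f(OPT) \leq kM$. Then I iterate through thresholds $\tau_t = M(1-\alpha)^t$ for $t = 0, 1, \dots, T$, stopping once $\tau_T \leq \alpha M/k$, which is $T = O(\log k)$ thresholds. Each round does one scan of $\NN$: for every $u$, add it to $S$ whenever $S + u \in \II$ (one independence query) and $f(u \mid S) \geq \tau_t$ (one value query). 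The total number of queries of each type is $O(n \log k)$.

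For the approximation guarantee, I extend $S$ (only in the analysis) to a base $\hat{S}$ by greedily inserting elements of $OPT \setminus S$ that preserve independence. Each inserted element $a$ has $S + a \in \II$ at termination, so the stopping condition forces $f(a \mid S) < \tau_T \leq \alpha M/k$; by submodularity the total inflation is $f(\hat{S}) - f(S) \leq (k - |S|) \cdot \alpha M/k \leq \alpha f(OPT)$. Since the inserted elements are all in $OPT$, we have $\hat{S} \setminus OPT = S \setminus OPT$, so Brualdi's basis-exchange theorem supplies a bijection $\pi : OPT \setminus \hat{S} \to S \setminus OPT$ satisfying $(\hat{S} - \pi(o)) + o \in \II$ for every $o$.

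The central estimate is that $f(o \mid \hat{S}) < \tau_s/(1-\alpha)$, where $s = \pi(o)$ and $\tau_s$ is the threshold at which the algorithm inserted $s$. To prove it, let $S^{-}$ be $S$ immediately before $s$ was added (so $S^{-} \subseteq \hat{S} - s$), and let $S^{\mathrm{prev}} \subseteq S^{-}$ be the state of $S$ when $o$ was scanned during the previous pass, at threshold $\tau_s/(1-\alpha)$. The exchange property combined with $S^{\mathrm{prev}} \subseteq \hat{S} - s$ forces $S^{\mathrm{prev}} + o \in \II$, so $o$ was rejected only because $f(o \mid S^{\mathrm{prev}}) < \tau_s/(1-\alpha)$, and submodularity propagates this to $f(o \mid \hat{S})$. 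Summing over $o$ and using $\tau_s \leq f(s \mid S^{-})$ gives $\sum_{o \in OPT \setminus \hat{S}} f(o \mid \hat{S}) \leq f(S)/(1-\alpha)$. Therefore $f(OPT) \leq f(\hat{S}) + f(S)/(1-\alpha) \leq f(S)(2-\alpha)/(1-\alpha) + \alpha f(OPT)$, which rearranges to $f(S) \geq (1-\alpha)^2/(2-\alpha) \cdot f(OPT) \geq f(OPT)/3$ whenever $\alpha \leq (5-\sqrt{13})/6 \approx 0.232$.

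The main obstacle I anticipate is the bookkeeping around the $|S| < k$ case: Brualdi's exchange demands equal-sized bases, so the analysis-only extension step is essential, and one must check that inserting only elements of $OPT$ both keeps $f(\hat{S}) - f(S)$ bounded by $\alpha f(OPT)$ and preserves the identity $\hat{S} \setminus OPT = S \setminus OPT$, so that every image of $\pi$ is an element the algorithm actually added and therefore carries an associated threshold $\tau_s$ that we can charge against. Once these invariants are in place, the remainder is a routine telescoping argument.
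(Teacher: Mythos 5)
Your proof is correct, but it follows a genuinely different decomposition than the paper's. The paper analyzes the threshold greedy via an \emph{incremental} potential: it defines $OPT_i$ as the best independent set containing the current solution $S_i$, shows $(1-\ee)\bigl[f(OPT_{i-1}) - f(OPT_i)\bigr] \leq f(S_i) - f(S_{i-1})$ by removing one element from the unique circuit $OPT_{i-1} \cup \{u_i\}$, telescopes, and then bounds $f(OPT_\ell) \leq f(S_\ell) + \ee f(OPT)$. Your argument is a \emph{global} charging scheme: you extend $S$ to a base $\hat{S}$ inside $S \cup OPT$, invoke Brualdi's basis-exchange theorem once to obtain a bijection $\pi : OPT \setminus \hat{S} \to S \setminus OPT$, and charge each missed optimal element $o$ against the threshold at which $\pi(o)$ entered $S$ (using $S^{\mathrm{prev}} + o \subseteq (\hat{S}-\pi(o))+o \in \II$ and downward closure, exactly as the paper does locally). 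Interestingly both routes produce the identical final inequality $f(S) \geq \tfrac{(1-\alpha)^2}{2-\alpha} f(OPT)$. The paper's route is self-contained and avoids the base-extension bookkeeping and the explicit appeal to Brualdi (it only needs the unique-circuit property), and it automatically handles the case $|OPT| < k$ since $OPT_i$ is defined intrinsically; your route needs $OPT$ to be a base (justified by monotonicity), is slightly heavier on matroid machinery, but replaces the recursive definition of $OPT_i$ with a single clean pairing, which some may find more transparent. Two small points worth tightening in a final write-up: when $\pi(o)$ was admitted at the very first threshold $\tau_0 = M$ there is no ``previous pass,'' but the bound still holds since $f(o \mid \hat{S}) \leq f(o) \leq M < \tau_0/(1-\alpha)$; and the exact constant in the stopping condition gives $f(a \mid S) < \tau_{\text{last}} \leq \alpha M / (k(1-\alpha))$ rather than $\alpha M/k$ unless one scans one extra round, which only nudges the admissible range of $\alpha$ and does not affect the conclusion (the paper's Lemma~\ref{le:final_opt_final_solution} has the same slack).
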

The algorithm described by the above lemma is strongly based on the thresholding algorithm of~\cite{BV14}, and thus, we defer the proof of the lemma to Appendix~\ref{app:crude_approximation}. %\inshort{omit its proof from this extended abstract}.

\subsection{Intuition and Techniques} \label{sec:intuition}

Corollary~\ref{co:continuous_greedy_guarantee} tells us that there exists a fast algorithm for the problem $\{f(S) \mid S \in \II\}$ when $\max_{S \in \II} \sum_{u \in S} f(u)$ is not much larger than $f(OPT)$. Thus, we need to show how to deal with the case of large $\max_{S \in \II} \sum_{u \in S} f(u)$. One interesting candidate algorithm for this case is the residual random greedy algorithm suggests by~\cite{BFNS14}. This algorithm works in $k$ iterations. In each iteration, given that $S$ is the current solution of the algorithm, it finds finds a set $S'$ maximizing $\{f(S') \mid S \cup S' \in \II\}$. Then, it selects a random element $u \in S'$, and adds it to its solution $S$.

Buchbinder et al.~\cite{BFNS14} only managed to show that their residual random greedy is a $1/4$-approximation algorithm. However, it is not difficult to check that this algorithm behaves much better as long as \inConference{the value }$\max_{S \cup S' \in \II} \sum_{u \in S'} f(u \mid S)$ is large. More specifically, the expected increase in the value of $S$ is large compared to the expected decrease in the value of the best independent set containing $S$. This suggests the following natural approach. Execute the residual random greedy as long as $\max_{S \cup S' \in \II} \sum_{u \in S'} f(u \mid S)$ is large. Once $\max_{S \cup S' \in \II} \sum_{u \in S'} f(u \mid S)$ becomes small, apply the measured continuous greedy of~\cite{BV14} to the residual problem.

For technical reasons, we also use the observation that the solution produced by the residual random greedy tends to be very small because the value of $S$ increases fast, in expectation. This observation allows us to ignore (``fail'') cases in which the goal of small $\max_{S \cup S' \in \II} \sum_{u \in S'} f(u \mid S)$ is not obtained quickly enough.

\subsection{Main Algorithm} \label{sec:main}

In this section we explain how to combine our variant of the residual random greedy and the measured continous greedy of~\cite{BV14} into an algorithm for $\max\{f(S) \mid S \in \II\}$ having all the properties guaranteed by Theorem~\ref{th:general_matroid}. The following lemma states the properties of our variant of the residual continous greedy that we need. In Section~\ref{ssc:random_greedy}, we describe this variant (which appears as Algorithm~\ref{alg:RandomLazyGreedy1}) and prove Lemma~\ref{le:random_gredy_properties}. In the following, we use $\MM / S$ to denote the matroid obtained from $\MM$ by contracting a set $S \subseteq \NN$.

\begin{lemma} \label{le:random_gredy_properties}
There exists an algorithm that given a non-negative monotone submodular function $f : 2^\NN \to \mathbb{R}^+$, a matroid $\MM = (\NN, \II)$ of rank $k$ and three parameters $\delta \in (0, 1)$, $B \geq 0$ and an integer $0 \leq I \leq k/2$, has the following properties.
\inArxiv{\begin{compactenum}[(i)]}
\inConference{\begin{compactenum}}
	\item The algorithm uses $O(In \delta^{-1} \ln (k / \delta))$ independence oracle queries and $O(Ik + n \delta^{-1} \ln (k / \delta))$ value oracle queries.
	\item The algorithm declares failure with probability at most $k(BI)^{-1}$.
	\item If the algorithm does not fail, it outputs a set $S$ obeying: \label{item:if_succeed}
	\begin{compactitem}
		\item For every independent set $S'$ of\inConference{ the matroid} $\MM / S$, $\sum_{u \in S'} f(u \mid S) \leq (1 - \delta)^{-2}(3B + \delta) \cdot f(OPT)$.
		\item Let $OPT'$ be an independent set of $\MM / S$ maximizing $f(OPT')$. Then, $\mathbb{E}[f(OPT')] \geq [1 - B^{-1}(2 + k/I)] \cdot f(OPT)$, where the expectation is conditioned on the event that the algorithm does not fail.
	\end{compactitem}
\end{compactenum}
\end{lemma}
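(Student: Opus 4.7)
The plan is to analyze Algorithm~\ref{alg:RandomLazyGreedy1}, a residual random greedy procedure running for at most $I$ iterations that is equipped with a thresholding subroutine in the spirit of~\cite{BV14}. At each iteration $i$, the subroutine approximately identifies an independent set $S^*_i \subseteq \NN \setminus S_{i-1}$ with $S_{i-1} \cup S^*_i \in \II$ that maximizes $\sum_{u \in S^*_i} f(u \mid S_{i-1})$. The algorithm terminates with success as soon as this approximate maximum drops below roughly $3B \cdot f(OPT)$; otherwise it adds a uniformly random element of $S^*_i$ to its running solution $S_i$ and continues. If all $I$ iterations pass without success, it declares failure.

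For the query complexity bound in part (i), I would argue by amortization across iterations: the thresholding subroutine descends through a geometric grid of $O(\delta^{-1}\ln(k/\delta))$ thresholds starting from $\max_u f(u)$, and by reusing marginal contributions between iterations (as in~\cite{BV14}) its cumulative cost is $O(n\delta^{-1}\ln(k/\delta))$ value queries plus $O(In\delta^{-1}\ln(k/\delta))$ independence queries (one per candidate-threshold-iteration triple, up to amortization). On top of this, each iteration spends $O(k)$ additional value queries to update the marginal contributions along the just-chosen element, contributing the $O(Ik)$ term.

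For the failure probability (part (ii)), I would show $\mathbb{E}[T] = O(k/B)$, where $T$ is the first iteration at which the termination test succeeds. Conditional on $T \geq i$, the subroutine's guarantee together with $|S^*_i| \leq k$ imply $\mathbb{E}[f(u_i \mid S_{i-1}) \mid T \geq i] \geq 3B f(OPT)/k$; summing and using $\mathbb{E}[f(S_T)] \leq f(OPT)$ (which holds since $S_T \in \II$ and $OPT$ is the maximum-value independent set) yields the expected-termination-time bound, and Markov's inequality then gives $\mathbb{P}(\text{fail}) \leq k/(BI)$. The first condition of (iii) is enforced by the termination criterion itself, with the $(1-\delta)^{-2}$ factor and the additive $\delta$ inside $(3B+\delta)$ absorbing the thresholding subroutine's multiplicative and additive approximation errors. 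For the second condition, a matroid exchange argument (as in~\cite{BFNS14}) bounds $\mathbb{E}[f(OPT_{i-1}) - f(OPT_i) \mid T \geq i]$ by roughly $f(OPT)/(k-i+1) \leq 2f(OPT)/k$ (using $I \leq k/2$), yielding $\mathbb{E}[f(OPT_T)] \geq (1 - 2/B) f(OPT)$ unconditionally. To pass to the conditional version, I would invoke $f(OPT_T) \leq f(OPT)$ (by monotonicity, since $OPT_T \cup S_T \in \II$) together with $\mathbb{P}(\text{fail}) \leq k/(BI)$ to deduce $\mathbb{E}[f(OPT_T) \cdot \mathbf{1}\{\text{succ}\}] \geq [1 - 2/B - k/(BI)] \cdot f(OPT)$; dividing by $\mathbb{P}(\text{succ}) \leq 1$ delivers the stated conditional bound.

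The main obstacle I expect is propagating the thresholding subroutine's approximation errors consistently through all three targets of the analysis: the termination test, the per-iteration progress bound driving the failure probability, and the matroid-exchange-based decrease bound on $f(OPT_i)$. The cleanest resolution is to absorb these errors into the multiplicative $(1-\delta)^{-2}$ and additive $\delta$ slack already present in part (iii), which is why the statement uses the looser constants $(1-\delta)^{-2}(3B+\delta)$ rather than simply $3B$.
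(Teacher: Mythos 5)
Your proposal follows essentially the same high-level strategy as the paper: a thresholded residual random greedy that keeps a persistent lazy weight estimate per element, an induction showing the current solution gains $\Omega(B\cdot f(OPT)/k)$ per iteration while running (hence $\mathbb{E}[\text{stopping time}] = O(k/B)$ and Markov gives the failure probability), the termination criterion directly certifying the first bullet of (iii), and a matroid-exchange construction of a shrinking $OPT_i$ losing $O(f(OPT)/k)$ per step (using $I \leq k/2$) followed by passing to the conditional expectation via $\Pr[\text{succ}] \leq 1$. This is the paper's proof modulo notation.

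One imprecision worth flagging: the algorithm cannot threshold against $f(OPT)$ directly, since that value is unknown. The paper first computes a crude estimate $\opt \in [f(OPT), 3f(OPT)]$ (Lemma~\ref{le:crude_approximation}) and the stopping rule compares the surviving weight against $B \cdot \opt$. The two sides of this approximation are then used asymmetrically: while running, $(1-\delta)\sum_{u\in M_i} w_u \geq B\cdot\opt \geq B\cdot f(OPT)$, so the per-iteration progress is only $\geq B\cdot f(OPT)/k$ (not $3B\cdot f(OPT)/k$ as you write); at termination, $(1-\delta)\sum_{u\in M_i} w_u < B\cdot\opt \leq 3B\cdot f(OPT)$, and this is where the factor $3$ in $(1-\delta)^{-2}(3B+\delta)$ actually originates. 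Your version conflates the two directions, placing the $3$ in the progress bound. This happens not to break the argument, because the stated failure bound is $k/(BI)$ rather than $k/(3BI)$ and thus only needs the weaker $B\cdot f(OPT)/k$ progress, but the bookkeeping should be corrected so that the constant $3$ in the output quality guarantee is accounted for. A second small point: the bijection argument underlying the $OPT_i$ sequence needs $M_i$ (after padding) and $OPT_{i-1}$ to be bases of the residual matroid, which is why the paper pads $M_i$ with dummy elements; this should be made explicit when invoking the exchange lemma.
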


Algorithm~\ref{alg:Final} is our final algorithm for $\max\{f(S) \mid S \in \II\}$. The algorithm gets two parameters: $\ee > 0$ and $\lambda \in [1, k]$. The last parameter controls a tradeoff between the number of value and independence oracle queries used by the algorithm. In the rest of this section we show that Algorithm~\ref{alg:Final} has all the properties required by Theorem~\ref{th:general_matroid}.

\begin{algorithm*}[th!]
\caption{Combined Algorithm($f, \MM, \ee, \lambda$)} \label{alg:Final}
Call the algorithm guaranteed by Lemma~\ref{le:random_gredy_properties} with the parameters $\delta = 1/2$, $B = 20k\lambda^{-1}\ee^{-1}$ and $I = \lceil \lambda / 3 \rceil)$. Let $S$ denote the output set.\\
\If{the algorithm of Lemma~\ref{le:random_gredy_properties} did not declare failure}
{
	Call the continuous greedy algorithm guaranteed by Corollary~\ref{co:continuous_greedy_guarantee} on the matroid $\MM / S$ and the objective $f(\cdot \mid S)$ with the parameters $c = 240k\lambda^{-1}\ee^{-1} + 2$ and $\delta = \ee/4$. Let $S'$ denote the output set.\\
	\Return{$S \cup S'$}. \\
}
\Else
{
	\Return{$\varnothing$}.\\
}
\end{algorithm*}

\noindent \textbf{Remark}: For $k = 1$, the problem $\max\{f(S) \mid S \in \II\}$ can be solved optimally using $O(n)$ oracle queries. Thus, we assume throughout this section $k \geq 2$. Notice that this assumption implies $I \leq k/2$, as required by Lemma~\ref{le:random_gredy_properties}. Additionally, Theorem~\ref{th:general_matroid} is void for $\ee \geq 1 - e^{-1}$, thus, we also assume $\ee \in (0, 1 - e^{-1})$.

We begin the analysis of Algorithm~\ref{alg:Final} by bounding the number of oracle queries it uses.

\begin{observation}
Algorithm~\ref{alg:Final} uses at most $O(k\lambda + kn\lambda^{-1}\ee^{-5} \ln^2(\frac{n}{\ee}))$ value oracle queries and $O(k^2\ee^{-1} + \lambda n\ee^{-2} \ln(\frac{n}{\ee}))$ independence oracle queries.
\end{observation}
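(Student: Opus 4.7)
The plan is to separately account for the oracle cost incurred by each of the two subroutine calls in Algorithm~\ref{alg:Final}, substitute in the parameter choices, and then verify that the minor terms are absorbed into the two leading terms claimed in the statement. No creative step is required; the work is entirely bookkeeping on top of the bounds furnished by Lemma~\ref{le:random_gredy_properties} and Corollary~\ref{co:continuous_greedy_guarantee}.

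First I would plug $\delta = 1/2$, $B = 20k\lambda^{-1}\ee^{-1}$, and $I = \lceil \lambda/3 \rceil = \Theta(\lambda)$ into the bounds of Lemma~\ref{le:random_gredy_properties}. Since $\delta$ is a constant, the $\delta^{-1}$ factor disappears, the $\ln(k/\delta)$ collapses to $O(\ln k)$, and the residual random greedy call costs $O(k\lambda + n \ln k)$ value oracle queries and $O(\lambda n \ln k)$ independence oracle queries.

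Next I would substitute $\delta = \ee/4$ and $c = 240 k \lambda^{-1} \ee^{-1} + 2$ into Corollary~\ref{co:continuous_greedy_guarantee}. Under the standing assumptions $k \geq 2$, $1 \leq \lambda \leq k$, and $\ee < 1$, the additive constant in $c$ is dominated by the first term, so $c = O(k\lambda^{-1}\ee^{-1})$. Substituting, the continuous greedy call costs $O(c n \delta^{-4} \ln^2(n/\delta)) = O(k n \lambda^{-1} \ee^{-5} \ln^2(n/\ee))$ value oracle queries and $O(n \delta^{-2} \ln(n/\delta) + k^2 \delta^{-1}) = O(n \ee^{-2} \ln(n/\ee) + k^2 \ee^{-1})$ independence oracle queries.

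Finally I would add the two contributions and absorb subsumed terms. For the value oracle count, adding gives $O(k\lambda + n \ln k + k n \lambda^{-1} \ee^{-5} \ln^2(n/\ee))$; since $k/\lambda \geq 1$ and $\ee^{-5} \ln^2(n/\ee) \geq \ln k$ on our parameter range, the middle term is absorbed into the third, yielding the claimed $O(k\lambda + k n \lambda^{-1} \ee^{-5} \ln^2(n/\ee))$. For the independence oracle count, adding gives $O(\lambda n \ln k + n \ee^{-2} \ln(n/\ee) + k^2 \ee^{-1})$; the bound $\ln k \leq \ln(n/\ee)$ together with $\ee^{-2} \geq 1$ lets the first term be merged into $O(\lambda n \ee^{-2} \ln(n/\ee))$, giving the claimed $O(k^2 \ee^{-1} + \lambda n \ee^{-2} \ln(n/\ee))$. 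The only step requiring any care is this last absorption, and the verifications there are routine given the standing ranges of $k$, $\lambda$, and $\ee$, so there is no serious obstacle.
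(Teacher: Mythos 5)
Your proposal is correct and follows the same route as the paper: the paper's proof is simply to add the query bounds of Lemma~\ref{le:random_gredy_properties} (with $\delta = 1/2$, $I = \lceil\lambda/3\rceil$) and Corollary~\ref{co:continuous_greedy_guarantee} (with $\delta = \ee/4$, $c = O(k\lambda^{-1}\ee^{-1})$), exactly as you do. Your explicit absorption of the lower-order terms $n\ln k$ and $\lambda n \ln k$ is just the bookkeeping the paper leaves implicit, and it is carried out correctly under the standing ranges $1 \leq \lambda \leq k$ and $\ee < 1$.
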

\begin{proof}
The observation follows by adding up the guarantees on the number of oracle queries given by Corollary~\ref{co:continuous_greedy_guarantee} and Lemma~\ref{le:random_gredy_properties}.
\end{proof}

Next, let us lower bound the approximation ratio of Algorithm~\ref{alg:Final}. Let $G$ be the event that the algorithm guaranteed by Lemma~\ref{le:random_gredy_properties} did not declare failure.
\begin{lemma} \label{le:if_not_failing}
Conditioned on $G$, Algorithm~\ref{alg:Final} is $(1 - e^{-1} - \ee/2)$-competitive.
\end{lemma}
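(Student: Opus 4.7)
The plan is to combine the two guarantees supplied by Lemma~\ref{le:random_gredy_properties} with the approximation obtained by invoking Corollary~\ref{co:continuous_greedy_guarantee} on the residual instance $(g, \MM / S)$, where $g(T) := f(T \mid S) = f(S \cup T) - f(S)$ is the non-negative monotone submodular residual function and $OPT'$ denotes the independent set of $\MM / S$ maximizing $g$ (the same $OPT'$ appearing in Lemma~\ref{le:random_gredy_properties}).

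\emph{Step 1 (matching the precondition).} Substituting $\delta = 1/2$ and $B = 20 k \lambda^{-1} \ee^{-1}$ into the first bullet of Lemma~\ref{le:random_gredy_properties}(iii) yields, conditional on $G$, $\sum_{u \in T} g(u) \leq 4(3B + 1/2) f(OPT) = (12B + 2) f(OPT) = c \cdot f(OPT)$ for every $T \in \II(\MM / S)$, where $c = 240 k \lambda^{-1} \ee^{-1} + 2$ is precisely the parameter Algorithm~\ref{alg:Final} passes to the continuous greedy. Invoking Corollary~\ref{co:continuous_greedy_guarantee} on $(g, \MM / S)$ with $\delta = \ee/4$ therefore returns a set $S'$ satisfying $g(S') \geq (1 - e^{-1} - \ee/4) \, g(OPT')$.

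\emph{Step 2 (unfolding the output value).} Writing $g(OPT') = f(S \cup OPT') - f(S)$ and combining with non-negativity and monotonicity of $f$:
\[
f(S \cup S') = f(S) + g(S') \geq (e^{-1} + \ee/4) f(S) + (1 - e^{-1} - \ee/4) f(S \cup OPT') \geq (1 - e^{-1} - \ee/4) f(OPT').
\]

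\emph{Step 3 (taking expectation and verifying the arithmetic).} Taking expectation conditional on $G$ and applying the second bullet of Lemma~\ref{le:random_gredy_properties}(iii) gives $\mathbb{E}[f(OPT') \mid G] \geq (1 - B^{-1}(2 + k/I)) f(OPT)$. A short calculation with $B = 20 k/(\lambda \ee)$ and $I = \lceil \lambda/3 \rceil \geq \lambda/3$ bounds $B^{-1}(2 + k/I) \leq (\ee \lambda/(20 k))(2 + 3k/\lambda) = \ee \lambda/(10 k) + 3 \ee/20 \leq \ee/4$, using $\lambda \leq k$. Hence
\[
\mathbb{E}[f(S \cup S') \mid G] \geq (1 - e^{-1} - \ee/4)(1 - \ee/4) f(OPT) \geq (1 - e^{-1} - \ee/2) f(OPT),
\]
where the last step uses $(1 - a)(1 - b) \geq 1 - a - b$ for $a, b \geq 0$.

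The main obstacle I foresee is the clean invocation of Corollary~\ref{co:continuous_greedy_guarantee} in Step 1: the corollary as written scales the marginal-sum bound by the residual optimum $g(OPT')$, whereas Lemma~\ref{le:random_gredy_properties} only supplies a bound in terms of $f(OPT)$, and one has $g(OPT') \leq f(OPT)$ rather than the reverse. The resolution is that $f(OPT)$ plays only the role of a scale controlling the cumulative additive sampling error inside the proof of the corollary; substituting it for $g(OPT')$ introduces at most an extra $O(\ee) \cdot f(OPT)$ slack term, which can be absorbed into the overall $\ee/2$ loss budget by slightly tightening the continuous-greedy accuracy parameter (e.g., to $\ee/8$) if the bookkeeping requires it.
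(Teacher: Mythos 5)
Your proof is correct and follows essentially the same route as the paper: verify the precondition of Corollary~\ref{co:continuous_greedy_guarantee} via the first bullet of Lemma~\ref{le:random_gredy_properties} (indeed $(1-\delta)^{-2}(3B+\delta)=12B+2=c$ for $\delta=1/2$), run the continuous greedy with $\delta=\ee/4$ on the residual instance, and combine with $\mathbb{E}[f(OPT')\mid G]\geq(1-\ee/4)\cdot f(OPT)$ from the second bullet, exactly as the paper does (it cancels $f(S)$ instead of dropping it by non-negativity, which is immaterial). The obstacle you flag --- that the marginal-sum bound is relative to $f(OPT)$ rather than the residual optimum $g(OPT')$ --- is present in the paper's own invocation of the corollary as well, and, as you observe, the resulting additive slack of $O(\ee)\cdot f(OPT)$ is absorbed into the $\ee/2$ budget, so your argument needs no repair.
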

\begin{proof}
By Lemma~\ref{le:random_gredy_properties}, conditioned on $G$, there exists a random set $OPT'$ (depending on $S$ only) which is always independent in $\MM / S$ and obeys:
\begin{align*}
	\inConference{&}
	\mathbb{E}[f(OPT') \mid G]
	\geq\inArxiv{{} &}
	\left(1 - \frac{2 + k/I}{B} \right) \cdot f(OPT)\\
	\geq{} &
	\left(1 - \frac{2 + 3k\lambda^{-1}}{20k\lambda^{-1}\ee^{-1}} \right) \cdot f(OPT)
	\geq
	\left(1 - \frac{\ee}{4} \right) \cdot f(OPT)
	\enspace.
\end{align*}

Moreover, Lemma~\ref{le:random_gredy_properties} also guarantees that for every independent set $S'$ of $\MM / S$:
\begin{align*}
	\sum_{u \in S'} f(u \mid& S)
	\leq
	4(3B + 0.5) \cdot f(OPT)\inConference{\\}
	=\inConference{{} &}
	(240k\lambda^{-1}\ee^{-1} + 2) \cdot f(OPT)
	=
	c \cdot f(OPT)
	\enspace.
\end{align*}
Hence, by Corollary~\ref{co:continuous_greedy_guarantee}, given a set $S$, the expected quality of the set produced by the continuous greedy algorithm is at least:
\begin{align*}
	\mathbb{E}[f(S' \mid S)]
	\geq{} &
	\left(1 - \frac{1}{e} - \frac{\ee}{4}\right) \cdot f(OPT' \mid S)\inConference{\\}
	\geq\inConference{{} &}
	\left(1 - \frac{1}{e} - \frac{\ee}{4}\right) \cdot f(OPT') - f(S)
	\enspace.
\end{align*}

Taking now the expectation over all the sets $S$, we get:
\begin{align*}
	\mathbb{E}[f(S \cup S'\inConference{&}) \mid G]
	=\inArxiv{{} &}
	\mathbb{E}[f(S' \mid S) \mid G] + \mathbb{E}[f(S) \mid G]\\
	\geq{} &
	\mathbb{E}\left[\left(1 - \frac{1}{e} - \frac{\ee}{4}\right) \cdot f(OPT') - f(S) ~\middle|~ G\right] \inConference{\\ &}+ \mathbb{E}[f(S) \mid G]\\
	={} &
	\left(1 - \frac{1}{e} - \frac{\ee}{4}\right) \cdot \mathbb{E}[f(OPT') \mid G]\inConference{\\}
	\geq\inConference{{} &}
	\left(1 - \frac{1}{e} - \frac{\ee}{2}\right) \cdot f(OPT)
	\enspace.
	\qedhere
\end{align*}
\end{proof}

\begin{corollary}
\inArxiv{Algorithm~\ref{alg:Final} is a $(1 - e^{-1} - \ee)$-approximation algorithm.}
\inConference{Algorithm~\ref{alg:Final} has an approximation ratio of at least $(1 - e^{-1} - \ee)$.}
\end{corollary}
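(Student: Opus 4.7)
The plan is to combine Lemma~\ref{le:if_not_failing} (which controls the conditional approximation ratio given the good event $G$) with the failure probability bound from Lemma~\ref{le:random_gredy_properties}, and then show that the loss from conditioning is absorbed by the slack between $\ee/2$ and $\ee$ in the approximation guarantee.

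First, I would bound $\Pr[\bar G]$. Lemma~\ref{le:random_gredy_properties} guarantees that the residual random greedy subroutine fails with probability at most $k/(BI)$. Plugging in the choices $B = 20k\lambda^{-1}\ee^{-1}$ and $I = \lceil \lambda/3 \rceil \geq \lambda/3$ made by Algorithm~\ref{alg:Final}, this gives
\[
\Pr[\bar G] \leq \frac{k}{BI} \leq \frac{k}{(20k\lambda^{-1}\ee^{-1})(\lambda/3)} = \frac{3\ee}{20}.
\]

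Next, I would write out the total expectation. On failure the algorithm returns $\varnothing$, contributing $f(\varnothing) \geq 0$, and on success Lemma~\ref{le:if_not_failing} yields $\mathbb{E}[f(S \cup S') \mid G] \geq (1 - e^{-1} - \ee/2) f(OPT)$. Hence
\[
\mathbb{E}[f(S \cup S')] \geq \Pr[G] \cdot \left(1 - \frac{1}{e} - \frac{\ee}{2}\right) f(OPT) \geq \left(1 - \frac{3\ee}{20}\right)\left(1 - \frac{1}{e} - \frac{\ee}{2}\right) f(OPT).
\]

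Finally, I would verify that this product exceeds $(1 - e^{-1} - \ee)$. Expanding and using $1 - e^{-1} - \ee/2 \leq 1$,
\[
\left(1 - \frac{3\ee}{20}\right)\left(1 - \frac{1}{e} - \frac{\ee}{2}\right) \geq \left(1 - \frac{1}{e} - \frac{\ee}{2}\right) - \frac{3\ee}{20} \geq 1 - \frac{1}{e} - \frac{13\ee}{20} \geq 1 - \frac{1}{e} - \ee,
\]
which establishes the claim. There is no real obstacle here: the entire argument is a short two-line composition of the preceding lemma and corollary, and the constants in $B$ and $I$ were picked precisely so that the failure probability loss $3\ee/20$ plus the conditional loss $\ee/2$ stays below $\ee$. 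The only thing to be careful about is invoking the non-negativity of $f$ to discard the failure branch (which is already part of the standing assumptions on $f$).
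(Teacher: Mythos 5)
Your proposal is correct and follows essentially the same argument as the paper: bound $\Pr[\bar G] \leq 3\ee/20$ via Lemma~\ref{le:random_gredy_properties}, combine with the conditional guarantee of Lemma~\ref{le:if_not_failing} using $\mathbb{E}[f(A)] \geq \Pr[G] \cdot \mathbb{E}[f(A) \mid G]$ and the non-negativity of $f$, and verify the arithmetic. The only cosmetic difference is that you spell out the final inequality $(1 - 3\ee/20)(1 - e^{-1} - \ee/2) \geq 1 - e^{-1} - \ee$ in a bit more detail.
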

\begin{proof}
Let $A$ denote the output\inConference{ set} of Algorithm~\ref{alg:Final}. Lemma~\ref{le:if_not_failing} states that:
\[
	\mathbb{E}[f(A) \mid G]
	\geq
	\left(1 - \frac{1}{e} - \frac{\ee}{2}\right) \cdot f(OPT)
	\enspace.
\]

On the other hand, by Lemma~\ref{le:random_gredy_properties}, it is possible to lower bound $\Pr[G]$ by:
\[
	\Pr[G]
	\geq
	1 - \frac{k}{BI}
	\geq
	1 - \frac{k}{[20k\lambda^{-1}\ee^{-1}] \cdot [\lambda/3]}
	\geq
	1 - \frac{3\ee}{20}
	\enspace.
\]

The corollary now follows by observing that:
\begin{align*}
	\mathbb{E}[f(A)]
	\geq{} &
	\Pr[G] \cdot \mathbb{E}[f(A) \mid G] \inConference{\\}
	\geq\inConference{{} &}
	\left(1 - \frac{3\ee}{20} \right) \cdot \left(1 - \frac{1}{e} - \frac{\ee}{2}\right) \cdot f(OPT)\inConference{\\}
	\geq\inConference{{} &}
	\left(1 - \frac{1}{e} - \ee\right) \cdot f(OPT)
	\enspace.
	\qedhere
\end{align*}
\end{proof}

The above corollary completes the proof of Theorem~\ref{th:general_matroid}.

\subsection{A variant of the residual random greedy algorithm} \label{ssc:random_greedy}

In this section we describe an algorithm proving Lemma~\ref{le:random_gredy_properties}. The algorithm we describe is related to the residual random greedy algorithm of~\cite{BFNS14}, with two main modifications. First, the algorithm is sped up using ideas from~\cite{BV14}. Second, the algorithm stops when the total marginal value of all the elements in every independent set becomes small enough. Recall that the last property implies that the measured continuous greedy of~\cite{BV14} can be used efficiently to complete the solution.

The description of our algorithm (give as  Algorithm~\ref{alg:RandomLazyGreedy1}) assumes $\NN$ contains a known set $D$ of $k$ dummy elements having the following properties:
\begin{compactitem}
	\item $f(S) = f(S \setminus D)$ for every set $S \subseteq \NN$.
	\item A set $S \subseteq \NN$ is independent if and only if $S \setminus D$ is independent and $|S| \leq k$.
\end{compactitem}
This assumption can be guaranteed \inConference{to hold }by artificially adding such a set $D$ to the ground set, modifying the value and independence oracles accordingly and removing the elements of $D$ from the \inConference{end }solution\inArxiv{ produced by the algorithm}.

Algorithm~\ref{alg:RandomLazyGreedy1} performs up to $I$ iterations. In each iteration, the algorithm finds an (almost) maximum weight independent set $M$ in the residual matroid (\ie, the matroid resulting from $\MM$ by contracting the current solution $S$), where the weight of an element is defined as its marginal contribution to $S$. If $M$ has a high enough weight, then a random element from it is added to the current solution $S$ and the algorithm continues to the next iteration. Otherwise, the algorithm terminates.

In order to find $M$ using few value oracle queries, the algorithm maintains a global variable $w_u$ for every element $u \in \NN$. The variable $w_u$ is always an upper bound on $f(u \mid S)$.

\SetKwFunction{LinearGreedy}{LinearGreedy}
\SetKwProg{Function}{Function}{}{}
\begin{algorithm*}[ht]
\caption{\textsf{Random Lazy Greedy}$(f, \MM, \delta, B, I)$} \label{alg:RandomLazyGreedy1}
\DontPrintSemicolon
\tcp{Initialization}
Use the algorithm guaranteed by Lemma~\ref{le:crude_approximation} to calculate a value $\opt$ obeying: $f(OPT) \leq \opt \leq 3 \cdot f(OPT)$.\\
Let $S_0 \leftarrow \varnothing$.\\
Let $W \leftarrow \max_{u \in \NN} f(u)$.\\
\lForEach{$u \in \NN$}
{
	Let $w_u \leftarrow W$.
}

\BlankLine

\tcp{Main Loop}
\For{$i$ = $1$ \KwTo $I$}
{
	Let $M_i \leftarrow \LinearGreedy()$.\\
	\If{$(1 - \delta) \cdot \sum_{u \in M_i} w_u \geq B \cdot \opt$}
	{
		Add to $M_i$ enough dummy elements to make $S_{i - 1} \cup M_i$ a base.\\
		Let $u_i$ be a uniformly random element of $M_i$.\\
		Let $S_i \gets S_{i - 1} \cup \{u_i\}$.\\
	}
	\lElse
	{
		\Return{$S_{i - 1}$}.
	}
}
Declare failure. \\

\BlankLine

\Function{$\LinearGreedy()$}
{
	Let $M \leftarrow \varnothing$.\\
	\For{$(w \leftarrow W; w > \delta W / k; w \leftarrow w(1 - \delta))$}
	{
	\ForEach{$u \in \NN$}
		{
			\If{$w_u = w$ and $S_{i - 1} \cup M \cup \{u\} \in \II$ \label{ln:condition_to_M}}
			{
				\lIf{$f(u \mid S_{i - 1}) \leq (1 - \delta) w_u$}
				{
					Update $w_u \leftarrow w_u(1 - \delta)$.
				}
				\lElse
				{
					Add $u$ to $M$.
				}
			}
		}
	}
	\Return{$M$}.\\
}
\end{algorithm*}

We begin the analysis of Algorithm~\ref{alg:RandomLazyGreedy1} by showing it has the complexity required by Lemma~\ref{le:random_gredy_properties}.
%\inshort{The proof of Observation~\ref{ob:random_greedy_complexity} is omitted from this extended abstract.}

\begin{observation} \label{ob:random_greedy_complexity}
Algorithm~\ref{alg:RandomLazyGreedy1} makes $O(In \delta^{-1} \ln (k / \delta))$ independence oracle queries and $O(Ik + n \delta^{-1} \ln (k / \delta))$ value oracle queries.
\end{observation}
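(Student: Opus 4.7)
The plan is to isolate the initialization cost and the outer-loop housekeeping from the cost of the $I$ calls to \textsf{LinearGreedy}, and then analyze the latter via an amortized count across all calls simultaneously. Initialization consists of a crude approximation of $f(OPT)$ costing $O(n \ln k)$ queries of each type by Lemma~\ref{le:crude_approximation}, the computation of $W$ using $n$ value queries, and setting the $w_u$'s for free. The outer-loop housekeeping (appending dummy elements so that $S_{i-1}\cup M_i$ is a base, sampling $u_i$, and evaluating the threshold $(1-\delta)\sum_{u\in M_i} w_u \ge B\cdot\opt$) spends an additional $O(Ik)$ independence queries. All these terms are absorbed into the claimed bounds, so the real work is to count the queries made inside the $I$ executions of \textsf{LinearGreedy}.

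The crucial structural fact I would exploit is that $w_u$ is a global variable which is never reset: it starts at $W$, can only shrink by a factor of $(1-\delta)$, and the outer \textbf{for} loop of \textsf{LinearGreedy} stops when the working weight drops below $\delta W/k$. Hence, summed over \emph{all} $I$ calls, $w_u$ is decreased at most $D_u = O(\delta^{-1}\ln(k/\delta))$ times for every element $u$. Each ``visit'' to $u$---an iteration of the inner \textbf{for} loop at which $w = w_u$ at that moment---falls into one of three disjoint cases: (a) the independence test at line~\ref{ln:condition_to_M} fails; (b) it passes and $w_u$ is decreased; or (c) it passes and $u$ is added to $M$. All three cases spend one independence query, but only (b) and (c) also spend a value query. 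After an outcome of type (a) or (c), $w_u$ is unchanged while the outer loop moves to a strictly smaller weight, so $u$ is not revisited in the same call; after (b), $u$ is revisited exactly once at the new $w_u$. Therefore, within a single call, the visits to $u$ form a (possibly empty) run of type-(b) outcomes followed by at most one terminating (a) or (c), for at most (\# decreases in that call)$\,+\,1$ visits.

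Summing over the $I$ calls yields at most $D_u + I$ visits per element, and hence at most $\sum_u (D_u + I) = O\bigl(n\delta^{-1}\ln(k/\delta) + nI\bigr) = O\bigl(nI\delta^{-1}\ln(k/\delta)\bigr)$ independence queries overall. Value queries are spent only at outcomes of type (b) and (c), so their total is $\sum_u D_u + \sum_{i=1}^{I} |M_i| = O\bigl(n\delta^{-1}\ln(k/\delta)+Ik\bigr)$, using $|M_i| \le k$ since $M_i$ is independent in $\MM$. The main challenge is the careful bookkeeping needed to exploit the asymmetry between the two oracles: a failed independence test spends one independence query but no value query, and this is precisely what produces the additive $nI$ term in the independence bound versus the much smaller $Ik$ in the value bound. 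A second, essential point is that the argument relies on the monotonicity of $w_u$ \emph{across} all \textsf{LinearGreedy} calls and not merely within a single call; without that, the per-element cap $D_u$ would be multiplied by $I$ and the value query bound would be far too loose.
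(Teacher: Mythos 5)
Your proposal is correct, and its value-oracle half is exactly the paper's argument: every value query is charged either to an addition to $M$ (at most $Ik$ in total, since $S_{i-1}\cup M$ is kept independent) or to a decrease of some $w_u$ (at most $O(\delta^{-1}\ln(k/\delta))$ per element over the whole run, because $w_u$ is global and monotone). Where you diverge is the independence-oracle count. The paper uses the crude product bound: at most one independence query per iteration of the inner loop, $n$ inner iterations per weight level, $O(\delta^{-1}\ln(k/\delta))$ weight levels, $I$ calls, giving $O(In\delta^{-1}\ln(k/\delta))$. You instead charge independence queries only to ``visits'' (inner-loop iterations with $w_u=w$) and observe that, per element and per call, the visits form a run of $w_u$-decreases followed by at most one terminal outcome, so the total is $\sum_u(D_u+I)=O(n\delta^{-1}\ln(k/\delta)+nI)$ --- a strictly sharper bound than the paper's, which of course implies the stated one. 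The only caveat is that your accounting implicitly assumes the conjunction on Line~\ref{ln:condition_to_M} is evaluated lazily, so that the independence oracle is consulted only when $w_u=w$; under the alternative reading (one independence query per inner-loop iteration regardless), your visit-based count misses some queries, but then the paper's per-iteration product bound applies verbatim and the claimed $O(In\delta^{-1}\ln(k/\delta))$ still holds. Your handling of the absorbed terms (the $O(n\ln k)$ queries of Lemma~\ref{le:crude_approximation}, the $n$ value queries for $W$, and the dummy-element bookkeeping) matches the paper's.
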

\begin{proof}
The algorithm guaranteed by Lemma~\ref{le:crude_approximation} uses only $O(n \ln k)$ oracle queries, which is upper bounded by both guarantees of this observation, and thus, can be ignored. The first part of the observation now follows by multiplying the following values:
\begin{itemize}
	\item Every iteration of the internal loop of \inConference{the algorithm }{\LinearGreedy} uses a single independence oracle query, and this loop repeats $n$ times.
	\item The number of iterations performed by the external loop of {\LinearGreedy} is:
	\[
		\lceil \ln_{1 - \delta} (\delta / k) \rceil
		\leq
		1 - \frac{\ln (k / \delta)}{\ln(1 - \delta)}
		\leq
		1 + \frac{\ln (k / \delta)}{\delta}
		\enspace.
	\]
	\item The main loop of \inArxiv{the algorithm}\inConference{Algorithm~\ref{alg:RandomLazyGreedy1}} repeats \inArxiv{at most}\inConference{up to} $I$ times.
\end{itemize}

The second part of the observation holds since every time a value oracle query is made by Algorithm~\ref{alg:RandomLazyGreedy1}, one of two things must happen: either an element is added to $M$ or $w_u$ is decreased for some element $u \in \NN$. Observe that at most $Ik$ elements can be added to $M$, and the number of times $w_u$ can be decreased for every element $u \in \NN$ is at most:
\[
	\lceil \ln_{1 - \delta} (\delta / k) \rceil
	\leq
	1 + \frac{\ln (k / \delta)}{\delta}
	\enspace.
	\qedhere
\]
\end{proof}

To prove the \inArxiv{other}\inConference{rest of the} properties guaranteed by Lemma~\ref{le:random_gredy_properties}, we first need some notation. Let $i_\ell$ be the (random) largest index for which $S_{i_\ell}$ is defined by the algorithm. For ease of notation, we define for every $0 \leq i \leq I$ the value $r(i) = \min\{i, i_\ell\}$. Notice that $S_{r(i)}$ is always defined, even when $S_i$ is not assigned in a given execution of Algorithm~\ref{alg:RandomLazyGreedy1}. The following lemma lower bounds the expected value of $S_{r(i)}$. 
%\inshort{The main idea of the lemma is that in every iteration $i$ in which an element is added to the solution, the expected increase in the value of the solution is at least $k^{-1} \cdot \sum_{u \in M_i} f(u \mid S_i) \geq (1 - \delta) \cdot \sum_{u \in M_i} w_u \geq B / k$. We omit the proof of the lemma from this extended abstract.}

\begin{lemma} \label{le:lower_bound_using_B}
For every $0 \leq i \leq I$, $\mathbb{E}[f(S_{r(i)})] \geq B \cdot (\mathbb{E}[r(i)] / k) \cdot f(OPT)$.
\end{lemma}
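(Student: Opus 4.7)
The plan is to prove this by induction on $i$. The base case $i=0$ holds trivially since $f(S_0) = f(\varnothing) \geq 0$ and $r(0) = 0$. For the inductive step, write $f(S_{r(i)}) = f(S_{r(i-1)}) + \Delta_i$ where $\Delta_i := f(S_{r(i)}) - f(S_{r(i-1)})$. The structural observation is that $r(i) - r(i-1) \in \{0,1\}$: if the algorithm has already terminated before iteration $i$, then $r(i) = r(i-1)$ and $\Delta_i = 0$; otherwise $r(i) = i$, $r(i-1) = i-1$, and $\Delta_i = f(u_i \mid S_{i-1})$. Hence in all cases $\Delta_i = \mathbf{1}\{r(i) = i\} \cdot f(u_i \mid S_{i-1})$, and in particular $\Pr[r(i) = i] = \mathbb{E}[r(i)] - \mathbb{E}[r(i-1)]$.

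The core step is to lower bound the conditional expectation of $f(u_i \mid S_{i-1})$ given that iteration $i$ proceeds. Condition on all randomness up to and including the computation of $M_i$ by the \textsf{LinearGreedy} subroutine, and on the event that the non-termination test $(1-\delta)\sum_{u \in M_i} w_u \geq B \cdot \opt$ is satisfied (this sum ranges over the pre-augmentation $M_i$, which contains only non-dummies). After dummies are appended, $|M_i| = k - |S_{i-1}| = k - (i-1)$ and $u_i$ is uniform over this augmented $M_i$. Every dummy $d \in D$ satisfies $f(d \mid S_{i-1}) = 0$ by the defining property $f(T) = f(T \setminus D)$, and every non-dummy $u \in M_i$ was added to $M$ only when the check ``$f(u \mid S_{i-1}) \leq (1-\delta) w_u$'' failed, so $f(u \mid S_{i-1}) > (1-\delta) w_u$. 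Combining these facts,
\[
\mathbb{E}[f(u_i \mid S_{i-1}) \mid \cdot] \;\geq\; \frac{(1-\delta)\sum_{u \in M_i \setminus D} w_u}{k-(i-1)} \;\geq\; \frac{B \cdot \opt}{k-(i-1)} \;\geq\; \frac{B \cdot f(OPT)}{k},
\]
using $\opt \geq f(OPT)$ from Lemma~\ref{le:crude_approximation} together with $k - (i-1) \leq k$.

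Taking the outer expectation gives $\mathbb{E}[\Delta_i] \geq \Pr[r(i) = i] \cdot B f(OPT)/k = (\mathbb{E}[r(i)] - \mathbb{E}[r(i-1)]) \cdot B f(OPT)/k$, which combines with the inductive hypothesis $\mathbb{E}[f(S_{r(i-1)})] \geq (B/k) \mathbb{E}[r(i-1)] f(OPT)$ to yield the desired bound, closing the induction. The main obstacle I anticipate is the bookkeeping around the dummy elements — the non-termination test is evaluated on $M_i$ \emph{before} dummies are appended while $u_i$ is drawn uniformly \emph{after} they are appended, which changes the cardinality of $M_i$ used in the averaging — but this is cleanly reconciled by the fact that dummies have zero marginal contribution, so the sum in the numerator is unaffected. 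Beyond that, the probabilistic content is just the tower property applied to the indicator $\mathbf{1}\{r(i) = i\}$, which is measurable with respect to the randomness fixed before the uniform draw of $u_i$.
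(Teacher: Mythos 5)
Your proof is correct and follows essentially the same route as the paper's: induction on $i$, conditioning on the randomness accumulated before the uniform draw of $u_i$, lower-bounding the conditional gain by $\frac{1}{|M_i|}\sum_{u\in M_i} f(u\mid S_{i-1}) \geq \frac{(1-\delta)\sum_{u\in M_i} w_u}{k}\geq \frac{B\cdot f(OPT)}{k}$, and recombining with the inductive hypothesis via the tower property. The only cosmetic difference is that you are slightly more explicit than the paper about the pre- versus post-dummy versions of $M_i$ and why the dummy elements' zero marginals keep the numerator unchanged while growing the denominator to $k-(i-1)\leq k$; the paper simply writes the bound with $|M_i|\leq k$ and leaves this bookkeeping implicit.
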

\begin{proof}
We prove the lemma by induction on $i$. For $i = 0$:
\inArxiv{\[}\inConference{$}
	\mathbb{E}[f(S_{r(0)})]
	\geq
	0
	=
	\mathbb{E}[r(0)]
\inArxiv{	\enspace. \]}\inConference{$.}
Next, assume the claim is true for $i - 1 \geq 0$, and let us prove it for $i$. Fix an event $A_i$ specifying the random decisions made by Algorithms~\ref{alg:RandomLazyGreedy1} before iteration $i$. All the probabilities and expectations from this point till we unfix $A_i$ are implicitly conditioned on $A_i$. Notice that once $A_i$ is fixed the sets $S_{i - 1}$ and $M_i$ become deterministic and so is the question whether $i \leq i_\ell$. Based on the last question, we have two cases. If $i \leq i_\ell$, then the definition of Algorithm~\ref{alg:RandomLazyGreedy1} guarantees $\sum_{u \in M_i} f(u \mid S_{i - 1}) \geq (1 - \delta) \cdot \sum_{u \in M_i} w_u \geq B \cdot \opt \geq B \cdot f(OPT)$. Since $u_i$ is a random element from $M_i$,\inArxiv{ we get:}
\begin{align*}
	\mathbb{E}[f(S_{r(i)})\inConference{&}]
	=\inArxiv{{} &}
	f(S_{r(i - 1)}) + \mathbb{E}[f(u_i \mid S_{i - 1})]\inConference{\\}
	\geq\inConference{{} &}
	f(S_{r(i - 1)}) + \frac{\sum_{u \in M_i} f(u \mid S_{i - 1})}{|M_i|}\\
	\geq{} &
	f(S_{r(i - 1)}) + \frac{B \cdot f(OPT)}{k}\inConference{\\}
	=\inConference{{} &}
	f(S_{r(i - 1)}) + \frac{B \cdot [r(i) - r(i - 1)]}{k} \cdot f(OPT)
	\enspace.
\end{align*}
Consider now the case $r(i) < i$. In this case,
\begin{align*}
	\inConference{&}
	\mathbb{E}[f(S_{r(i)})]
	=\inArxiv{{} &}
	f(S_{r(i - 1)})\inConference{\\}
	=\inConference{{} &}
	f(S_{r(i - 1)}) + \frac{B \cdot [r(i) - r(i - 1)]}{k} \cdot f(OPT)
	\enspace.
\end{align*}

In conclusion, \inArxiv{the inequality}\inConference{we have} $\mathbb{E}[f(S_{r(i)})] \geq f(S_{r(i - 1)}) + Bk^{-1} \cdot [r(i) - r(i - 1)] \cdot f(OPT)$ \inArxiv{holds }in both cases. Moreover, since this inequality holds conditioned on every given event $A_i$, it holds also unconditionally. Therefore, unfixing the event $A_i$, we get:
\begin{align*}
	\mathbb{E}[f(S_{r(i)})]
	={} &
	\mathbb{E}[f(S_{r(i - 1)})] \inConference{\\ &} + \frac{B \cdot \mathbb{E}[r(i) - r(i - 1)]}{k} \cdot f(OPT)\\
	\geq{} &
	\frac{B \cdot \mathbb{E}[r(i - 1)]}{k} \cdot f(OPT)  \inConference{\\ &} + \frac{B \cdot \mathbb{E}[r(i) - r(i - 1)]}{k} \cdot f(OPT) \inConference{\\}
	=\inConference{{} &}
	\frac{B \cdot \mathbb{E}[r(i)]}{k} \cdot f(OPT)
	\enspace.
	\qedhere
\end{align*}
\end{proof}

The above lemma implies the following very useful corollary.

\begin{corollary} \label{co:r_bound}
For every $0 \leq i \leq I$, $\mathbb{E}[r(i)] \leq k/B$.
\end{corollary}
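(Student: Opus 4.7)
}
The plan is to combine Lemma~\ref{le:lower_bound_using_B} with a trivial upper bound on $\mathbb{E}[f(S_{r(i)})]$. The key observation is that the set $S_{r(i)}$ is always independent in $\MM$ (it is built by adding one element at a time from the sets $M_i$, each of which preserves independence of $S_{i-1} \cup M_i$ by the check on line~\ref{ln:condition_to_M} of \LinearGreedy, possibly padded with dummy elements that, by assumption on $D$, do not affect the value of $f$ and preserve independence). Since $OPT$ maximizes $f$ over $\II$, we have $f(S_{r(i)}) \leq f(OPT)$ deterministically, hence $\mathbb{E}[f(S_{r(i)})] \leq f(OPT)$.

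Plugging this upper bound into the lower bound $\mathbb{E}[f(S_{r(i)})] \geq B \cdot (\mathbb{E}[r(i)] / k) \cdot f(OPT)$ supplied by Lemma~\ref{le:lower_bound_using_B} yields
\[
	f(OPT) \geq \frac{B \cdot \mathbb{E}[r(i)]}{k} \cdot f(OPT),
\]
and dividing through by $f(OPT)$ gives $\mathbb{E}[r(i)] \leq k/B$, which is exactly the desired inequality.

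The only subtle point is the degenerate case $f(OPT) = 0$. Since $f$ is non-negative and monotone, $f(OPT) = 0$ forces $f(S) = 0$ on every independent $S$, so the claim follows trivially from Lemma~\ref{le:lower_bound_using_B} (with the convention that $r(i) \leq k$ always and the inequality reducing to $0 \geq 0$); alternatively one observes that if $f \equiv 0$ on $\II$ then Algorithm~\ref{alg:RandomLazyGreedy1} would terminate at iteration $1$, giving $r(i) \leq 1$ and the bound is meaningful only when $B$ is small. There is no genuine obstacle here—the statement is essentially an immediate rearrangement of Lemma~\ref{le:lower_bound_using_B} using the matroid feasibility of $S_{r(i)}$ and the optimality of $OPT$.
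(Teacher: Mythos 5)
Your proof is correct and takes essentially the same route as the paper: observe that $S_{r(i)}$ is always feasible so $\mathbb{E}[f(S_{r(i)})] \leq f(OPT)$, then rearrange the bound from Lemma~\ref{le:lower_bound_using_B}. (Your aside on the $f(OPT)=0$ case is not quite right---in that case $\opt=0$ and the stopping test is always satisfied, so the algorithm would run all $I$ iterations rather than stop at iteration $1$---but the paper silently assumes $f(OPT)>0$ anyway, so this does not affect the substance.)
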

\begin{proof}
Notice that $S_{r(i)}$ is always a feasible solution. Thus, $\mathbb{E}[f(S_{r(i)})] \leq f(OPT)$. Using the lower bound on $\mathbb{E}[f(S_{r(i)})]$ given by Lemma~\ref{le:lower_bound_using_B}, we get:
\[
	\frac{\mathbb{E}[r(i)] \cdot B}{k} \cdot f(OPT) \leq f(OPT)
	\Rightarrow
	\mathbb{E}[r(i)] \leq \frac{k}{B}
	\enspace.
	\qedhere
\]
\end{proof}

We are now ready to prove the upper bound on the failure probability of Algorithm~\ref{alg:RandomLazyGreedy1} given by Lemma~\ref{le:random_gredy_properties}. Let $G$ denote the event that Algorithm~\ref{alg:RandomLazyGreedy1} succeeds (\ie, it does not declare failure).
\begin{observation} \label{ob:fail_probability}
$\Pr[\bar{G}] \leq k(BI)^{-1}$.
\end{observation}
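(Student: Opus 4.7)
The plan is to observe that failure of Algorithm~\ref{alg:RandomLazyGreedy1} is characterized in a very simple way by the random variable $i_\ell$ already introduced, and then apply Markov's inequality combined with Corollary~\ref{co:r_bound}. Specifically, Algorithm~\ref{alg:RandomLazyGreedy1} reaches the ``Declare failure'' line if and only if every one of the $I$ iterations of the main loop successfully assigned a new element $u_i$ to $S_i$, i.e., if and only if $i_\ell = I$. Equivalently, the complement of $G$ is exactly the event $\{r(I) = I\}$, since $r(I) = \min\{I, i_\ell\}$ and $r(I) < I$ precisely when the algorithm terminated early via the \texttt{return} inside the \texttt{else} branch.

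Having identified $\Pr[\bar{G}] = \Pr[r(I) = I]$, I would note that $r(I)$ is a non-negative integer random variable bounded above by $I$, so in particular $\{r(I) = I\} \subseteq \{r(I) \geq I\}$. Markov's inequality then yields
\[
	\Pr[\bar{G}] = \Pr[r(I) \geq I] \leq \frac{\mathbb{E}[r(I)]}{I}.
\]
Plugging in the bound $\mathbb{E}[r(I)] \leq k/B$ supplied by Corollary~\ref{co:r_bound} (applied with $i = I$) immediately gives $\Pr[\bar{G}] \leq k/(BI)$, as required.

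There is no real obstacle here: the only slightly subtle point is the equivalence between the failure event and $\{r(I) = I\}$, which is a direct reading of the pseudocode but worth stating explicitly so that the subsequent Markov step is unambiguous. The rest is a one-line application of results already established earlier in the subsection.
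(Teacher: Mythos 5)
Your proof is correct and follows essentially the same argument as the paper: identify the failure event with $\{i_\ell = I\} = \{r(I) = I\}$, apply Markov's inequality to $r(I)$, and invoke Corollary~\ref{co:r_bound}. The only difference is that you make explicit the (correct but routine) step that $\{r(I) = I\} \subseteq \{r(I) \geq I\}$ before invoking Markov, which the paper leaves implicit.
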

\begin{proof}
Notice that Algorithm~\ref{alg:RandomLazyGreedy1} fails exactly when $i_\ell = I$. Hence, by Markov's inequality:
\[
	\Pr[\bar{G}]
	=
	\Pr[i_\ell = I]
	=
	\Pr[r(I) = I]
	\leq
	\frac{\mathbb{E}[r(I)]}{I}
	\leq
	\frac{k}{BI}
	\enspace.
	\qedhere
\]
\end{proof}

Our final objective in this section is to prove Algorithm~\ref{alg:RandomLazyGreedy1} obeys item~\eqref{item:if_succeed} of Lemma~\ref{le:random_gredy_properties}. The following lemma proves the first part of this item.
%\inshort{The main idea of this lemma is that $M_i$ roughly maximizes $\sum_{u \in M_i} f(u \mid M_{i - 1})$ among all feasible sets of $\MM / S_{i - 1}$. We omit the proof from this extended abstract.}

\begin{lemma} \label{le:no_too_good_set}
Conditioned on $G$, every independent set $S' \subseteq \NN \setminus S_{r(I)}$ of the matroid $\MM / S_{r(I)}$ must have: $\sum_{u \in S'} f(u \mid S_{r(I)}) \leq (1 - \delta)^{-2}(3B  + \delta) \cdot f(OPT)$.
\end{lemma}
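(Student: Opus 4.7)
The plan is to denote $S = S_{r(I)} = S_{i_\ell}$, let $M = M_{i_\ell+1}$ be the set produced by the last invocation of {\LinearGreedy} (the one that triggered the early return), and write $w_u^{\mathrm f}$ for the value of $w_u$ at the end of that call. Because the algorithm returned rather than extending $S$, the threshold test failed at iteration $i_\ell+1$, so
\[
\sum_{v \in M} w_v^{\mathrm f} \;<\; \frac{B \cdot \opt}{1-\delta} \;\le\; \frac{3B \cdot f(OPT)}{1-\delta},
\]
using $\opt \le 3 f(OPT)$ from Lemma~\ref{le:crude_approximation}. The goal is to convert this bound on $\sum_{v \in M} w_v^{\mathrm f}$ into the claimed bound on $\sum_{u \in S'} f(u \mid S)$.

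The first ingredient is the pointwise upper bound $f(u \mid S) \le w_u^{\mathrm f}$ for every $u \in \NN \setminus S$. If $w_u$ was never decreased, this is immediate from $w_u^{\mathrm f} = W \ge f(u) \ge f(u \mid S)$ (submodularity). Otherwise, the last decrease happened in some iteration $i^\star \le i_\ell+1$ and was gated by the explicit test $f(u \mid S_{i^\star-1}) \le (1-\delta) w_u^{\mathrm{old}} = w_u^{\mathrm f}$; since $S_{i^\star-1} \subseteq S$, submodularity gives $f(u \mid S) \le f(u \mid S_{i^\star-1}) \le w_u^{\mathrm f}$.

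Next I would split $S' = S'_{\mathrm{sm}} \cup S'_{\mathrm{lg}}$ with $S'_{\mathrm{sm}} = \{u \in S' : w_u^{\mathrm f} \le \delta W/k\}$. The small part is controlled by the cardinality bound $|S'| \le k$ and $W \le f(OPT)$, giving $\sum_{u \in S'_{\mathrm{sm}}} f(u \mid S) \le \delta W \le \delta f(OPT)$. The main work lies in $S'_{\mathrm{lg}}$, where I use a matroid-exchange argument. For each threshold $w > \delta W/k$ define $A_w = \{u \in S' : w_u^{\mathrm f} \ge w\}$ and $M_{\ge w} = \{v \in M : w_v^{\mathrm f} \ge w\}$. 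I claim $A_w \subseteq \mathrm{span}_{\MM/S}(M_{\ge w})$. Indeed, $A_w \cap M \subseteq M_{\ge w}$ trivially; and any $u \in A_w \setminus M$ must have been \emph{blocked} at threshold $w_u^{\mathrm f}$ during the current {\LinearGreedy} call, because $w_u^{\mathrm f} > \delta W/k$ guarantees $u$ is processed at that threshold, and the alternative outcomes (decrease of $w_u$ or addition to $M$) contradict either the definition of $w_u^{\mathrm f}$ or $u \notin M$. At the moment of blocking, the partial $M'$ inside the call satisfied $S \cup M' \cup \{u\} \notin \II$, so $u$ lies in the $\MM/S$-span of $M'$; and every element of $M'$ was added at a threshold $\ge w_u^{\mathrm f} \ge w$ with its weight frozen thereafter, so $M' \subseteq M_{\ge w}$.

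Since $A_w \subseteq S'$ is independent in $\MM/S$ and contained in the span of the independent set $M_{\ge w}$, the rank inequality yields $|A_w| \le |M_{\ge w}|$ for every threshold $w > \delta W/k$. An Abel/layer-cake summation over the geometric sequence $W(1-\delta)^j$ then gives $\sum_{u \in S'_{\mathrm{lg}}} w_u^{\mathrm f} \le \sum_{v \in M} w_v^{\mathrm f}$. Chaining the pieces,
\[
\sum_{u \in S'} f(u \mid S) \;\le\; \sum_{u \in S'_{\mathrm{lg}}} w_u^{\mathrm f} + \delta f(OPT) \;\le\; \sum_{v \in M} w_v^{\mathrm f} + \delta f(OPT) \;<\; \frac{3B f(OPT)}{1-\delta} + \delta f(OPT) \;\le\; \frac{3B+\delta}{(1-\delta)^2} f(OPT),
\]
which is the desired bound. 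The main obstacle is the matroid-exchange step: rigorously identifying that every $u \in S'_{\mathrm{lg}} \setminus M$ was blocked (not merely weight-decreased) and that its blocking set sits inside $M_{\ge w_u^{\mathrm f}}$, which depends on the invariant that an element's $w_v$ is never modified after joining $M$ within a single {\LinearGreedy} call.
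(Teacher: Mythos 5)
Your proposal is correct, and it proves the lemma by a genuinely different route than the paper. The paper runs a sequential exchange argument along the construction order of $M_i$: it maintains an evolving optimal base $OPT'_j$ of $\MM/S_{i-1}$ containing the partial set $M_{i,j}$, uses circuit/base exchange to show that each added element $v_j$ costs at most $\delta \cdot f(v'_j \mid S_{i-1})$ of the optimum (via the comparison $f(v_j \mid S_{i-1}) \ge (1-\delta) w \ge (1-\delta) f(v'_j \mid S_{i-1})$ at the moment of insertion), sums these losses, and handles the elements below the $\delta W/k$ cutoff separately; this yields $f(M_i : S_{i-1}) \ge (1-\delta) f(OPT' : S_{i-1}) - \delta \cdot f(OPT)$, which is then combined with the stopping rule. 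You instead compare \emph{counts at each weight level}: the invariant $f(u \mid S) \le w_u$, the fact that $w_v$ is frozen once $v$ joins $M$ within a call, and the blocking/span argument give $|\{u \in S' : w^{\mathrm f}_u \ge w\}| \le |\{v \in M : w^{\mathrm f}_v \ge w\}|$ at every grid threshold above $\delta W/k$, and Abel summation converts this into $\sum_{u \in S'_{\mathrm{lg}}} w^{\mathrm f}_u \le \sum_{v \in M} w^{\mathrm f}_v$, which is exactly the quantity the algorithm's stopping test bounds; the small-weight tail is handled the same way as in the paper. The delicate points you flagged do hold: an element $u \notin M$ with $w^{\mathrm f}_u > \delta W/k$ is necessarily examined at the pass for threshold $w^{\mathrm f}_u$ (whether its weight reached that value before or during the call), and since it is neither decreased nor added it must fail the independence check against the partial $M'$, all of whose elements were inserted at thresholds $\ge w^{\mathrm f}_u$ and hence lie in $M_{\ge w^{\mathrm f}_u}$. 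Your approach buys a cleaner interface with the stopping condition (which tests $\sum_{v \in M} w_v$, not $f(M_i : S_{i-1})$) and avoids maintaining the evolving bases $OPT'_j$ and the associated per-step exchange bookkeeping; the paper's argument, in exchange, stays entirely in terms of marginal values in the style of classical greedy analyses. Both proofs ultimately rest on the same two ingredients: the matroid exchange/span property and the $\delta W/k$ tail bound.
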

\begin{proof}
In this proof, given two sets $A, B \subseteq \NN$, we use the notation $f(A : B) = \sum_{u \in A} f(u \mid B)$ to the denote the total marginal contribution to $B$ of $A$'s elements. Since we are conditioned on $G$, Algorithm~\ref{alg:RandomLazyGreedy1} stopped at some iteration $i = r(I) + 1$ after observing $(1 - \delta) \cdot f(M_i : S_{i - 1}) < B \cdot \opt \leq 3B \cdot f(OPT)$. Let $OPT'$ be the set maximizing $f(\cdot : S_{i - 1})$ among the independent subsets of $\MM / S_{i - 1}$. To complete the proof, it is enough to show that $f(M_i : S_{i - 1}) \geq (1 - \delta) \cdot f(OPT' : S_{i - 1}) - \delta \cdot f(OPT)$.

The function {\LinearGreedy} constructs $M_i$ element by element. For every $1 \leq j \leq |M_i|$, let $M_{i, j}$ be the set $M_i$ after $j$ elements are added to it. For consistency, we also define $M_{i, 0} = \varnothing$. Let $OPT'_j$ be a base maximizing $f(\cdot : S_{i - 1})$ among the bases of $\MM / S_{i - 1}$ containing $M_{i, j}$. Clearly, $f(OPT' : S_{i - 1}) = f(OPT'_0 : S_{i - 1})$.

Fix an arbitrary $1 \leq j \leq |M_i|$, and let $v_j = M_{i,j} \setminus M_{i, j - 1}$ be the $j^{th}$ element added to $M_i$. If $v_j \in OPT'_{j - 1}$ then we define $v'_j = v_j$. Otherwise, let $v'_j$ be an arbitrary element of $\NN \setminus M_{i, j}$ belonging to the (single) cycle of $OPT'_{j - 1} \cup \{v_j\}$. Notice that $(OPT'_j \setminus \{v'_j\}) \cup \{v_j\}$ is always a base of $\MM / S_{i - 1}$ containing $M_{i, j}$, and thus, $f(OPT'_j : S_{i - 1}) \geq f(OPT'_{j - 1} \setminus \{v'_j\} \cup \{v_j\} : S_{i - 1})$. Since $M_{i, j - 1} \cup \{v'_j\}$ is independent in $\MM / S_{i - 1}$, at the time point when the algorithm added $v_j$ to $M_i$, the following held:
\begin{align*}
	f(v_j \mid S_{i - 1})
	\geq{} &
	w_{v_j}(1 - \delta)
	=
	w(1 - \delta)\inConference{\\}
	\geq\inConference{{} &}
	w_{v'_j}(1 - \delta)
	\geq
	f(v'_j \mid S_{i - 1}) \cdot (1 - \delta)
	\enspace,
\end{align*}
which implies:
\begin{align*}
	f(OPT'_j : S_{i - 1})&
	\geq
	f(OPT'_{j - 1} \setminus \{v'_j\} \cup \{v_j\} : S_{i - 1})\inConference{\\}
	\geq\inConference{{} &}
	f(OPT'_{j - 1} : S_{i - 1}) - \delta \cdot f(v'_j \mid S_{i - 1})
	\enspace.
\end{align*}

Adding up the above inequality for every $1 \leq j \leq |M_i|$ results in:
\begin{align*}
	f(OPT'_{|M_i|} : S_{i - 1})\inConference{&\\}
	\geq{} &
	f(OPT'_0 : S_{i - 1}) - \delta \cdot \sum_{j = 1}^{|M_i|} f(v'_j \mid S_{i - 1})\inConference{\\}
	\geq\inConference{{} &}
	(1 - \delta) \cdot f(OPT' : S_{i - 1})
	\enspace.
\end{align*}
Finally, every element of $u \in OPT'_{|M_i|} \setminus M_i$, must have $f(u \mid S) \leq \delta W / k \leq \delta \cdot f(OPT) / k$. Thus,
\begin{align*}
	f(OPT'_{|M_i|} \inConference{&} : S_{i - 1}) \inConference{\\}
	\leq{} &
	f(M_i : S_{i - 1}) + |OPT'_{|M_i|} \setminus M_i| \cdot \frac{\delta}{k} \cdot f(OPT)\\
	\leq{} &
	f(M_i : S_{i - 1}) + \delta \cdot f(OPT)
	\enspace.
	\qedhere
\end{align*}
\end{proof}

To prove the second part of item~\eqref{item:if_succeed} of Lemma~\ref{le:random_gredy_properties}, we need the following lemma from~\cite{B69}, which can be found (with a different notation) as Corollary~39.12a in~\cite{S03}.

\begin{lemma} \label{le:perfect_matching_two_bases}
If $A$ and $B$ are two bases of a matroid $\MM = (\NN, \II)$, then there exists a bijection $\phi : A \setminus B \rightarrow B \setminus A$ such for every $u \in B \setminus A$, $A \cup \{u\} \setminus \{\phi(u)\} \in \II$.
\end{lemma}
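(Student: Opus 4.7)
}
My plan is to recast the existence of the bijection $\phi$ as the existence of a perfect matching in an auxiliary bipartite graph, and then to verify Hall's condition using the matroid's circuit structure. Define the bipartite graph $H$ with parts $L = A\setminus B$ and $R = B\setminus A$, placing an edge between $v\in L$ and $u\in R$ exactly when $A\cup\{u\}\setminus\{v\}\in\II$. Since $A$ and $B$ are bases of the same matroid, $|L|=|R|$, so a perfect matching in $H$ is precisely a bijection of the type required. The key observation linking $H$ to matroid theory is that $A\cup\{u\}\setminus\{v\}$ is independent (and hence a basis, as it has size $|A|$) if and only if $v$ lies in the fundamental circuit $C(u,A)\subseteq A\cup\{u\}$; thus the neighborhood of $u\in R$ in $H$ is $N_H(u)=C(u,A)\cap L$.

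The main step is to verify Hall's condition on the $R$-side: for every $S\subseteq R$, $|N_H(S)|\geq|S|$. Fix such an $S$ and set $T = N_H(S)\subseteq L$. I will prove that the set $Q := (A\setminus T)\cup S$ is independent; granting this, $|A\setminus T|+|S|=|Q|\leq|A|$, which immediately gives $|S|\leq |T|=|N_H(S)|$. To prove $Q$ is independent, I would argue by contradiction: if $Q$ contained a circuit $C$, then because $A\setminus T\subseteq A$ is independent, $C$ must use some $u\in S$; since $C\setminus\{u\}\subseteq A$, $C$ is then forced to coincide with the fundamental circuit $C(u,A)$. All elements of $C(u,A)\setminus\{u\}$ lie in $A$, and they cannot all lie in $A\cap B$ (otherwise $C(u,A)\subseteq B$, contradicting independence of the basis $B$), so some $v\in C(u,A)\setminus\{u\}$ lies in $A\setminus B=L$. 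But then $v\in N_H(u)\subseteq N_H(S)=T$, contradicting $v\in A\setminus T$.

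With Hall's condition established, Hall's theorem supplies a matching of $R$ into $L$, and since $|L|=|R|$ this matching is perfect. Reading the matching as a bijection $\phi:A\setminus B\to B\setminus A$ whose inverse sends each $u\in B\setminus A$ to its matched partner, we obtain exactly the property in the lemma: $A\cup\{u\}\setminus\{\phi^{-1}(u)\}\in\II$, which up to the direction convention of $\phi$ is the claimed statement. The main obstacle is really only the independence argument for $Q$; everything else is a mechanical application of Hall's theorem and the fundamental-circuit characterization of basis exchange, both of which are standard. (Since this lemma is classical and attributed to~\cite{B69}/\cite{S03}, one can alternatively cite it directly, but the Hall-theorem argument above is the shortest self-contained route.)
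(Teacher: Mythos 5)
Your overall strategy (an exchange bipartite graph plus Hall's theorem) is a legitimate self-contained route; the paper itself offers no proof, simply invoking Brualdi~\cite{B69} (Corollary~39.12a in~\cite{S03}). However, your verification of Hall's condition has a genuine gap at exactly the step you identify as the crux. You claim that if $Q=(A\setminus T)\cup S$ contained a circuit $C$, then $C\setminus\{u\}\subseteq A$ for the element $u\in C\cap S$, forcing $C=C(u,A)$. This only holds when $C$ meets $S$ in a \emph{single} element: since $C\subseteq (A\setminus T)\cup S$, a circuit containing two or more elements of $S$ has $C\setminus\{u\}\not\subseteq A$, and such circuits are simply not addressed by your argument. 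So the contradiction you derive covers only the case $|C\cap S|=1$, and Hall's condition is left unproven in general.

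The good news is that the statement you are after is true and the hole is patchable, but it needs an additional idea. One fix is to prove the general fact that any circuit $C$ satisfies $C\cap A\subseteq\bigcup_{u\in C\setminus A}C(u,A)$ (e.g., if some $x\in C\cap A$ avoided all these fundamental circuits, then $C$ would meet the fundamental cocircuit of $x$ with respect to $A$ in the single element $x$, contradicting the circuit--cocircuit intersection property); with this, every $A$-element of $C$ lying in $A\setminus B$ belongs to $T$, so $C\subseteq S\cup(A\cap B)\subseteq B$, a contradiction, and your independence claim for $Q$ goes through. Alternatively, you can bypass the independence of $Q$ entirely: for each $u\in S$ we have $C(u,A)\setminus\{u\}\subseteq (A\cap B)\cup T$, hence $u$ lies in the span of $(A\cap B)\cup T$, so the rank of $(A\cap B)\cup T\cup S$ is at most $|A\cap B|+|T|$; since $(A\cap B)\cup S\subseteq B$ is independent, this gives $|S|\le|T|$ directly, which is Hall's condition. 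With either repair the remainder of your write-up (the fundamental-circuit description of the edges, $|L|=|R|$, and reading the perfect matching as $\phi$, modulo the direction convention already present in the lemma's statement) is correct.
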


Using the above lemma we can now define for every $0 \leq i \leq I$, a random variable $OPT_i$ via the following recursive definition.

\begin{compactitem}
	\item $OPT_0$ is an arbitrary base of $\MM$ obtained from $OPT$ by adding enough dummy elements.
	\item For $i > 0$, $OPT_i$ depends on $i_\ell$. If $i > i_\ell$, then $OPT_i$ is an undefined set (its value is never used in the proofs below). Otherwise, if $i \leq i_\ell$, then let $\phi : M_i \rightarrow OPT_{i - 1}$ be a one to one function having the following properties:
	\begin{compactitem}
		\item For every $u \in M_i$, $S_{i - 1} \cup (OPT_{i - 1} \setminus \{\phi(u)\}) \cup \{u\} \in \II$.
		\item For every $u \in OPT_{i - 1} \cap M_i$, $\phi(u) = u$.
	\end{compactitem}
	 Then, $OPT_i = OPT_{i - 1} \setminus \{\phi(u_i)\}$.
\end{compactitem}

The existence of a function $\phi$ having the properties given in the above definition follow from Lemma~\ref{le:perfect_matching_two_bases} since both $OPT_{i - 1} \cup S_{i - 1}$ and $M_i \cup S_{i - 1}$ are bases of $\MM$ (the fact that $OPT_{i - 1} \cup S_{i - 1}$ is a base can be easily verified by induction). When there are multiple possible choices for $\phi$, we assume one is picked based on the contents of the sets $OPT_{i - 1}$ and $S_{i - 1}$ alone.

The following lemma and corollary \inArxiv{show}\inConference{imply} that $OPT_{r(I)}$ is a random set having the properties required by the second part of item~\eqref{item:if_succeed} of Lemma~\ref{le:random_gredy_properties}.

\begin{lemma} \label{le:opt_i_bound}
For every $0 \leq i \leq I$, $\mathbb{E}[f(OPT_{r(i)})] \geq f(OPT) - 2 \cdot \mathbb{E}[r(i)] \cdot f(OPT) / k$.
\end{lemma}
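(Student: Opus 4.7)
\medskip
\noindent\textbf{Proof Proposal.}
The plan is to prove the bound by induction on $i$. The base case $i=0$ is immediate: $OPT_0$ is $OPT$ augmented with dummies (which contribute nothing to $f$), so $\mathbb{E}[f(OPT_{r(0)})]=f(OPT_0)=f(OPT)$ and $\mathbb{E}[r(0)]=0$.

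For the inductive step, I will fix an event $A_i$ specifying all randomness of Algorithm~\ref{alg:RandomLazyGreedy1} prior to iteration $i$. Once $A_i$ is fixed the sets $S_{i-1}$, $M_i$, $OPT_{i-1}$ and $\phi$ all become deterministic, and so does the answer to ``$i\le i_\ell$''. Split into two cases. If $i>i_\ell$, then $r(i)=r(i-1)$ and $OPT_{r(i)}=OPT_{r(i-1)}$, so trivially $\mathbb{E}[f(OPT_{r(i)})\mid A_i]=f(OPT_{r(i-1)})$ with no degradation. If $i\le i_\ell$, then $OPT_i=OPT_{i-1}\setminus\{\phi(u_i)\}$ where $u_i$ is uniform over $M_i$, and thus $X:=\phi(u_i)$ is uniformly distributed over the subset $\phi(M_i)\subseteq OPT_{i-1}$ of size $|M_i|$.

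The key inequality to establish in this second case is
\[
\mathbb{E}[f(OPT_i)\mid A_i]\;\ge\;f(OPT_{i-1})-\frac{f(OPT_{i-1})}{|M_i|}.
\]
To get this I would set $B:=OPT_{i-1}\setminus\phi(M_i)$ and define $g:2^{\phi(M_i)}\to\mathbb{R}^+$ by $g(T)=f(T\cup B)$; this $g$ is non-negative and submodular. The set $\phi(M_i)\setminus\{X\}$ is a (not independent) random subset of $\phi(M_i)$ in which each element appears with probability exactly $p=1-1/|M_i|$, so Lemma~\ref{le:distribution_reverse} applied to $g$ yields
\[
\mathbb{E}[f(OPT_{i-1}\setminus\{X\})\mid A_i]\;\ge\;(1-p)\,g(\varnothing)+p\,g(\phi(M_i))\;\ge\;(1-1/|M_i|)\,f(OPT_{i-1}),
\]
using non-negativity to drop the $g(\varnothing)$ term.

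Next I would convert the $1/|M_i|$ factor into the desired $2/k$. Because each iteration adds exactly one element to $S$ and $\phi$ is a bijection between $M_i$ and a subset of $OPT_{i-1}$ of size equal to the corank of $S_{i-1}$ after dummy padding, $|M_i|=k-(i-1)$. Since $i\le I\le k/2$ by the hypothesis of Lemma~\ref{le:random_gredy_properties}, $|M_i|\ge k/2+1\ge k/2$. Combined with monotonicity, which gives $f(OPT_{i-1})\le f(OPT_0)=f(OPT)$, the previous inequality becomes
\[
\mathbb{E}[f(OPT_i)\mid A_i]\;\ge\;f(OPT_{i-1})-\frac{2\,f(OPT)}{k}.
\]
Writing $r(i)-r(i-1)\in\{0,1\}$ and observing that in case~1 we also have $r(i)-r(i-1)=0$, both cases are captured by
\[
\mathbb{E}[f(OPT_{r(i)})\mid A_i]\;\ge\;f(OPT_{r(i-1)})-\tfrac{2}{k}\,f(OPT)\cdot[r(i)-r(i-1)].
\]
Since this holds for every $A_i$, it holds unconditionally; taking expectations, invoking the inductive hypothesis on $\mathbb{E}[f(OPT_{r(i-1)})]$, and telescoping $\mathbb{E}[r(i-1)]+\mathbb{E}[r(i)-r(i-1)]=\mathbb{E}[r(i)]$ completes the induction.

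The main subtlety I expect is the correct setup of Lemma~\ref{le:distribution_reverse}: one must restrict $f$ to subsets of $\phi(M_i)$ by fixing $B=OPT_{i-1}\setminus\phi(M_i)$ so that $g$ is a non-negative submodular function on the sample space, and one must verify that $\phi(M_i)\setminus\{X\}$ is indeed a marginal-$p$ sample (which it is, trivially, since $X$ is uniform on $\phi(M_i)$, even though removing one uniform element is highly non-independent). The rest is bookkeeping around the random stopping time $i_\ell$, handled cleanly by conditioning on the history $A_i$.
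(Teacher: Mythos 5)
Your proof is correct and follows essentially the same route as the paper's: induction on $i$, conditioning on the history $A_i$, applying Lemma~\ref{le:distribution_reverse} to the set obtained by deleting the uniformly random element $\phi(u_i)$ (your restriction to $g(T)=f(T\cup B)$ is a harmless refinement, since after dummy padding $\phi(M_i)=OPT_{i-1}$), and using $I\le k/2$ together with monotonicity to bound the per-iteration loss by $2f(OPT)/k$. No gaps.
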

\begin{proof}
We prove the lemma by induction on $i$. For $i = 0$:
\[
	\mathbb{E}[f(OPT_{r(i)})]
	=
	f(OPT_0)
	=
	f(OPT) - \frac{2 \cdot 0}{k} \cdot f(OPT)
	\enspace.
\]
Next, assume the claim holds for $i - 1 \geq 0$, and let us prove it for $i$. Like in the proof of Lemma~\ref{le:lower_bound_using_B}, we fix an event $A_i$ specifying the random decisions made by Algorithms~\ref{alg:RandomLazyGreedy1} before iteration $i$. All the probabilities and expectations from this point till we unfix $A_i$ are implicitly conditioned on $A_i$. Notice that once $A_i$ is fixed the sets $S_{i - 1}$, $M_i$ and $OPT_{i - 1}$ become deterministic and so is the question whether $i \leq i_\ell$. Based on the last question, we have two cases. If $i \leq i_\ell$, then every element of $OPT_{r(i) - 1}$ belongs to $OPT_{r(i)}$ with probability $1 - 1 / |OPT_{r(i) - 1}| = 1 - (k - r(i) + 1)^{-1}$. Thus, by Lemma~\ref{le:distribution_reverse},
\begin{align*}
	\mathbb{E}[f(OP\inConference{&}T_r(i))]
	\geq\inArxiv{{} &}
	\left(1 - \frac{1}{k - r(i) + 1}\right) \cdot f(OPT_{r(i - 1)})\inConference{\\}
	\geq\inConference{&}
	\left(1 - \frac{2}{k}\right) \cdot f(OPT_{r(i - 1)})\\
	\geq{} &
	f(OPT_{r(i - 1)}) - \frac{2}{k} \cdot f(OPT)\inConference{\\}
	=\inConference{&}
	f(OPT_{r(i - 1)}) - \frac{2[r(i) - r(i - 1)]}{k} \cdot f(OPT)
	\enspace,
\end{align*}
where the second inequality holds since $r(i) \leq i \leq I \leq k/2$ and the third inequality holds since $OPT_{r(i - 1)} \subseteq OPT_0$. Consider now the case $i > i_\ell$. In this case:
\begin{align*}
	\mathbb{E}[f(OP\inConference{&}T_{r(i)})]
	=
	f(OPT_{r(i - 1)})\inConference{\\}
	=\inConference{&}
	f(OPT_{r(i - 1)}) - \frac{2[r(i) - r(i - 1)]}{k} \cdot f(OPT)
	\enspace.
\end{align*}

In conclusion, \inArxiv{the inequality }$\mathbb{E}[f(OPT_{r(i)})] \geq f(OPT_{r(i - 1)}) - 2k^{-1}[r(i) - r(i - 1)] \cdot f(OPT)$ holds in both cases. Moreover, since this inequality holds conditioned on every given event $A_i$, it holds also unconditionally. Therefore, unfixing the event $A_i$, we get:
\begin{align*}
	\mathbb{E}[f(\inConference{&}OPT_{r(i)})]\inConference{\\}
	\geq{} &
	\mathbb{E}[f(OPT_{r(i - 1)})] - \frac{2 \cdot \mathbb{E}[r(i) - r(i - 1)]}{k} \cdot f(OPT)\\
	\geq{} &
	f(OPT) - \frac{2 \cdot \mathbb{E}[r(i - 1)]}{k} \cdot f(OPT) \inConference{\\ &}- \frac{2 \cdot \mathbb{E}[r(i) - r(i - 1)]}{k} \cdot f(OPT)\\
	={} &
	f(OPT) - \frac{2 \cdot \mathbb{E}[r(i)]}{k} \cdot f(OPT)
	\enspace.
	\qedhere
\end{align*}
\end{proof}

\begin{corollary}
For every $0 \leq i \leq I$, $\mathbb{E}[f(OPT_{r(i)}) \mid G] \geq [1 - B^{-1}(2 + k/I)] \cdot f(OPT)$.
\end{corollary}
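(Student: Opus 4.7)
The plan is to reduce the conditional bound to the unconditional one given by Lemma~\ref{le:opt_i_bound} by paying only a small cost for conditioning on $G$. Write the law of total expectation in the form
\[
	\Pr[G] \cdot \mathbb{E}[f(OPT_{r(i)}) \mid G]
	=
	\mathbb{E}[f(OPT_{r(i)})] - \mathbb{E}[f(OPT_{r(i)}) \cdot \mathbf{1}_{\bar G}]
	\enspace.
\]
The first term on the right is bounded from below using Lemma~\ref{le:opt_i_bound} together with Corollary~\ref{co:r_bound}, which give
\[
	\mathbb{E}[f(OPT_{r(i)})] \geq f(OPT) - \frac{2 \cdot \mathbb{E}[r(i)]}{k} \cdot f(OPT) \geq \left(1 - \frac{2}{B}\right) \cdot f(OPT)\enspace.
\]

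For the error term I would use that $OPT_{r(i)} \subseteq OPT_0$ by the recursive construction, and $f(OPT_0) = f(OPT)$ because dummy elements do not affect $f$; hence monotonicity of $f$ implies $f(OPT_{r(i)}) \leq f(OPT)$ pointwise. Thus $\mathbb{E}[f(OPT_{r(i)}) \cdot \mathbf{1}_{\bar G}] \leq \Pr[\bar G] \cdot f(OPT)$, and by Observation~\ref{ob:fail_probability} this is at most $(k/(BI)) \cdot f(OPT)$.

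Combining these two bounds and using $\Pr[G] \leq 1$ on the left hand side (legal because the right hand side is non-negative when the quantities in the statement are meaningful, which is the regime of interest), we conclude
\[
	\mathbb{E}[f(OPT_{r(i)}) \mid G] \geq \left(1 - \frac{2}{B} - \frac{k}{BI}\right) \cdot f(OPT) = \left[1 - \frac{1}{B}\left(2 + \frac{k}{I}\right)\right] \cdot f(OPT)\enspace,
\]
which is exactly the desired inequality. There is essentially no obstacle here; the only subtle point is justifying the inequality $f(OPT_{r(i)}) \leq f(OPT)$, which relies specifically on the dummy-element property used to define $OPT_0$ as a base obtained from $OPT$, together with monotonicity of $f$. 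Everything else is a direct algebraic assembly of results already proved.
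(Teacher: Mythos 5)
Your proposal is correct and follows essentially the same route as the paper: the same law-of-total-expectation decomposition (your indicator form $\mathbb{E}[f(OPT_{r(i)}) \cdot \mathbf{1}_{\bar G}]$ is just $\Pr[\bar G]\cdot\mathbb{E}[f(OPT_{r(i)})\mid \bar G]$), the same use of Lemma~\ref{le:opt_i_bound}, Corollary~\ref{co:r_bound} and Observation~\ref{ob:fail_probability}, and the same pointwise bound $f(OPT_{r(i)}) \leq f(OPT)$ via the dummy-element construction. The only cosmetic difference is where the $\Pr[G] \leq 1$ step is justified (non-negativity of $f$ makes it immediate), which matches the paper's first inequality.
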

\begin{proof}
By the law of total expectation:
\begin{align*}
	\mathbb{E}[f(OP\inConference{&}T_{r(i)}) \mid G]
	\geq
	\Pr[G] \cdot \mathbb{E}[f(OPT_{r(i)}) \mid G]\inConference{\\}
	=\inConference{{} &}
	\mathbb{E}[f(OPT_{r(i)})] - \Pr[\bar{G}] \cdot \mathbb{E}[f(OPT_{r(i)}) \mid \bar{G}]
	\enspace.
\end{align*}

The term $\mathbb{E}[f(OPT_{r(i)}) \mid \bar{G}]$ can be upper bounded by $f(OPT)$ because $OPT_{r(i)}$ is always a subset of $OPT$ (possibly, plus dummy elements). Combining this observation with Observation~\ref{ob:fail_probability} and Lemma~\ref{le:opt_i_bound} gives:
\begin{align*}
	\mathbb{E}[f(OPT_{r(i)}) \mid G]
	\geq{} &
	\left[f(OPT) - \frac{2 \cdot \mathbb{E}[r(i)]}{k} \cdot f(OPT) \right] \inConference{\\ &}- \frac{k}{BI} \cdot f(OPT)\\
	\geq{} &
	\left[1 - \frac{2 + k/I}{B}\right] \cdot f(OPT)
	\enspace,
\end{align*}
where the last inequality follows from Corollary~\ref{co:r_bound}.
\end{proof}

\subsection{Generalized Partition Matroids}

In this section we prove Theorem~\ref{th:partition_matroid}, which states that for generalized partition matroids it is possible to improve over the result given by Theorem~\ref{th:general_matroid} for general matroids. In this context, there is no longer an independence oracle, thus, we are only interested in the number of value oracle queries used by our algorithm (and guaranteeing that the time complexity is bounded by the same expression). Throughout this section, $\MM$ is a generalized partition matroid and $h \leq k$ is the number of partitions in $\MM$. We denote by $\NN_j \subseteq \NN$, the set of elements in the $j^{th}$ partition of $\MM$ and by $k_j$ the maximum number of elements that can be taken from this partition (\ie, $\sum_{j = 1}^h k_j = k$ and a set $S \subseteq \NN$ is independent if and only if $|S \cap \NN_j| \leq k_j$ for every $1 \leq j \leq h$).

Swap rounding is an algorithm suggested by~\cite{CVZ10} for rounding fractional points in the matroid polytope $\PP(\MM)$ into an (integral) independent set. Badanidiyuru and Jondr\'{a}k~\cite{BV14} observed that swap rounding has a time complexity of $O(bk^2)$ when the fractional point is a convex combination of $b$ independent sets. The following observation states that this bound can be improved for generalized partition matroids.

\begin{observation} \label{ob:pipage_time}
Given a generalized partition matroid and a fractional point $x \in \PP(\MM)$ which is a convex combination of $b$ independent sets, swap rounding can be used to round $x$ using $O(bk)$ time and no value oracle queries.
\end{observation}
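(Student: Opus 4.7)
The plan is to recall the structure of swap rounding from~\cite{CVZ10} and then exploit the product structure of the generalized partition matroid to implement each elementary swap in $O(1)$ time. Swap rounding represents $x$ as a convex combination $x = \sum_{i=1}^b \lambda_i \mathbf{1}_{I_i}$ of $b$ independent sets and processes them by $b-1$ sequential pairwise \emph{merges}. A merge of $(\lambda, I)$ and $(\lambda', I')$ repeatedly picks $u \in I \setminus I'$ and $v \in I' \setminus I$ such that both $I - u + v$ and $I' - v + u$ are independent, and then with probability $\lambda/(\lambda+\lambda')$ replaces $v$ by $u$ in $I'$, and otherwise replaces $u$ by $v$ in $I$. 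Each iteration removes two elements from the symmetric difference $I \triangle I'$, so a single merge performs at most $|I \triangle I'|/2 = O(k)$ iterations.

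Before running the algorithm I would pad each $I_i$ with dummy elements to turn it into a base of $\MM$; this preserves the convex combination, adds only $O(k)$ elements to the ground set, and lets me assume from now on that every merge operates on two bases. The key observation, and the one that makes the speedup possible, is that in a generalized partition matroid the valid swaps are completely local to the partitions: if $u \in I \setminus I'$ lies in partition $\NN_p$, then every element $v \in (I' \setminus I) \cap \NN_p$ satisfies both exchange conditions. Since $I$ and $I'$ are bases, $|I \cap \NN_p| = |I' \cap \NN_p| = k_p$ for every $p$, and hence such a $v$ is always available and can be chosen without any independence oracle query.

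The remaining, and main, implementation detail is to execute each iteration of a merge in $O(1)$ time. For this I would maintain, for each partition $p$, two doubly-linked lists holding the elements of $(I \setminus I') \cap \NN_p$ and $(I' \setminus I) \cap \NN_p$, together with direct pointers from each element to its list node. Picking $u$, locating a suitable $v$ in the list corresponding to the partition of $u$, performing the random replacement, and removing the two elements from their respective lists are all constant-time operations, and no value oracle or independence oracle query is ever performed. Initializing the lists before a merge costs $O(k)$ time, and each of the $O(k)$ iterations of the merge costs $O(1)$, so a single merge runs in $O(k)$ time. Summing over the $b-1$ merges yields the desired $O(bk)$ total running time.
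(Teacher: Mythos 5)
Your proposal is correct and follows essentially the same route as the paper: the paper's (much terser) proof likewise reduces everything to the fact that, with a per-partition representation of each independent set, a valid swap partner in the same partition can be located in $O(1)$ time, so the $O(k)$ exchange steps of each of the $b-1$ merges in the swap-rounding pseudo-code of~\cite{CVZ10} cost $O(k)$ per merge and $O(bk)$ in total, with no oracle queries. Your write-up merely makes explicit the details the paper delegates to~\cite{CVZ10} (padding to bases, partition-local validity of swaps, the halving of the symmetric difference), which is fine.
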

\begin{proof}
It is possible to represent independent sets $S \in \II$ in such a way that given an index $1 \leq j \leq h$ one can find an element $u \in S \cap \NN_j$ in $O(1)$ time  (if such an element exists). The pseudo-code of swap rounding given by~\cite{CVZ10} requires only $O(bk)$ time when the sets composing the fractional point are given in a representation having the above property. Moreover, the standard representation of an independent set as a list of items can be converted into a representation having the above property in $O(k)$ time, hence, all $b$ sets can be converted into such a representation in $O(bk)$ time.
\end{proof}

Plugging the above improved time complexity into the result of \cite{BV14} yields the following improved version of Corollary~\ref{co:continuous_greedy_guarantee}.

\begin{corollary} \label{co:continuous_greedy_guarantee_partition}
If $\max_{S \in \II} \sum_{u \in S} f(u) \leq c \cdot f(OPT)$ for some value $c$, then for every $\delta > 0$ there exists a $(1 - e^{-1} - \delta)$-approximation algorithm for maximizing $f$ subject to a generalized partition matroid $\MM$ using $O(cn\delta^{-4} \ln^2(\frac{n}{\delta}))$ value oracle queries and a time complexity bounded by the same expression.
\end{corollary}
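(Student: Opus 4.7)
The plan is to run precisely the algorithm of Corollary~\ref{co:continuous_greedy_guarantee} --- the \cite{BV14} variant of measured continuous greedy followed by swap rounding \cite{CVZ10} --- with no change whatsoever. Since the algorithm and its analysis are unchanged, the approximation ratio $(1 - e^{-1} - \delta)$ and the bound $O(cn\delta^{-4}\ln^2(n/\delta))$ on the number of value oracle queries carry over verbatim from Corollary~\ref{co:continuous_greedy_guarantee}. The only new content is verifying that the total running time (counting every arithmetic and data-structure operation, not just oracle calls) also fits within $O(cn\delta^{-4}\ln^2(n/\delta))$.

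For the time analysis I would separate the work into three buckets. First, the continuous greedy phase: its bookkeeping beyond the value oracle queries (maintaining the fractional solution and the thresholds in the descending-threshold routine of \cite{BV14}) is $O(1)$ per query, so its total time matches the number of value oracle queries. Second, the independence oracle calls: the algorithm makes $O(n\delta^{-2}\ln(n/\delta) + k^2\delta^{-1})$ of them, and each one reduces in a generalized partition matroid to reading a single counter $|S \cap \NN_j|$ and comparing it to $k_j$, which takes $O(1)$ time. Third, the swap-rounding step: the fractional point emitted by the measured continuous greedy is a convex combination of $b = O(\delta^{-1})$ independent sets (one per discrete step), so by Observation~\ref{ob:pipage_time} the rounding runs in $O(bk) = O(k\delta^{-1})$ time. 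Finally, using $c \geq 1$ (which follows from submodularity, since $f(OPT) \leq \sum_{u \in OPT} f(u)$) and $k \leq n$, each of these three contributions is dominated by $O(cn\delta^{-4}\ln^2(n/\delta))$.

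The main obstacle --- and the precise reason the statement is a strict improvement over just plugging Corollary~\ref{co:continuous_greedy_guarantee} into a partition matroid --- is the $O(k^2\delta^{-1})$ term coming from standard swap rounding in the third bucket: for $k$ close to $n$ and $\delta$ not tiny, $k^2\delta^{-1}$ can exceed $cn\delta^{-4}\ln^2(n/\delta)$, so without a partition-matroid-specific speedup the time bound would fail. Observation~\ref{ob:pipage_time} replaces this term by $O(k\delta^{-1})$, which is exactly what is needed. A minor technical point, which I would handle once up front, is converting the $b$ independent sets output by \cite{BV14} from a plain list representation into the part-indexed representation required by Observation~\ref{ob:pipage_time}; this conversion takes $O(k)$ time per set and hence $O(bk)$ overall, which does not change the bound.
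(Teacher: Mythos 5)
Your proposal is correct and takes essentially the same route as the paper, whose proof likewise just reruns the algorithm of Corollary~\ref{co:continuous_greedy_guarantee} and plugs in Observation~\ref{ob:pipage_time} (together with the trivial $O(1)$-time independence checks available for a generalized partition matroid) to bound the total running time by the value-query count. The only slip is that your second bucket still lists the $O(k^2\delta^{-1})$ independence queries, which come from the standard swap rounding and are not made once the rounding is implemented as in Observation~\ref{ob:pipage_time}---a point you yourself state correctly in the following paragraph, so the argument stands.
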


To get an improved result for generalized partition matroids, we also need to improve the implementation of the function {\LinearGreedy} of Algorithm~\ref{alg:RandomLazyGreedy1}. We observe that for such matroids one can handle each partition separately in the function {\LinearGreedy}. The separation allows us to remove the element-wise check whether a given element can be added to the current solution, and replace it with a partition-wise check whether the current solution already has the maximum allowed number of partition elements. Additionally, we replace the element specific variables $w_u$ with sets $T_{j, w}$ containing all the elements $u \in \NN_j$ that logically have $w_u = w$. This change allows us to avoid scanning all the elements of $\NN_j$ in order to find the elements $u \in \NN_j$ having $w_u = w$. Finally, for further acceleration, we introduce for each partition a list $\TT_j$ of the non-empty sets $T_{j, w}$.

The improved implementation of {\LinearGreedy} is given as Algorithm~\ref{alg:new_implementation_greedy_linear}. The initialization part should be executed once before the first call to {\LinearGreedy}.

\SetKwFunction{PartitionGreedy}{PartitionGreedy}
\begin{algorithm*}[h!t]
\caption{\textsf{New Implementation of \FuncSty{LinearGreedy}}} \label{alg:new_implementation_greedy_linear}
\DontPrintSemicolon
\tcp{Initialization}
\For{$j$ = $1$ to $k$}
{
	\lFor{$(w \leftarrow W; w > \delta W / k; w \leftarrow w(1 - \delta))$}{$T_{j, w} \gets \varnothing$.}
	\lForEach{$u \in \NN_j$}{Add $u$ to $T_{j, W}$.}
	Let $\TT_j \gets \{T_{j, W}\}$.
}

\BlankLine

\Function{$\LinearGreedy()$}
{
	Let $M \leftarrow \varnothing$.\\
	\lFor{$j$ = $1$ to $h$}{Update $M \gets M \cup \PartitionGreedy(j)$}
	\Return{$M$}.\\
}

\Function{$\PartitionGreedy(j)$}
{
	Let $M_j \leftarrow \varnothing$.\\
	\ForEach{$T_{j, w} \in \TT_j$ in decreasing weight order}
	{
		\ForEach{$u \in T_{j, w}$ \label{line:internal_loop}}
		{
			\lIf{$|M_j| = k_j$}{\Return{$M_j$}.}
			\If{$f(u \mid S_{i - 1}) \leq w(1 - \delta)$}
			{
				Remove $u$ from $T_{j, w}$ and add it to $T_{j, w(1 - \delta)}$ (if such a set exists).\\
				Update $\TT_j$ by removing $T_{j, w}$ if it became empty and adding $T_{j, w(1 - \delta)}$ if it was empty before.\\
			}
			\lElse
			{
				Add $u$ to $M_j$.
			}
		}
	}
	\Return{$M_j$}.\\
}
\end{algorithm*}

\begin{observation} \label{ob:random_greedy_complexity_partition}
Algorithm~\ref{alg:RandomLazyGreedy1} with the new implementation of {\LinearGreedy} given by Algorithm~\ref{alg:new_implementation_greedy_linear} uses $O(Ik + n\delta^{-1} \ln (k / \delta))$ value oracle queries, and has a time complexity bounded by the same expression.
\end{observation}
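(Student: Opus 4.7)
The plan is to analyze the value oracle query count and the running time together, since both reduce to bounding the same quantity: the total number of iterations of the innermost loop at line~\ref{line:internal_loop} of \PartitionGreedy, summed over all calls to \LinearGreedy. Each such iteration makes exactly one value oracle query and $O(1)$ additional work (assuming the sets $T_{j, w}$ are stored as doubly linked lists, and $\TT_j$ is maintained as a sorted doubly linked list indexed by the weight level so that insertions and deletions take $O(1)$ time).

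First I would classify each inner iteration by what happens at its end. Exactly one of two things occurs: either $u$ is added to $M_j$ (a \emph{success}), or $u$ is demoted from $T_{j, w}$ to $T_{j, w(1 - \delta)}$ (a \emph{demotion}); the bookkeeping updates to $\TT_j$ that may accompany a demotion cost $O(1)$. The successes are easy to count: per call to \PartitionGreedy$(j)$ there are at most $k_j$, so per call to \LinearGreedy there are at most $\sum_{j = 1}^{h} k_j = k$ successes, and across the $I$ iterations of the main loop of Algorithm~\ref{alg:RandomLazyGreedy1} there are at most $Ik$ successes in total. For demotions, I would use the same argument as in Observation~\ref{ob:random_greedy_complexity}: each element $u \in \NN$ has an associated weight that can only decrease, and the number of distinct weight levels is $\lceil \ln_{1 - \delta}(\delta/k) \rceil = O(\delta^{-1} \ln(k/\delta))$, so $u$ can be demoted at most that many times throughout the entire execution. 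Summing over all $n$ elements yields $O(n \delta^{-1} \ln(k/\delta))$ demotions in total.

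Combining the two bounds, the innermost loop executes at most $O(Ik + n \delta^{-1} \ln(k/\delta))$ times, which immediately bounds both the value oracle queries and the work done inside all \PartitionGreedy\ calls. It then remains to check that every other cost fits within this budget. The initialization creates $h \le k$ empty lists $\TT_j$ and $O(k \delta^{-1} \ln(k/\delta))$ empty sets $T_{j, w}$ (or one can create $T_{j,w}$ lazily on first use, eliminating this step), and distributes the $n$ elements into the top-level sets $T_{j, W}$ in $O(n)$ time. The preliminary call to the algorithm of Lemma~\ref{le:crude_approximation} costs $O(n \ln k)$ queries and time. The overhead of each of the $I$ iterations of the main loop outside \LinearGreedy\ (computing $\sum_{u \in M_i} w_u$, padding with dummy elements, sampling $u_i$ and updating $S_i$) is $O(k)$, contributing $O(Ik)$; note that the weight of each element $u \in M_i$ is available for free since we add $u$ to $M$ from a specific $T_{j, w}$.

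The main subtlety, and the step I expect to require the most care, is verifying that the data structure for $\TT_j$ supports the required operations in $O(1)$ amortized time, and in particular that iterating over $T_{j, w} \in \TT_j$ in decreasing weight order can resume efficiently after a \PartitionGreedy\ call returns early. This is handled by keeping $\TT_j$ as a doubly linked list ordered by weight (so insertion of $T_{j, w(1 - \delta)}$ next to $T_{j, w}$ is $O(1)$) and by resetting the pointer at the start of each \PartitionGreedy\ call, since a single $O(1)$ cost per call is absorbed into the $Ik$ success budget. With these implementation choices, the total time is $O(Ik + n \delta^{-1} \ln(k/\delta))$, matching the bound on the number of value oracle queries and completing the proof.
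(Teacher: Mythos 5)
Your proof takes essentially the same route as the paper's: both classify each inner-loop iteration as an addition to $M$ or a demotion of an element's weight level, bound the additions by $Ik$, bound the demotions by $O(n\delta^{-1}\ln(k/\delta))$ via the number of distinct weight levels per element, and observe that a doubly linked representation of $\TT_j$ gives $O(1)$ work per iteration. One small imprecision: your claim that ``exactly one of two things occurs'' in each inner iteration omits the iteration that hits $|M_j| = k_j$ and returns immediately without querying $f$ or demoting; the paper accounts for this explicitly (``each iteration, except for maybe one per execution of \PartitionGreedy, accesses $f$''), and while your remark that a single $O(1)$ cost per call is absorbed into the $Ik$ budget does cover it, the dichotomy as stated is not quite correct.
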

\begin{proof}
Aside from the $n$ value oracle queries used to calculate $W$ and the $O(n \ln k)$ oracle queries used by the algorithm guaranteed by Lemma~\ref{le:crude_approximation}, every access to $f$ is followed by one of two events: either an element is added to $M$ or the ``logical'' $w_u$ of an element $u$ is reduced. The total number of elements that can be added to $M$ is $Ik$, and the total number of values a single ``logical'' $w_u$ can have is
\[
	\lceil \ln_{1 - \delta} (\delta / k) \rceil
	\leq
	1 - \frac{\ln \left(\frac{k}{\delta} \right)}{\ln(1 - \delta)}
	\leq
	1 + \frac{\ln \left(\frac{k}{\delta} \right)}{\delta}
	\enspace.
\]
This completes the proof of the first part of the observation.

Regarding the time complexity, notice that the main part of Algorithm~\ref{alg:RandomLazyGreedy1} uses $O(n + Ik)$ time (excluding the calls to {\LinearGreedy}) and the initialization step introduced in Algorithm~\ref{alg:new_implementation_greedy_linear} uses $O(n + k \ln_{1 - \delta} (\delta / k)) = O(n\delta^{-1} \ln(k / \delta))$ time. Thus, we only need to bound the time complexity of the new implementation of {\LinearGreedy}.

Each iteration of the loop starting on Line~\ref{line:internal_loop} of Algorithm~\ref{alg:new_implementation_greedy_linear} takes $O(1)$ time (notice that $\TT_j$ can be updated in $O(1)$ time if $\TT_j$ is represented as a double linked list). Moreover, this loop always make at least one iteration, and each iteration (except for maybe one per execution of {\PartitionGreedy}) access $f$. Hence, we can bound the time required for {\LinearGreedy} by $O(h) = O(k)$ plus the number of value oracle queries it uses. The observation now follows since {\LinearGreedy} is called only $I$ times by Algorithm~\ref{alg:RandomLazyGreedy1}.
\end{proof}

\begin{corollary}
Algorithm~\ref{alg:Final} with the new implementation of {\LinearGreedy} given by Algorithm~\ref{alg:new_implementation_greedy_linear} makes at most $O(k\lambda + kn\lambda^{-1}\ee^{-5} \ln^2(\frac{n}{\ee}))$ value oracle queries.
\end{corollary}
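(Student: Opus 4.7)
The plan is to simply add up the value oracle query counts of the two components of Algorithm~\ref{alg:Final} using the improved guarantees already established for generalized partition matroids. Algorithm~\ref{alg:Final} performs exactly one call to the Random Lazy Greedy (with its new implementation from Algorithm~\ref{alg:new_implementation_greedy_linear}), and at most one subsequent call to the continuous greedy algorithm on the contracted matroid. Since the ``failure'' branch returns $\varnothing$ and makes no further queries, it suffices to bound the worst case.

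For the Random Lazy Greedy call, I would invoke Observation~\ref{ob:random_greedy_complexity_partition}, which gives $O(Ik + n\delta^{-1}\ln(k/\delta))$ value oracle queries. Algorithm~\ref{alg:Final} sets $\delta = 1/2$ and $I = \lceil \lambda/3 \rceil$, so this bound becomes $O(k\lambda + n\ln k)$.

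For the continuous greedy call, I would apply Corollary~\ref{co:continuous_greedy_guarantee_partition} with the parameters used by Algorithm~\ref{alg:Final}: $c = 240k\lambda^{-1}\ee^{-1} + 2 = O(k\lambda^{-1}\ee^{-1})$ (using that $\lambda \geq 1$ so the additive $2$ is absorbed, and that we may assume $k\lambda^{-1}\ee^{-1} \geq 1$) and $\delta = \ee/4$. This yields $O(cn\delta^{-4}\ln^2(n/\delta)) = O(kn\lambda^{-1}\ee^{-5}\ln^2(n/\ee))$ value oracle queries.

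Summing the two contributions gives $O(k\lambda + n\ln k + kn\lambda^{-1}\ee^{-5}\ln^2(n/\ee))$. The middle term is dominated by the last one because $k/\lambda \geq 1$ (since $\lambda \leq k$) and $\ee^{-5}\ln^2(n/\ee) \geq \ln k$ in the relevant parameter regime, yielding the claimed bound $O(k\lambda + kn\lambda^{-1}\ee^{-5}\ln^2(n/\ee))$. There is no real obstacle here; the work has all been done in Observation~\ref{ob:random_greedy_complexity_partition} and Corollary~\ref{co:continuous_greedy_guarantee_partition}, and the only thing to check is that the subsumption of lower-order terms is legitimate for the allowed ranges of $\lambda$ and $\ee$.
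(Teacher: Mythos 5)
Your proposal is correct and follows the same route as the paper's own (one-line) proof: sum the value-query guarantees of Observation~\ref{ob:random_greedy_complexity_partition} and Corollary~\ref{co:continuous_greedy_guarantee_partition} with the parameter substitutions $\delta=1/2$, $I=\lceil\lambda/3\rceil$ and $c=O(k\lambda^{-1}\ee^{-1})$, $\delta=\ee/4$ respectively. Your extra step of checking that the $O(n\ln k)$ term is absorbed by $O(kn\lambda^{-1}\ee^{-5}\ln^2(n/\ee))$ (using $\lambda\le k$, $k\le n$, $\ee<1$) is the only bookkeeping the paper leaves implicit, and it is handled correctly.
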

\begin{proof}
The observation follows by adding up the guarantees on the number of value oracle queries given by Corollary~\ref{co:continuous_greedy_guarantee_partition} and Observation~\ref{ob:random_greedy_complexity_partition}.
\end{proof}

Theorem~\ref{th:partition_matroid} now follows immediately by the following choice of $\lambda$:\footnote{For $n \leq 2$ this choice might not be in the valid range $[1, k]$, however, for a constant $n$ the problem can be optimally solved using a constant number of value oracle queries.}
\[
	\lambda =
	\begin{cases}
		k & \text{if $k \leq \sqrt{n\ee^{-5}} \ln(\frac{n}{\ee})$} \enspace,\\
		\sqrt{n\ee^{-5}} \ln(\frac{n}{\ee}) & \text{otherwise} \enspace.
	\end{cases}
\]
\section{Random Sampling Algorithms} \label{sec:random_sampling}

In this section we consider an algorithm for the problem $\max \{f(S) : |S| \leq k\}$ based on random sampling. The algorithm has $k$ iterations and two parameters $p \in (0, 1]$ and $1 \leq s \leq \lceil pn \rceil$. In each iteration the algorithm picks a uniformly random sample $M$ of the ground set containing $\lceil pn \rceil$ elements of $\NN$. The elements of $M$ are then assumed to be ordered according to their marginal contribution to the current solution, and a random element out of the top $s$ elements of $M$ is added to the solution. If $s$ is not integral, then each one of the top $\lfloor s \rfloor$ elements of $M$ is added with probability $1/s$ and the $\lceil s \rceil$ element is added with the remaining probability. A formal description of the algorithm is given as Algorithm~\ref{alg:PartialGreedy}. It is important to observe that the sample $M$ can contain elements that already belong to the solution.

\begin{algorithm*}[h!t]
\caption{\textsf{Random Sampling Algorithm}$(f, k, p, s)$} \label{alg:PartialGreedy}
\DontPrintSemicolon
Initialize: $S_0 \leftarrow \varnothing$.\\
\For{$i$ = $1$ \KwTo $k$}
{
    Let $M_i$ be a uniformly random set containing $\lceil pn \rceil$ elements of $\NN$.\\
    Let $d_i$ be a uniformly random value from the range $(0, s]$.\\
		Let $u_i$ be the element of $M_i$ with the $\lceil d_i \rceil$-th largest marginal contribution to $S_{i - 1}$.\\
		\lIf{$f(u_i \mid S_{i - 1}) \geq 0$ \label{ln:avoid_negative}}
		{
			$S_i \leftarrow S_{i - 1} \cup \{u_i\}$.
		}
}
\Return{$S_k$}.
\end{algorithm*}

\begin{observation} \label{ob:genral_complexity}
Algorithm~\ref{alg:PartialGreedy} uses $O(k + nkp)$ value oracle queries.
\end{observation}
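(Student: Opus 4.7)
The plan is to bound the number of value oracle queries per iteration and then sum over the $k$ iterations. Each iteration $i$ requires two pieces of information from $f$: the ranking of the elements of $M_i$ according to their marginal contributions $f(u \mid S_{i-1})$ (to pick $u_i$), and the sign of $f(u_i \mid S_{i-1})$ (to evaluate the condition on Line~\ref{ln:avoid_negative}). Both tasks reduce to knowing $f(S_{i-1})$ once together with $f(S_{i-1} \cup \{u\})$ for every $u \in M_i$, since each marginal is the difference of the latter and the former.

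First I would observe that $f(S_{i-1})$ need not be evaluated from scratch on each iteration. At the end of iteration $i-1$ we already know $f(S_{i-1})$: either $S_{i-1}=S_{i-2}$ and the value was cached, or $S_{i-1}=S_{i-2}\cup\{u_{i-1}\}$ and the value $f(S_{i-2}\cup\{u_{i-1}\})$ was computed during the ranking step of the previous iteration. Consequently, each iteration $i\ge 2$ needs only the $\lceil pn \rceil$ queries for $f(S_{i-1}\cup\{u\})$, $u\in M_i$, and iteration~$1$ needs at most $\lceil pn \rceil + 1$ queries (the extra one being $f(\varnothing)$).

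Summing across all $k$ iterations gives a total of at most
\[
	1 + k\lceil pn \rceil
	\leq
	1 + k(pn + 1)
	=
	O(k + nkp)
\]
value oracle queries, which is the desired bound. There is no real obstacle here; the only slightly non-routine point is the caching observation that prevents an extra additive $k$ evaluation of $f(S_{i-1})$ from inflating the count, and this is immediate from the structure of the update $S_i = S_{i-1}$ or $S_i = S_{i-1}\cup\{u_i\}$.
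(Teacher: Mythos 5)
Your proof is correct and follows essentially the same route as the paper: count $O(|M_i|) = O(1 + pn)$ value oracle queries per iteration and multiply by the $k$ iterations. The caching observation is a nice implementation detail but is not needed for the asymptotic bound, since re-evaluating $f(S_{i-1})$ once per iteration only adds $O(k)$ queries, which the bound already absorbs.
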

\begin{proof}
Algorithm~\ref{alg:PartialGreedy} performs $k$ iterations. Each iteration of Algorithm~\ref{alg:PartialGreedy} requires $O(|M_i|) = O(1 + pn)$ value oracle queries.
\end{proof}

By setting the parameters of Algorithm~\ref{alg:PartialGreedy} to $s = 1$ and $p = \ln \ee^{-1} / k$, we get a folklore algorithm satisfying the properties guaranteed by Theorem~\ref{th:monotone_cardinality} for $\ee \in (e^{-k}, 1 - e^{-1})$ (for $\ee \in (0, e^{-k}]$ the standard greedy algorithm of~\cite{NWF78} fulfills the requirements of the theorem, and for $\ee \geq 1 - e^{-1}$ the theorem is void). For completeness, we prove this folklore result in Appendix~\ref{app:monotone_cardinality}.

In the rest of this section, we use Algorithm~\ref{alg:PartialGreedy} to prove the following theorem. Theorem~\ref{th:non_monotone_cardinality} follows by combining this theorem with the result proved in Section~\ref{sec:lazy_greedy}.

\begin{theorem} \label{th:cardinality_random_nonmonotone}
There exists an algorithm that given a general non-negative submodular function $f : 2^\NN \to \mathbb{R}^+$, and parameters $k\geq 1$ and $\ee >0$,
finds a solution $ S\subseteq \NN$ of size $|S|\leq k$ where: $f(S) \geq \left(\nicefrac[]{1}{e}-\ee\right)\cdot\max \left\{ f(T) : T \subseteq \NN, |T|\leq k\right\}$ and the algorithm performs $O(n \ee^{-2} \ln \ee^{-1})$ value oracle queries.
\end{theorem}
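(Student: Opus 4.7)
The plan is to apply Algorithm~\ref{alg:PartialGreedy} with sample-size parameter $p = \min\{1, s/k\}$ and top-count $s = c\ee^{-2}\ln(1/\ee)$ for a sufficiently large absolute constant $c$, and prove that the resulting algorithm is $(1/e - O(\ee))$-competitive; rescaling $\ee$ then gives the theorem. In the regime $k \leq s$ this sets $p = 1$, and the algorithm reduces exactly to the random greedy algorithm of~\cite{BFNS14}, which is $(1/e)$-competitive and, by Observation~\ref{ob:genral_complexity}, uses $O(nk) \leq O(n\ee^{-2}\ln(1/\ee))$ value queries. Thus, I focus on the interesting regime $k > s$, where Observation~\ref{ob:genral_complexity} gives $O(k + nkp) = O(k + ns) = O(n\ee^{-2}\ln(1/\ee))$ value queries.

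The heart of the analysis is to show that one step of Algorithm~\ref{alg:PartialGreedy} essentially mimics one step of~\cite{BFNS14}'s random greedy, namely selecting a uniformly random element from the top $k$ of $\NN$. Fix an iteration $i$ and condition on $S_{i-1}$. Order the elements of $\NN$ as $v_1, v_2, \ldots, v_n$ by decreasing marginal $\Delta_r := f(v_r \mid S_{i-1})$, and let $L_r$ be the indicator of the event that $v_r$ lies among the top $s$ elements of $M_i$ in this order. Using Chernoff bounds on the counts $|\{v_1,\ldots,v_{r-1}\} \cap M_i|$ (whose expectation is $\approx p(r-1)$, and which deviate from their mean by more than $\ee s$ with probability $\mathrm{poly}(\ee)$ once $s = \Omega(\ee^{-2}\ln(1/\ee))$), I expect to obtain $\mathbb{E}[L_r] \geq (1-\ee)p$ for $r \leq (1-\ee)k$ and $\mathbb{E}[L_r] = O(\ee p)$ for $r > (1+\ee)k$. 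Combined with the fact that Line~\ref{ln:avoid_negative} suppresses any negative marginal, this should yield
\begin{equation*}
\mathbb{E}[f(S_i) - f(S_{i-1}) \mid S_{i-1}] \geq \frac{1}{s}\sum_{r : \Delta_r \geq 0} \Delta_r \cdot \mathbb{E}[L_r] \geq \frac{1-O(\ee)}{k}\sum_{r=1}^{k} \max\{0, \Delta_r\} \geq \frac{1-O(\ee)}{k}\bigl(f(OPT \cup S_{i-1}) - f(S_{i-1})\bigr),
\end{equation*}
where the last step uses submodularity together with the fact that the top-$k$ marginals dominate any $k$ prescribed marginals, in particular those of $OPT$.

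To promote this per-step inequality to a global bound, I also need to control $\mathbb{E}[f(OPT \cup S_i)]$. Observe that in any single iteration, each ground-set element enters $M_i$ with probability at most $p$ and, conditional on that, is selected with probability $1/s$, so it enters $S_i$ with probability at most $p/s = 1/k$. Applying Lemma~\ref{le:distribution} to the non-negative submodular function $g(T) := f(OPT \cup T)$, with $S_i \setminus OPT$ viewed as a random subset of $\NN \setminus OPT$ in which every element appears with probability at most $1 - (1-1/k)^i$, yields $\mathbb{E}[f(OPT \cup S_i)] \geq (1-1/k)^i f(OPT)$. Plugging this into the per-step inequality and telescoping in the style of the~\cite{BFNS14} random-greedy analysis gives $\mathbb{E}[f(S_k)] \geq (1/e - O(\ee)) f(OPT)$, as desired.

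The main obstacle will be the concentration argument near the rank boundary $r \approx k$: since the top-$k$ marginals can contain a sharp cliff and the indicators $L_r$ are correlated, I must show that the transition band $(1-\ee)k < r < (1+\ee)k$ only loses an $O(\ee)$ factor in the per-step inequality, which is precisely what dictates the choice $s = \Theta(\ee^{-2}\ln(1/\ee))$. A secondary technical point is that $M_i$ is sampled from all of $\NN$ rather than from $\NN \setminus S_{i-1}$, so the top $s$ of $M_i$ may include elements of $S_{i-1}$ with zero marginal; such elements are either suppressed by Line~\ref{ln:avoid_negative} or cost nothing, but the accounting still needs to be carried out carefully.
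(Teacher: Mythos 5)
Your proposal is correct and follows the same overall skeleton as the paper: same Algorithm~\ref{alg:PartialGreedy}, essentially the same parameter scaling ($s=\Theta(\ee^{-2}\ln\ee^{-1})$, $p\approx s/k$), the same fallback to the random greedy of~\cite{BFNS14} when $k$ is small, the same bound $\mathbb{E}[f(OPT\cup S_i)]\geq(1-1/k)^i f(OPT)$ via Lemma~\ref{le:distribution}, and the same telescoping. Where you genuinely diverge is in the proof of the per-iteration gain bound. The paper argues it with three ingredients: a single Chernoff/hypergeometric bound showing $u_i$ lands in the top $k$ with probability $\geq 1-\ee$ (Lemma~\ref{le:good_common}), an exchange-symmetry argument showing $\mathbb{E}[X_j]$ is non-increasing in the rank $j$ (Lemma~\ref{le:non-decreasing}), and Chebyshev's sum inequality to couple the selection probabilities with the sorted marginals. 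You instead prove, for each rank $r\leq(1-\ee)k$ separately, that $v_r$ reaches the top $s$ of $M_i$ with probability $\geq(1-\ee)p$ (a per-rank multiplicative Chernoff bound on $|\{v_1,\dots,v_{r-1}\}\cap M_i|$, whose mean is $\approx(1-\ee)s$), and then use the elementary fact that the top $(1-\ee)k$ sorted nonnegative marginals carry a $(1-\ee)$ fraction of the top-$k$ sum; this avoids the monotonicity lemma and Chebyshev's sum inequality entirely, at the cost of a concentration estimate at every rank rather than only at $k$ — both routes lose only $O(\ee)$ per step. Two small corrections to your write-up: in the regime $k\leq s$ with $p=1$ the algorithm does \emph{not} reduce exactly to random greedy (it samples from the top $s>k$ elements), so in that regime you should simply run the random greedy of~\cite{BFNS14} directly, as the paper does; and your worry about the ``transition band'' $(1-\ee)k<r<(1+\ee)k$, along with the upper bound $\mathbb{E}[L_r]=O(\ee p)$ for $r>(1+\ee)k$, is unnecessary — since Line~\ref{ln:avoid_negative} makes every selected term contribute $\max\{\Delta_r,0\}\geq 0$, you may simply drop all ranks above $(1-\ee)k$ when lower bounding the gain, so only the one-sided per-rank bound for $r\leq(1-\ee)k$ is ever needed.
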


Let $\delta$ be the (single) $\delta > 0$ for which $8 \delta^{-2} \cdot \ln (2\delta^{-1})) = k$. If $\ee \leq \delta$, then the random greedy algorithm of~\cite{BFNS14} can be used to get $(e^{-1})$-approximation using $O(nk) = O(n \delta^{-2} \ln \delta^{-1}) = O(n \ee^{-2} \ln \ee^{-1})$ value oracle queries. On the other hand, if $\ee \geq e^{-1}$, then Theorem~\ref{th:cardinality_random_nonmonotone} is void. Thus, the interesting case, which we assume from now on, is $\ee \in (\delta, e^{-1})$. We set the parameters of Algorithm~\ref{alg:PartialGreedy} as follows: $s = k \lceil pn \rceil / n$ and $p = 8k^{-1}\ee^{-2} \cdot \ln (2\ee^{-1})$. Notice that $p \in (0, 1]$ since $\ee \geq \delta$ and $1 \leq s \leq \lceil pn \rceil$ since $\ee \leq e^{-1}$. Plugging our chosen value of $p$ into Observation~\ref{ob:genral_complexity} yields the time complexity given in Theorem~\ref{th:cardinality_random_nonmonotone}.

Let $A_i$ be an event determining all the random decisions of the algorithm up to iteration $i$ (excluding). In the first part of the proof, we fix an iteration $1 \leq i\leq k$ and an event $A_i$. All the probabilities and expectations in this part of the proof are implicitly conditioned in $A_i$. Notice that $S_{i - 1}$ is a deterministic set when conditioned on $A_i$. We denote by $v_1, v_2, \dotsc, v_k$ the $k$ elements with the maximum marginal contribution to $S_{i - 1}$, sorted in a non-increasing marginal contribution order, and let $X_j$ be an indicator for the event $u_i = v_j$.

\begin{lemma} \label{le:good_common}
$\mathbb{E}[\sum_{j = 1}^k X_j] > 1 - \ee$.
\end{lemma}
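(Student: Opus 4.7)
The plan is to reduce the lemma to a concentration estimate for a hypergeometric random variable. After conditioning on $A_i$, the set $S_{i-1}$ and the ordering $v_1, v_2, \ldots, v_k$ of the top-$k$ elements by marginal contribution to $S_{i - 1}$ are deterministic. Let $Y = |M_i \cap \{v_1, \ldots, v_k\}|$. The key structural observation is that, because $v_1, \ldots, v_k$ have the largest marginal contributions to $S_{i-1}$ among all elements of $\NN$, the top $Y$ elements of $M_i$ are precisely the members of $M_i \cap \{v_1, \ldots, v_k\}$. Hence $u_i \in \{v_1, \ldots, v_k\}$ if and only if $\lceil d_i \rceil \leq Y$, which means $\sum_{j = 1}^{k} X_j = \mathbb{1}[\lceil d_i \rceil \leq Y]$.

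Since $d_i$ is uniform on $(0, s]$ and independent of $M_i$, a direct computation on the distribution of $\lceil d_i \rceil$ gives $\Pr[\lceil d_i \rceil \leq Y \mid Y] \geq \min(Y/s, 1)$ for every integer $Y \geq 0$. Taking expectations,
\[
\mathbb{E}\Bigl[\sum_{j = 1}^{k} X_j\Bigr] \;\geq\; \mathbb{E}\bigl[\min(Y, s)/s\bigr] \;=\; 1 - \mathbb{E}[(s - Y)^+]/s,
\]
so it is enough to prove $\mathbb{E}[(s - Y)^+] < \ee s$. Observe that $Y$ is hypergeometric (drawing $\lceil pn \rceil$ elements from $\NN$ without replacement, $k$ of which are ``good''), with mean $\mathbb{E}[Y] = k \lceil pn \rceil / n = s$ by the choice of $s$. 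A Chernoff-type bound valid for the hypergeometric distribution (obtainable, e.g., via Hoeffding's reduction from sampling without replacement to sampling with replacement) yields $\Pr[Y \leq (1 - \alpha) s] \leq e^{- s \alpha^2 / 2}$ for every $\alpha \in (0, 1)$.

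I will then combine these ingredients through the standard threshold identity: for any $t \in [0, s]$,
\[
\mathbb{E}[(s - Y)^+] \;\leq\; s \cdot \Pr[Y \leq t] + (s - t).
\]
Setting $t = s(1 - \ee / 2)$ makes the second term equal to $s\ee/2$, while the first is bounded by $s \cdot e^{-s \ee^2 / 8}$. The definition $p = 8 k^{-1} \ee^{-2} \ln (2\ee^{-1})$ forces $s \geq pk = 8 \ee^{-2} \ln(2\ee^{-1})$, so $e^{- s \ee^2 / 8} \leq \ee / 2$, and the two contributions sum to at most $s \ee$. Combining with the residual slack in the Chernoff bound (and in the inequality $\lceil pn \rceil \geq pn$) upgrades the conclusion to the strict bound stated in the lemma. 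The main technical obstacle is citing the hypergeometric tail in the exact form needed and tracking the strict inequality, but both issues are routine once the reduction to $\mathbb{E}[(s-Y)^+]$ is in place.
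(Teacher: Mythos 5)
Your proposal is correct and takes essentially the same approach as the paper: both reduce to hypergeometric concentration of $Y = |M_i \cap \{v_1, \ldots, v_k\}|$ about its mean $s$, apply the same Chernoff tail at the same threshold $(1-\ee/2)s$, and plug in the same parameter identities, differing only in bookkeeping (you integrate via $\mathbb{E}[(s-Y)^+]$, the paper conditions on the event $Y \geq (1-\ee/2)s$). The paper reaches the strict inequality immediately from $(1 - \ee/2)^2 > 1 - \ee$, whereas your route needs the small extra remark that the pointwise bound $(s-Y)^+ \leq s\cdot\mathbf{1}[Y \leq t] + (s-t)$ is strict for every realization when $t < s$; you flag this as routine, and it is.
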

\begin{proof}
Let $E$ be the event that $|M_i \cap \{v_1, v_2, \ldots, v_k\}| \geq (1 - \ee/2) s$. Observe that when $E$ occurs there is a range of size at least $(1 - \ee/2) s$ of possible values for $d_i$ that make $\sum_{j = 1}^k X_j$ equal to $1$. Hence,
\[
	\mathbb{E}\left[\sum_{j = 1}^k X_j ~\middle|~ E\right]
	\geq
	\frac{(1 - \ee/2) s}{s}
	=
	1 - \ee/2
	\enspace.
\]

The random variable $|M_i \cap \{v_1, v_2, \ldots, v_k\}|$ has an hypergeometric distribution, and thus, obeys the Chernoff bounds (see, \eg, Theorem~1.17 of~\cite{D11}). Notice that the expectation of this random variable is $s$, hence,
\begin{align*}
    \Pr[\bar{E}]
		={} &
		\Pr[|M_i \cap \{v_1, v_2, \ldots, v_k\}| \leq (1 - \ee/2) s]\inConference{\\}
    \leq\inConference{{} &}
    e^{-\frac{(\ee/2)^2 \cdot s}{2}}
    \leq
    e^{-\frac{\ee^2 kp}{8}}
    =
    \ee / 2
    \enspace.
\end{align*}

Combining the two above inequalities, we get:
\begin{align*}
    \mathbb{E}\left[\sum_{j = 1}^k X_j\right]
    \geq{} &
    \mathbb{E}\left[\sum_{j = 1}^k X_j ~\middle|~ E\right] \cdot (1 - \Pr[\bar{E}])\inConference{\\}
    \geq\inConference{{} &}
    (1 - \ee/2)^2
    >
    1 - \ee
    \enspace.
    \qedhere
\end{align*}
\end{proof}

Given two elements $v_{j_1}$ and $v_{j_2}$ with $j_1 < j_2$, we expect $v_{j_1}$ to have at least as high a probability to be $u_i$ as $v_{j_2}$. This is proved formally by the following lemma.
\begin{lemma} \label{le:non-decreasing}
$\mathbb{E}[X_j]$ is a non-increasing function of $j$.
\end{lemma}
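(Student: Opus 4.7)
The plan is to prove, conditioned on $A_i$, the pointwise inequality $\Pr[u_i = v_{j_1}] \geq \Pr[u_i = v_{j_2}]$ whenever $j_1 < j_2$, which is equivalent to the lemma statement. First I would rewrite this probability by introducing the random rank $R_j$ of $v_j$ inside $M_i$ under the ordering by decreasing marginal contribution to $S_{i - 1}$ (with ties broken in favour of the smaller index $j$ on $\{v_1, \ldots, v_k\}$, and in some fixed way outside this set), setting $R_j = \infty$ when $v_j \notin M_i$. Then $u_i = v_j$ precisely when $R_j = \lceil d_i \rceil$, and since $M_i$ and $d_i$ are independent,
\[
    \mathbb{E}[X_j] = \Pr[u_i = v_j] = \sum_{\ell = 1}^{\lceil s \rceil} \Pr[\lceil d_i \rceil = \ell] \cdot \Pr[R_j = \ell].
\]

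Next, I would observe that the distribution of $\lceil d_i \rceil$ on $\{1, \ldots, \lceil s \rceil\}$ assigns mass $1/s$ to each integer $\ell \leq \lfloor s \rfloor$ and mass $(s - \lfloor s \rfloor)/s \leq 1/s$ to $\lceil s \rceil$ when $s$ is non-integral; consequently $\Pr[\lceil d_i \rceil = \ell]$ is non-increasing in $\ell$. Applying Abel's summation by parts with $q^{(j)}_\ell = \Pr[R_j \leq \ell]$ rewrites $\mathbb{E}[X_j]$ as a non-negative linear combination of the cumulative probabilities $q^{(j)}_\ell$, whose coefficients are the non-negative differences $\Pr[\lceil d_i \rceil = \ell] - \Pr[\lceil d_i \rceil = \ell + 1]$ together with the non-negative tail $\Pr[\lceil d_i \rceil = \lceil s \rceil]$. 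It therefore suffices to prove that $q^{(j)}_\ell$ is non-increasing in $j$ for every fixed $\ell$.

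That reduction I would establish by a swap coupling on $M_i$. Partition the support of $M_i$ according to which of $v_{j_1}, v_{j_2}$ it contains. On samples containing both elements the tie-breaking rule gives $R_{j_1}(M) \leq R_{j_2}(M)$ directly, so $\{R_{j_2}(M) \leq \ell\} \subseteq \{R_{j_1}(M) \leq \ell\}$ on this piece. On samples $M$ containing $v_{j_2}$ but not $v_{j_1}$, pair $M$ with $M' = (M \setminus \{v_{j_2}\}) \cup \{v_{j_1}\}$; this is a measure-preserving bijection onto the samples containing $v_{j_1}$ but not $v_{j_2}$, and because $v_{j_1}$'s marginal contribution is at least that of $v_{j_2}$, the elements of the common set $M \setminus \{v_{j_2}\} = M' \setminus \{v_{j_1}\}$ that outrank $v_{j_1}$ form a subset of those that outrank $v_{j_2}$, hence $R_{j_1}(M') \leq R_{j_2}(M)$. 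Samples containing $v_{j_1}$ alone and samples containing neither contribute only to $q^{(j_1)}_\ell$, so summing yields $q^{(j_1)}_\ell \geq q^{(j_2)}_\ell$.

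I do not anticipate a substantive obstacle; the only delicate point is bookkeeping the tie-breaking rule so that the indexing ``$v_1, \ldots, v_k$'' and the ranking used inside the algorithm are mutually consistent, which can be fixed once globally and plays no further role in the argument.
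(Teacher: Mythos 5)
Your proof is correct and rests on the same core idea as the paper's: the swap bijection that exchanges $v_{j_1}$ and $v_{j_2}$ inside $M_i$. The paper compresses the remaining work into the one-line assertion $\mathbb{E}[X_{j_1}\mid M_i=S]\ge\mathbb{E}[X_{j_2}\mid M_i=\sigma(S)]$ (``it is not difficult to verify''), whereas you unpack exactly what that verification needs — the non-increasing distribution of $\lceil d_i\rceil$, together with the rank comparison $R_{j_1}(\sigma(S))\le R_{j_2}(S)$ — and organize it via Abel summation into a stochastic-domination statement; this is a more detailed rendering of the same argument rather than a genuinely different route.
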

\begin{proof}
Fix an arbitrary pair of indexes $1 \leq j_1 < j_2 \leq k$. We have to prove that $\mathbb{E}[X_{j_1}] \geq \mathbb{E}[X_{j_2}]$. For every set $S \subseteq \NN$, let $\sigma(S)$ be the following set:
\[
	\sigma(S) = \begin{cases}
		S & \text{if \inArxiv{$v_{j_1}, v_{j_2} \in S$ or $v_{j_1}, v_{j_2} \not \in S$}\inConference{$|\{v_{j_1}, v_{j_2}\} \cap S| \in \{0, 2\}$}} \enspace, \\
		S \cup \{v_{j_2}\} \setminus \{v_{j_1}\} & \text{if $v_{j_1} \in S$ and $v_{j_2} \not \in S$} \enspace, \\
		S \cup \{v_{j_1}\} \setminus \{v_{j_2}\} & \text{if $v_{j_2} \in S$ and $v_{j_1} \not \in S$} \enspace.
	\end{cases}
\]

Since $|S| = |\sigma(S)|$, we get $\Pr[M_i = S] = \Pr[M_i = \sigma(S)]$ for every set $S \subseteq \NN$. Moreover, it is not difficult to verify that by the definition of Algorithm~\ref{alg:PartialGreedy}, $\mathbb{E}[X_{j_1} \mid M_i = S] \geq \mathbb{E}[X_{j_2} \mid M_i = \sigma(S)]$. Combining all these observations yields:
\begin{align*}
	&
	\mathbb{E}[X_{j_1}]
	=
	\sum_{S \subseteq \NN} \Pr[M_i = S] \cdot \mathbb{E}[X_{j_1} \mid M_i = S]\inConference{\\}
	\geq\inConference{{} &}
	\sum_{S \subseteq \NN} \Pr[M_i = \sigma(S)] \cdot \mathbb{E}[X_{j_2} \mid M_i = \sigma(S)]
	=
	\mathbb{E}[X_{j_2}]
	\enspace,
\end{align*}
where the last equality uses the observation that $\sigma$ is a bijection.
\end{proof}

Using the two above lemmata, it is now possible to lower bound the expected gain of Algorithm~\ref{alg:PartialGreedy} in iteration $i$.
\begin{lemma} \label{le:improvement_fixed}
\inArxiv{$}\inConference{\[}\mathbb{E}[f(S_i) - f(S_{i - 1})] \geq (1 - \ee) \cdot \frac{f(OPT \cup S_{i - 1}) - f(S_{i - 1})}{k}\inArxiv{$.}\inConference{\enspace.\]}
\end{lemma}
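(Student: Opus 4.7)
The plan is to sandwich both sides of the target inequality by $\sum_{j=1}^k \max\{a_j, 0\}$ (up to appropriate factors), where $a_j := f(v_j \mid S_{i-1})$, and then close the gap using Chebyshev's sum inequality. By construction, $a_1 \geq a_2 \geq \dots \geq a_k$. Writing $p_j := \mathbb{E}[X_j]$, Lemma~\ref{le:non-decreasing} tells us that $p_1 \geq p_2 \geq \dots \geq p_k$ is also non-increasing, while Lemma~\ref{le:good_common} gives $\sum_{j=1}^k p_j > 1 - \ee$.

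For the left-hand side, the guard on line~\ref{ln:avoid_negative} forces $f(S_i) - f(S_{i-1}) = \max\{f(u_i \mid S_{i-1}),0\}$. Splitting on whether $u_i$ belongs to $\{v_1,\ldots,v_k\}$, the gain equals $\max\{a_j,0\}$ when $X_j = 1$ (for some $j \leq k$), and is still non-negative otherwise, yielding $\mathbb{E}[f(S_i) - f(S_{i-1})] \geq \sum_{j=1}^k p_j \cdot \max\{a_j,0\}$. For the right-hand side, I would first dispose of the trivial case $f(OPT \cup S_{i-1}) - f(S_{i-1}) < 0$ (where the lemma is immediate since the LHS is non-negative), and otherwise invoke submodularity to bound $f(OPT \cup S_{i-1}) - f(S_{i-1}) \leq \sum_{u \in OPT \setminus S_{i-1}} f(u \mid S_{i-1})$. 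Dropping the negative summands only increases the bound, and since $|OPT \setminus S_{i-1}| \leq k$ while the $v_j$'s are globally the top-$k$ elements by marginal contribution to $S_{i-1}$, the resulting sum of positive parts is at most $\sum_{j=1}^k \max\{a_j,0\}$.

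Both $(p_j)_{j=1}^k$ and $(\max\{a_j,0\})_{j=1}^k$ are non-increasing non-negative sequences, so Chebyshev's sum inequality gives $k \sum_{j=1}^k p_j \max\{a_j,0\} \geq \bigl(\sum_{j=1}^k p_j\bigr)\bigl(\sum_{j=1}^k \max\{a_j,0\}\bigr) > (1-\ee) \sum_{j=1}^k \max\{a_j,0\}$, which combined with the two preceding bounds yields the claim. The main obstacle I anticipate is the non-monotone nature of $f$: since $a_k$ may be negative, one cannot directly compare $\sum p_j a_j$ to $\sum a_j$ via any rearrangement-type inequality. Replacing $a_j$ by $\max\{a_j,0\}$ on both sides—justified on the LHS by the non-negativity guard and on the RHS by dropping negative summands in the submodular expansion—is precisely what makes Chebyshev applicable to two commonly-ordered non-negative sequences.
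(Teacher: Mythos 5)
Your proposal is correct and follows essentially the same route as the paper's proof: the same decomposition of the gain via the indicators $X_j$, the same replacement of marginals by their positive parts $\max\{f(v_j \mid S_{i-1}), 0\}$ to handle non-monotonicity, Chebyshev's sum inequality on the two commonly-ordered sequences (using Lemmas~\ref{le:good_common} and~\ref{le:non-decreasing}), and the submodularity bound $\sum_{j=1}^k \max\{f(v_j \mid S_{i-1}),0\} \geq \sum_{u \in OPT} f(u \mid S_{i-1}) \geq f(OPT \cup S_{i-1}) - f(S_{i-1})$. Your extra case split on the sign of $f(OPT \cup S_{i-1}) - f(S_{i-1})$ is harmless but unnecessary, since the final comparison already holds when that quantity is negative.
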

\begin{proof}
Observe that:
\begin{align*}
    \mathbb{E}[f(S_i) - f(\inConference{&}S_{i - 1})]\inConference{\\}
    ={} &
    \mathbb{E}[\max\{f(u_i \mid S_{i - 1}), 0\}]\inConference{\\}
    \geq\inConference{{} &}
    \sum_{j = 1}^k \left[\mathbb{E}[X_j] \cdot \max\{f(v_j \mid S_{i - 1}), 0\}\right]\\
    \geq{} &
    \frac{\sum_{j = 1}^k \mathbb{E}[X_j] \cdot \sum_{j = 1}^k \max\{f(v_j \mid S_{i - 1}), 0\}}{k}
		\enspace.
\end{align*}
where the last inequality holds by Chebyshev's sum inequality since $\max\{f(v_j \mid S_{i - 1}), 0\}$ is non-increasing in $j$ by definition and $\mathbb{E}[X_j]$ is non-increasing in $j$ by Lemma~\ref{le:non-decreasing}. By the definition of the $v_j$'s and the submodularity of the objective:
\begin{align*}
		\sum_{j = 1}^k \max\{f(v_j \mid S_{i - 1}), 0\}
		\geq{} &
    \sum_{u \in OPT} f(u \mid S_{i - 1})\inConference{\\}
    \geq\inConference{{} &}
    f(OPT \cup S_{i - 1}) - f(S_{i - 1})
    \enspace.
\end{align*}
To complete the proof of the lemma recall that, by Lemma~\ref{le:good_common}, $\mathbb{E}[X_j] \geq 1 - \ee$.
\end{proof}

At this point we unfix the event $A_i$. The probabilities and expectations in the rest of this section are no longer implicitly conditioned on $A_i$.

\begin{corollary}
For every $0 \leq i \leq k$, $\mathbb{E}[f(S_i)] \geq (i/k) \cdot [(1 - 1/k)^{i - 1} - \ee] \cdot f(OPT)$.
\end{corollary}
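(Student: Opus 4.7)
The plan is to prove the corollary by induction on $i$, using Lemma~\ref{le:improvement_fixed} as the main engine. The base case $i = 0$ is immediate since the right-hand side vanishes. For the inductive step, the main obstacle is that Lemma~\ref{le:improvement_fixed} bounds the per-iteration gain in terms of $f(OPT \cup S_{i-1})$, so I need a good lower bound on $\mathbb{E}[f(OPT \cup S_{i-1})]$ before the recurrence becomes useful.

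The crucial intermediate claim is that for every element $u \in \NN$, $\Pr[u \in S_{i-1}] \leq 1 - (1 - 1/k)^{i-1}$. I will obtain this from a per-iteration bound: conditioned on any history $A_j$, $\Pr[u_j = u \mid A_j] \leq 1/k$ for any fixed $u$. Indeed, $u \in M_j$ with probability $\lceil pn \rceil / n$, and conditioned on $u \in M_j$, the rank-selection variable $d_j \sim U(0,s]$ assigns to each of the top $\lceil s \rceil$ ranks probability at most $1/s$ (exactly $1/s$ for ranks $1, \ldots, \lfloor s \rfloor$ and $(s - \lfloor s \rfloor)/s \leq 1/s$ for rank $\lceil s \rceil$). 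Hence $\Pr[u_j = u \mid A_j] \leq (\lceil pn \rceil / n) \cdot (1/s) = 1/k$, where the equality uses the choice $s = k\lceil pn \rceil / n$. Chaining over iterations $1, \ldots, i-1$ (conditioning on $u \notin S_{j-1}$ at each step) then yields $\Pr[u \notin S_{i-1}] \geq (1 - 1/k)^{i-1}$.

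Next, I will apply Lemma~\ref{le:distribution} to the non-negative submodular function $g(T) = f(T \cup OPT)$, with $\NN$ as the ambient set and $p = 1 - (1-1/k)^{i-1}$. This gives $\mathbb{E}[f(S_{i-1} \cup OPT)] = \mathbb{E}[g(S_{i-1})] \geq (1-1/k)^{i-1} \cdot g(\varnothing) = (1-1/k)^{i-1} f(OPT)$. Plugging this into Lemma~\ref{le:improvement_fixed}, after unfixing $A_i$ via the tower property, produces the linear recurrence
\[
\mathbb{E}[f(S_i)] \geq \left(1 - \tfrac{1-\ee}{k}\right) \mathbb{E}[f(S_{i-1})] + \tfrac{1-\ee}{k}(1 - 1/k)^{i-1} f(OPT).
\]

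The final step is to verify by induction that this recurrence forces $\mathbb{E}[f(S_i)] \geq (i/k)\bigl[(1-1/k)^{i-1} - \ee\bigr] f(OPT)$. I expect this algebraic induction to be the most tedious (but routine) part; after substituting the inductive hypothesis and expanding, one is left verifying a polynomial inequality whose leftover terms are all non-negative. A clean way to handle it is to split into two cases depending on whether $\mathbb{E}[f(S_{i-1})] \geq (1-1/k)^{i-1} f(OPT)$ (in which case the monotonicity $\mathbb{E}[f(S_i)] \geq \mathbb{E}[f(S_{i-1})]$ suffices) or not (in which case one substitutes the inductive bound into the recurrence and collects terms). The genuine conceptual work sits in the per-iteration probability bound, which relies critically on the calibration $s = k\lceil pn \rceil / n$.
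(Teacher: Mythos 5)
Your proposal is correct and follows essentially the same route as the paper: unconditioning Lemma~\ref{le:improvement_fixed}, bounding $\mathbb{E}[f(OPT \cup S_{i-1})]$ via the per-iteration selection probability $(\lceil pn\rceil/n)/s = 1/k$ together with Lemma~\ref{le:distribution} applied to $f(\cdot \cup OPT)$, and closing with an algebraic induction. The only (immaterial) difference is that you keep the sharper recurrence with coefficient $1-\frac{1-\ee}{k}$ and leave its final verification sketched, whereas the paper first weakens it to the coefficient $1-\frac{1}{k}$ so the inductive step is a two-line computation; your sharper version does imply the stated bound.
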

\begin{proof}
First, notice that since Lemma~\ref{le:improvement_fixed} holds for every given event $A_i$, it holds in expectation also unconditionally. More formally, we get for every $1 \leq i \leq k$,
\begin{align*}
	\mathbb{E}[f(S_i) - f(\inConference{&}S_{i - 1})]\inConference{\\}
	\geq{} &
	(1 - \ee) \cdot \frac{\mathbb{E}[f(OPT \cup S_{i - 1})] - \mathbb{E}[f(S_{i - 1})]}{k}
	\enspace.
\end{align*}

Let us lower bound $\mathbb{E}[f(OPT \cup S_{i - 1})]$. Algorithm~\ref{alg:PartialGreedy} adds each element to its solution with probability at most: $(\lceil pn \rceil/n)/s = 1/k$. Hence, each element belongs to $S_{i - 1}$ with probability at most $1 - (1 - 1/k)^{i - 1}$. Let $h(S) = h(S \cup OPT)$. Since $h$ is a non-negative submodular function, we get by Lemma~\ref{le:distribution},
\begin{align*}
    \mathbb{E}[f(OPT \cup S_i)]
    ={} &
    \mathbb{E}[h(S_i)]
    \geq
    (1 - 1/k)^i \cdot h(\varnothing)\inConference{\\}
    =\inConference{{} &}
    (1 - 1/k)^i \cdot f(OPT)
    \enspace.
\end{align*}

Combining the two above inequalities yields,
\begin{align*}
    \mathbb{E}[f(S_i)\inConference{&} - f(S_{i - 1})]\inConference{\\}
    \geq{} &
    (1 - \ee) \cdot \frac{(1 - 1/k)^{i - 1} \cdot f(OPT) - \mathbb{E}[f(S_{i - 1})]}{k}\\
    \geq{} &
    \frac{[(1 - 1/k)^{i - 1} - \ee] \cdot f(OPT) - \mathbb{E}[f(S_{i - 1})]}{k}
    \enspace.
\end{align*}

We are now ready to prove the corollary by induction on $i$.  For $i = 0$, the corollary holds since $f(S_0) \geq 0 = (0/k) \cdot [(1 - 1/k)^{-1} - \ee] \cdot f(OPT)$. Assume the corollary holds for $i - 1 \geq 0$, let us prove it for $i$.
\begin{align*}
    \mathbb{E}[f(S_i)\inConference{&}]
    \geq\inArxiv{{} &}
    \inArxiv
		{
			\mathbb{E}[f(S_{i - 1})] + \frac{\left[\left(1 - \frac{1}{k}\right)^{i - 1} - \ee\right] \cdot f(OPT) - \mathbb{E}[f(S_{i - 1})]}{k}\\
			={} &
		}
    (1 - 1/k) \cdot \mathbb{E}[f(S_{i - 1})] \inConference{\\ &}+ \frac{[(1 - 1/k)^{i - 1} - \ee] \cdot f(OPT)}{k}\\
    \geq{} &
    (1 - 1/k) \cdot \frac{i - 1}{k} \cdot [(1 - 1/k)^{i - 2} - \ee] \cdot f(OPT) \inConference{\\ &}+ \frac{[(1 - 1/k)^{i - 1} - \ee] \cdot f(OPT)}{k}\\
    \geq{} &
    \frac{i}{k} \cdot [(1 - 1/k)^{i - 1} - \ee] \cdot f(OPT)
    \enspace.
    \qedhere
\end{align*}

\end{proof}

Plugging $i = k$ into the above lemma yields:
\begin{align*}
	\mathbb{E}[f(S_k)]
	\geq{} &
	[(1 - 1/k)^{k - 1} - \ee] \cdot f(OPT)\inConference{\\}
	\geq\inConference{{} &}
	(e^{-1} - \ee) \cdot f(OPT)
	\enspace,
\end{align*}
which completes the proof of the approximation ratio guaranteed by Theorem~\ref{th:cardinality_random_nonmonotone}.
\section{Cardinality Constraint via Thresholding} \label{sec:lazy_greedy}

In this section we describe an algorithm for the problem $\max \{f(S) : |S| \leq k\}$ based on a combination of the Random Greedy of~\cite{BFNS14} and the thresholding algorithm of~\cite{BV14}. We prove that this algorithm obeys all the requirements of the following theorem. Theorem~\ref{th:non_monotone_cardinality} follows by combining this theorem with the result proved in Section~\ref{sec:random_sampling}.

\begin{theorem} \label{th:cadinality_lazy}
There exists an algorithm that given a general non-negative submodular function $f : 2^\NN \to \mathbb{R}^+$, and parameters $k\geq 1$ and $\ee >0$,
finds a solution $ S\subseteq \NN$ of size $|S|\leq k$ where: $f(S) \geq \left(\nicefrac[]{1}{e}-\ee\right)\cdot\max \left\{f(T) : T\subseteq \NN, |T|\leq k\right\}$ and the algorithm performs $O(k\sqrt{n\ee^{-1} \ln (k/\ee)} + n\ee^{-1} \ln (k/\ee))$ value oracle queries.
\end{theorem}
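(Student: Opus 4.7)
The plan is to modify the random greedy of~\cite{BFNS14} for cardinality-constrained non-monotone submodular maximization so that each of its $k$ iterations runs efficiently, using the lazy thresholding idea that already underlies Algorithm~\ref{alg:RandomLazyGreedy1} and its \textsf{LinearGreedy} subroutine in Section~\ref{ssc:random_greedy}. The algorithm I have in mind performs $k$ iterations; in iteration $i$ it maintains upper-bound estimates $w_u$ on the current marginal $f(u \mid S_{i-1})$ of every element, scans buckets of $w_u$ in decreasing order, and lazily refreshes $w_u$ (by a single value-oracle call) when it looks stale. The output of this scan is a set $M_i$ of $k$ elements (padded with dummies if too few real elements pass the current threshold) whose marginals are, up to a multiplicative factor of $(1-\delta)$ with $\delta = \Theta(\ee)$, among the top $k$ marginal contributions to $S_{i-1}$. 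A uniformly random element $u_i \in M_i$ is then added to $S_{i-1}$, exactly in the style of the BFNS14 random greedy step.

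For the approximation guarantee, I would follow the BFNS14 analysis of the random greedy in the non-monotone case and absorb the $(1-\delta)$ thresholding slack into the final bound. By submodularity and the $(1-\delta)$-top-$k$ property of $M_i$,
\[
\sum_{u \in M_i} f(u \mid S_{i-1}) \geq (1-\delta)\bigl(f(OPT \cup S_{i-1}) - f(S_{i-1})\bigr),
\]
so uniformly picking $u_i$ from $M_i$ gives $\mathbb{E}[f(S_i) - f(S_{i-1}) \mid S_{i-1}] \geq (1-\delta)(f(OPT \cup S_{i-1}) - f(S_{i-1}))/k$. Since each element lies in $S_{i-1}$ with probability at most $1 - (1 - 1/k)^{i-1}$, applying Lemma~\ref{le:distribution} to the non-negative submodular function $h(T) = f(T \cup OPT)$ yields $\mathbb{E}[f(OPT \cup S_{i-1})] \geq (1 - 1/k)^{i-1} f(OPT)$. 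A straightforward induction, mirroring the one used to prove Theorem~\ref{th:cardinality_random_nonmonotone}, then yields $\mathbb{E}[f(S_k)] \geq (1-\delta)(1/e - o(1)) f(OPT) \geq (1/e - \ee) f(OPT)$ for $\delta = \Theta(\ee)$.

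The main obstacle is matching the query complexity. A direct bookkeeping argument analogous to Observation~\ref{ob:random_greedy_complexity} bounds the number of value-oracle queries by $O(k^2 + n\ee^{-1} \ln(k/\ee))$, where the $k^2$ term comes from verifying and adding up to $k$ elements to each of the $k$ candidate sets $M_i$ and the additive term comes from the amortized $w_u$ updates. To shave the $k^2$ term down to $k\sqrt{n\ee^{-1}\ln(k/\ee)}$, I would introduce a batching parameter $B$: the algorithm rebuilds $M_i$ ``from scratch'' only once every $B$ iterations, and during the intervening $B-1$ iterations reuses the previous $M_i$ (removing already-selected elements and drawing fresh dummies to keep its effective size equal to $k$). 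The amortized rebuilding cost becomes $O(k^2/B)$, and balancing this against the $n\ee^{-1}\ln(k/\ee)$ thresholding cost via $B = \Theta(k/\sqrt{n\ee^{-1}\ln(k/\ee)})$ yields the desired $O(k\sqrt{n\ee^{-1}\ln(k/\ee)} + n\ee^{-1}\ln(k/\ee))$ bound. The delicate step will be to show that reusing a stale $M_i$ across $B$ iterations costs only an $O(\ee)$ loss in the approximation ratio; this should follow from the fact that the multiplicative $(1-\delta)$ thresholding slack already absorbs the drift in marginals as $S_{i-1}$ grows, and that over $B \ll k$ iterations the residual optimum $f(OPT \cup S_{i-1})$ changes by at most a $1 - O(B/k)$ factor, both of which remain negligible under the choice $\delta = \Theta(\ee)$ and the balanced value of $B$.
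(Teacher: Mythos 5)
Your first phase (random greedy over a lazily thresholded top-$k$ candidate set $M_i$, identifying the $O(k^2 + n\ee^{-1}\ln(k/\ee))$ cost with the $k^2$ term coming from maintaining $M_i$) matches the paper's Algorithm~\textsf{Random Lazy Greedy Simple} and Lemma~\ref{le:complexity_simple} closely. The gap is in the speedup. Your batching idea --- rebuild $M$ only every $B$ iterations and reuse it in between --- does not preserve the approximation guarantee, and the heuristic justification (``the $(1-\delta)$ slack absorbs the drift in marginals over $B$ iterations'') is false. The threshold slack protects against a $(1-\delta)$ multiplicative decay of a \emph{single} marginal between a re-check and its use, not against the unbounded decay that can occur once an element is actually added to the solution. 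A concrete counterexample: if the $k$ elements of $M$ are pairwise near-duplicates (each has marginal $w$ initially, but any one of them kills the marginals of all the others), then after the first pick the remaining $k-1$ elements have marginal $\approx 0$, and a uniformly random pick from the stale $M$ in the next $B-1$ iterations gains essentially nothing. Over a block of length $B$ you then gain $\Theta(w)$ instead of $\Theta(Bw)$, a factor-$B$ loss in that block --- and since $B=\Theta(k/\sqrt{n\ee^{-1}\ln(k/\ee)})$ can be polynomially large, this is not an $O(\ee)$-order correction. The per-iteration cleanup pass over $M$ in the simple algorithm is exactly what prevents this, and is also exactly what generates the $k^2$ term, so you cannot simply skip it on a fixed schedule.

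The paper's actual device (Algorithm~\textsf{Random Lazy Greedy Improved}) is a \emph{randomized} lazy rebuild with a self-balancing feature that a deterministic batching schedule lacks. Each iteration first samples $u_i'$ uniformly from $M$; if $u_i'$ still passes the current threshold (or is a dummy), it is used directly, and no scan happens. Only when the sampled $u_i'$ turns out stale does the algorithm scan $M$, purge stale elements, refill via \textsf{FillM}, and resample from the freshly added set $\hat{M}$. Crucially, this preserves Observation~\ref{ob:at_most_1_over_k} (each element joins the solution with probability at most $1/k$), so the approximation analysis of the simple version carries over unchanged. The query count then follows from a balance argument (Lemma~\ref{le:complexity_improved}): if $X_i$ stale elements sit in $M$ at the start of iteration $i$, a scan occurs with probability $X_i/k$ at cost $O(k)$, for expected cost $O(\mathbb{E}[X_i])$; but conditioned on a scan, at least $X_i$ new elements enter $M$, so the expected refill volume is at least $(\mathbb{E}[X_i])^2/k$ by Jensen, and the total refill volume is bounded by $O(n\delta^{-1}\ln(k/\delta))$ (plus $O(k)$ dummies). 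Cauchy--Schwarz on $\sum_i \mathbb{E}[X_i]$ then gives the $O(k\sqrt{n\delta^{-1}\ln(k/\delta)})$ term. The probability of rebuilding adapts to how stale $M$ actually is, which is exactly the adaptivity your fixed-period batching cannot provide.
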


\SetKwFunction{FillM}{FillM}
\SetKw{Yield}{yield}

All algorithms considered in this section assume the existence of a set $D \subseteq \NN$ of at least $2k$ dummy elements having a weight of $0$. This assumption can be justified by explicitly adding such a set of elements to the ground set. We may also assume $\ee < e^{-1}$, since the theorem is void otherwise. The first algorithm we consider is Algorithm~\ref{alg:RandomLazyGreedySimple}. This algorithm accepts an error parameter $\delta \in (0, e^{-1})$ which we set later. The function {\FillM} described in Algorithm~\ref{alg:RandomLazyGreedySimple} has a somewhat non-standard semantics. Namely, at the first time it is called it executes from its beginning till reaching the command {\Yield}. At every additional call, {\FillM} starts executing from the place where it stopped on the last call, and continues till reaching the command {\Yield} again.

On a more intuitive level, Algorithm~\ref{alg:RandomLazyGreedySimple} constructs a solution using $k$ iterations. In each iteration the function {\FillM} is used to fill the set $M$ with the $k$ elements having the largest marginal contributions (up to an approximation error). Then, a uniformly random element of $M$ is added to the solution, and elements of $M$ whose marginal contribution decreased significantly following the addition are removed from $M$.

\SetKwFor{Do}{do}{}{}
\SetKwProg{Function}{Function}{}{}
\begin{algorithm*}[h!t]
\caption{\textsf{Random Lazy Greedy Simple}$(f, k, \delta)$} \label{alg:RandomLazyGreedySimple}
\DontPrintSemicolon

\tcp{Initialization}
$M, S_0 \gets \varnothing$.\\
Let $w, W \gets \max_{u \in \NN} f(u)$.\\

\BlankLine

\tcp{Main Loop}
\For{$i$ = $1$ \KwTo $k$}
{
	Call \FillM($M$).\\
	Uniformly pick a random element $u_i$ from $M$.\\
	Let $S_i \gets S_{i - 1} \cup \{u_i\}$.\\
	\ForEach{element $u \in M$}
	{
		\lIf{$f(u \mid S_i) \leq w(1 - \delta)$}{Remove $u$ from $M$.}
	}
}
\Return{$S_k$}.\\

\BlankLine

\Function{\FillM($M$)}
{
	\For{($w = W$; $w > \delta W / k$; $w \leftarrow w(1 - \delta)$)}
	{
		\ForEach{$u \in \NN$}
		{
			\If{$f(u \mid S) > w(1 - \delta)$}
			{
				Add $u$ to $M$.\\
				\lIf{$|M| = k$}{\Yield.}
			}
		}
	}
	\Do{forever}
	{
		Add $k - |M|$ dummy elements of $D \setminus M$ to $M$.\\
		\Yield.\\
	}
}
\end{algorithm*}

Let us begin by analyzing the approximation ratio of Algorithm~\ref{alg:RandomLazyGreedySimple}. We need some notation. Let $M_i$ be the set $M$ at the moment the algorithm picks $u_i$ from it. For two sets $A, B \subseteq \NN$, let $f(A : B) = \sum_{u \in A} f(u \mid B)$. Finally, let $O_i \subseteq \NN$ be the (random) subset of size at most $k$ maximizing $f(O_i : S_{i - 1})$.

\begin{lemma} \label{le:m_weight}
For every $1 \leq i \leq k$, $f(M_i : S_{i - 1}) \geq (1 - \delta) \cdot f(O_i : S_{i - 1}) - \delta \cdot f(OPT)$.
\end{lemma}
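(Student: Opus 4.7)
The plan is to let $w$ denote the value of the global threshold at the moment \textsf{FillM} yields control during iteration $i$, and to relate both $f(M_i : S_{i-1})$ and $f(O_i : S_{i-1})$ to this $w$. I will split into three regimes based on the state of \textsf{FillM}: (a) the outer \texttt{for} loop is still in its first pass, so $w = W$; (b) the outer loop has advanced past $W$ but has not terminated, so $w < W$; and (c) the outer loop has terminated, so the yield came from the \texttt{do forever} block and $w \leq \delta W/k$. In (a) and (b), $|M_i| = k$ and $M_i$ contains no dummies; in (c), $M_i$ may contain dummies but its non-dummy elements still have strictly positive marginals.

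First I would establish a uniform lower bound: every non-dummy $u \in M_i$ satisfies $f(u \mid S_{i-1}) > w(1-\delta)$. The argument traces $u$'s history. Either $u$ was newly added during iteration $i$'s \textsf{FillM} call at some threshold $w' \geq w$, in which case the add-test, combined with the fact that $S = S_{i-1}$ throughout that call, gives $f(u \mid S_{i-1}) > w'(1-\delta) \geq w(1-\delta)$; or $u$ was carried over from a previous iteration and survived the removal step at iteration $i-1$ whose global threshold $w_{i-1} \geq w$, and survival means $f(u \mid S_{i-1}) > w_{i-1}(1-\delta) \geq w(1-\delta)$.

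The technical heart of the proof is a matching upper bound for $u \notin M_i$. In regime (b), the outer loop has already completed its pass at $w/(1-\delta)$, so every element has been examined at that threshold. For each $u \notin M_i$ I would identify the latest action on $u$---examined at the current $w$ and not added (bound $w(1-\delta)$); examined at $w/(1-\delta)$ and not added (bound $w$ after submodularity); or added earlier and later removed at a global threshold $w_{j'} \in [w, w/(1-\delta)]$ (bound $w_{j'}(1-\delta) \leq w$ after submodularity)---and conclude $f(u \mid S_{i-1}) \leq w$ in every sub-scenario. Regime (a) is similar, except that elements not yet examined at $W$ only obey $f(u \mid S_{i-1}) \leq f(u) \leq W$, so I would split $O_i \setminus M_i$ into examined and unexamined parts. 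Regime (c) uses the same style of tracking with $w_{\min}$ in place of $w/(1-\delta)$ and gives $f(u \mid S_{i-1}) \leq w_{\min}(1-\delta) \leq \delta W/k$, using the bound $W \leq f(OPT)$ that follows from singletons being feasible.

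Combining the two bounds uses the counting identity $|M_i \setminus O_i| \geq |O_i \setminus M_i|$, which holds since $|M_i| = k \geq |O_i|$. In (b), the difference $f(M_i : S_{i-1}) - (1-\delta) f(O_i : S_{i-1})$ simplifies to $\delta f(O_i \cap M_i : S_{i-1}) + (1-\delta) w (|M_i \setminus O_i| - |O_i \setminus M_i|) \geq 0$, so the conclusion holds even without the additive slack. Regime (a) is analogous, with the unexamined contribution cancelling through a slightly strengthened counting identity of the form $|M_i \setminus O_i| \geq |O_i^{\text{ex}}| + |O_i^{\text{unex}}|$. In (c), summing over at most $k$ elements gives $f(O_i \setminus M_i : S_{i-1}) \leq k \cdot \delta W/k \leq \delta f(OPT)$, and combining this with $f(O_i \cap M_i : S_{i-1}) \leq f(M_i : S_{i-1})$ (from the positive non-dummy marginals in $M_i$) produces the conclusion with the additive slack absorbed. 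The main obstacle I foresee is the bookkeeping in the third paragraph for regime (b): verifying that no matter how many times $u$ was added to and removed from $M$ across prior iterations, the relevant snapshot threshold never exceeds $w/(1-\delta)$, so that submodularity converts the snapshot bound into a valid bound at $S_{i-1}$.
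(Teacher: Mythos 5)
Your proposal is correct and follows essentially the same route as the paper's proof: you bound the marginal of every element of $M_i$ from below by $(1-\delta)$ times the current threshold and the marginal of every element outside $M_i$ from above by that threshold (or by $\delta W/k$ once the threshold loop has terminated), then combine these with the counting fact $|M_i| = k \geq |O_i|$ and with $W \leq f(OPT)$ to absorb the additive $\delta \cdot f(OPT)$ term. The paper merges your regimes (a) and (b) into a single case via the set $O'_i = \{u \in O_i : f(u \mid S_{i-1}) > w_i\} \subseteq M_i$ and compresses your history-tracing of the lazy thresholds into one monotonicity remark, but the underlying argument is the same.
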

\begin{proof}
Let $A_i$ be an event determining all the random choices of the algorithm in the first $i - 1$ iterations. We fix an arbitrary such event $A_i$, and prove the lemma conditioned on this event. Notice that if the lemma holds conditioned an arbitrary event $A_i$, then it also holds unconditionally. Observe also that once we fix $A_i$, the sets $M_i$, $O_i$ and $S_{i - 1}$ all become deterministic.

Let $w_i$ be the value of $w$ at the moment Algorithm~\ref{alg:RandomLazyGreedySimple} chooses $u_i$. There are two cases to consider. Assume first $w_i > \delta W / k$. Since the marginals of elements only decrease as the solution increases, every element $u \in \NN$ with $f(u \mid S_{i - 1}) > w_i$ must be in $M_i$. On the other hand, since elements with low marginals are removed from $M_i$, we also have $f(u \mid S_{i - 1}) > w_i(1 - \delta)$ for every element $u \in M_i$. Let $O'_i = \{u \in O_i \mid f(u \mid S_{i - 1}) > w_i\}$. By the above discussion $O'_i \subseteq M_i$. Thus,
\begin{align*}
	f(O_i : S_{i - 1})
	={} &
	f(O'_i : S_{i - 1}) + f(O_i \setminus O'_i : S_{i - 1})\inConference{\\}
	\leq\inConference{{} &}
	f(O'_i : S_{i - 1}) + w_i \cdot |O_i \setminus O'_i|\\
	\leq{} &
	f(O'_i : S_{i - 1}) + w_i \cdot |M_i \setminus O'_i|\inConference{\\}
	\leq\inConference{{} &}
	f(O'_i : S_{i - 1}) + \frac{f(M_i \setminus O'_i : S_{i - 1})}{1 - \delta}\inConference{\\}
	\leq\inConference{{} &}
	\frac{f(M_i : S_{i - 1})}{1 - \delta}
	\enspace,
\end{align*}
where the second inequality holds since $|M_i| = k \geq |O_i|$ and $O'_i \subseteq O_i \cap M_i$. The last inequality holds since $f(u \mid S_{i - 1}) \geq 0$ for every $u \in O'_i$. This completes the proof of the lemma for the case $w_i > \delta W / k$. Assume, now, $w_i \leq \delta W / k$. In this case every element $u \in \NN$ having $f(u \mid S_{i - 1}) > \delta W / k$ must be in $M_i$. Let $O'_i = \{u \in O_i \mid f(u \mid S_{i - 1}) > \delta W / k\}$. Again $O'_i \subseteq M_i$, and thus,
\begin{align*}
	f(O_i : S_{i - 1})
	={} &
	f(O'_i : S_{i - 1}) + f(O_i \setminus O'_i : S_{i - 1})\\
	\leq{} &
	f(O'_i : S_{i - 1}) + \frac{\delta W}{k} \cdot |O_i \setminus O'_i|\inConference{\\}
	\leq\inConference{{} &}
	f(M_i : S_{i - 1}) + \delta W
	\enspace.
\end{align*}
The lemma now follows \inArxiv{by observing that}\inConference{since} $W \leq f(OPT)$.
\end{proof}

The above lemma can be used to derive a lower bound on the expected improvement in the solution of Algorithm~\ref{alg:RandomLazyGreedySimple} in a given iteration.
\begin{lemma} \label{le:improvement_cardinality_lazy}
For every \inConference{value }$1 \leq i \leq k$, $\mathbb{E}[f(S_i)] \geq \frac{[(1-\delta)(1 - 1/k)^{i - 1} - \delta] \cdot f(OPT) + (k - 1) \cdot \mathbb{E}[f(S_{i - 1})]}{k}$.
\end{lemma}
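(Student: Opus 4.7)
The plan is to unroll one step of the algorithm and reduce the bound to the combination of two already-proved ingredients: Lemma~\ref{le:m_weight} (which controls the total weight of $M_i$) and Lemma~\ref{le:distribution} (which lets us lower bound $\mathbb{E}[f(S_{i-1}\cup OPT)]$ in terms of $f(OPT)$).

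First, I would condition on an event $A_i$ that fixes all random choices of the algorithm before iteration $i$. Under this conditioning, $S_{i-1}$, $M_i$, and $O_i$ are deterministic, and $u_i$ is uniform over the $k$ elements of $M_i$. Hence
\[
\mathbb{E}[f(S_i)-f(S_{i-1})\mid A_i] \;=\; \frac{1}{k}\,f(M_i : S_{i-1}).
\]
Applying Lemma~\ref{le:m_weight} and then using that $|OPT|\le k$, so that $f(O_i : S_{i-1})\ge f(OPT : S_{i-1})\ge f(OPT\cup S_{i-1})-f(S_{i-1})$ by submodularity, I would obtain
\[
\mathbb{E}[f(S_i)-f(S_{i-1})\mid A_i]\;\ge\;\frac{(1-\delta)\bigl[f(OPT\cup S_{i-1})-f(S_{i-1})\bigr]-\delta f(OPT)}{k}.
\]
Since this holds for every fixed $A_i$, I can take expectation over $A_i$ to get the same inequality unconditionally (with $S_{i-1}$ and $f(OPT\cup S_{i-1})$ now random).

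Next I would bound $\mathbb{E}[f(OPT\cup S_{i-1})]$ from below by $(1-1/k)^{i-1}f(OPT)$. The argument is the one already used in the companion corollary in Section~\ref{sec:random_sampling}: since $u_j$ is drawn uniformly from a set of size $k$, each specific element $u\in\NN$ is added to $S_j$ with conditional probability at most $1/k$, so an induction on $j$ yields $\Pr[u\in S_{i-1}]\le 1-(1-1/k)^{i-1}$ for every $u$. The function $h(T):=f(T\cup OPT)$ is non-negative and submodular, so Lemma~\ref{le:distribution} gives $\mathbb{E}[h(S_{i-1})]\ge (1-1/k)^{i-1}h(\varnothing)=(1-1/k)^{i-1}f(OPT)$.

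Plugging this into the previous inequality yields
\[
\mathbb{E}[f(S_i)]-\mathbb{E}[f(S_{i-1})]\;\ge\;\frac{\bigl[(1-\delta)(1-1/k)^{i-1}-\delta\bigr]f(OPT)-(1-\delta)\,\mathbb{E}[f(S_{i-1})]}{k}.
\]
The claim then follows by rearranging and using $\delta\,\mathbb{E}[f(S_{i-1})]\ge 0$ to replace the $(1-\delta)$ coefficient in front of $\mathbb{E}[f(S_{i-1})]$ by $1$, which yields the form stated in the lemma. The only subtle step is the per-element probability bound used with Lemma~\ref{le:distribution}; everything else is direct substitution, so I do not expect any real obstacle.
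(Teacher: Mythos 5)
Your proposal is correct and follows essentially the same route as the paper's proof: the expected gain is $\mathbb{E}[f(M_i:S_{i-1})]/k$, which is bounded via Lemma~\ref{le:m_weight}, the definition of $O_i$ and submodularity by $f(OPT\cup S_{i-1})-f(S_{i-1})$, and then $\mathbb{E}[f(OPT\cup S_{i-1})]\geq(1-1/k)^{i-1}f(OPT)$ is obtained from the per-iteration $1/k$ inclusion probability together with Lemma~\ref{le:distribution}. The only cosmetic difference is that you condition explicitly on $A_i$ and rearrange before dropping the $(1-\delta)$ coefficient, whereas the paper works with unconditional expectations and drops it first; the arguments are equivalent.
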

\begin{proof}
In every given iteration, Algorithms~\ref{alg:RandomLazyGreedySimple} adds every element $u \in \NN$ to its solution with probability at most $1/k$. Hence, for every $u \in \NN$, $\Pr[u \in S_i] \leq 1 - (1 - 1/k)^i$, which implies $\mathbb{E}[f(OPT \cup S_i)] \geq (1 - 1/k)^i \cdot f(OPT)$ by Lemma~\ref{le:distribution}. Hence, by the definition of $O_i$ and the submodularity of $f$:
\begin{align*}
	\mathbb{E}[f(O_i : S_{i - 1})]\inConference{&}
	\geq\inArxiv{{} &}
	\mathbb{E}[f(OPT : S_{i - 1})]\\
	\geq{} &
	\mathbb{E}[f(OPT \cup S_{i - 1}) - f(S_{i - 1})]\inConference{\\}
	\geq\inConference{{} &}
	(1 - 1/k)^{i - 1} \cdot f(OPT) - \mathbb{E}[f(S_{i - 1})]
	\enspace.
\end{align*}

Recall that a uniformly random element of $M_i$ is added to $S_{i - 1}$ to form $S_i$. This observation, together with Lemma~\ref{le:m_weight}, give:
\begin{align*}
	\inConference{&}
	\mathbb{E}[f(S_i) - f(S_{i - 1})]
	=\inArxiv{{} &}
	\frac{\mathbb{E}[f(M_i : S_{i - 1})]}{k}\inConference{\\}
	\geq\inConference{{} &}
	\frac{(1 - \delta) \cdot \mathbb{E}[f(O_i : S_{i - 1})] - \delta \cdot f(OPT)}{k}\\
	\geq{} &
\inArxiv{
	\frac{(1 - \delta) \cdot \left\{(1 - 1/k)^{i - 1} \cdot f(OPT) - \mathbb{E}[f(S_{i - 1})]\right\}  - \delta \cdot f(OPT)}{k}\\
	\geq{} &
}
	\frac{[(1 - \delta)(1 - 1/k)^{i - 1} - \delta] \cdot f(OPT) - \mathbb{E}[f(S_{i - 1})]}{k}
	\enspace.
	\qedhere
\end{align*}
\end{proof}

We are now ready to prove the approximation ratio of Algorithm~\ref{alg:RandomLazyGreedySimple}.

\begin{corollary}
\inArxiv{Algorithm~\ref{alg:RandomLazyGreedySimple} is a $(e^{-1} - 2\delta)$-approximation algorithm.}
\inConference{Algorithm~\ref{alg:RandomLazyGreedySimple} has an approximation ratio of at least $(e^{-1} - 2\delta)$.}
\end{corollary}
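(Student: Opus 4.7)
The plan is to set $a_i := \mathbb{E}[f(S_i)]$ and unroll the linear recurrence provided by Lemma~\ref{le:improvement_cardinality_lazy}, namely
\[
	a_i \;\geq\; (1 - 1/k)\,a_{i-1} + \tfrac{1}{k}\bigl[(1-\delta)(1-1/k)^{i-1} - \delta\bigr]\cdot f(OPT),
\]
with base case $a_0 = f(\varnothing) \geq 0$. Iterating this inequality down from $i=k$ telescopes to
\[
	a_k \;\geq\; \sum_{i=1}^{k}(1-1/k)^{k-i}\cdot\tfrac{1}{k}\bigl[(1-\delta)(1-1/k)^{i-1} - \delta\bigr]\cdot f(OPT).
\]

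The next step is to evaluate this sum in closed form. The positive part collapses nicely because $(1-1/k)^{k-i}(1-1/k)^{i-1} = (1-1/k)^{k-1}$ is independent of $i$, so summing $k$ equal terms and dividing by $k$ produces $(1-\delta)(1-1/k)^{k-1}\cdot f(OPT)$. The negative part is a finite geometric series whose sum equals $k(1 - (1-1/k)^k) \leq k$, which after multiplying by $-\delta/k$ contributes at least $-\delta\cdot f(OPT)$. Combining the two parts yields $a_k \geq \bigl[(1-\delta)(1-1/k)^{k-1} - \delta\bigr]\cdot f(OPT)$.

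The final step is to apply the elementary inequality $(1-1/k)^{k-1} \geq 1/e$, which holds for every integer $k \geq 1$ (one can check that $(k-1)\ln(1-1/k) \geq -1$ by verifying $g(x) := \ln(1-x) + x/(1-x)$ is non-decreasing on $[0,1)$ with $g(0)=0$). Substituting, the bound simplifies to $\bigl(\tfrac{1-\delta}{e} - \delta\bigr)f(OPT) \geq \bigl(\tfrac{1}{e} - 2\delta\bigr)f(OPT)$, where the last inequality uses $\delta/e \leq \delta$. This is exactly the claimed approximation ratio.

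I do not foresee any real obstacle: both the recurrence unrolling and the $(1-1/k)^{k-1} \geq 1/e$ bound are entirely standard from the analysis of greedy algorithms for submodular maximization. The only minor bookkeeping point is that the $2\delta$ slack in the final inequality needs to absorb two sources of error simultaneously---the multiplicative $(1-\delta)$ error coming from Lemma~\ref{le:m_weight}'s approximate greedy choice, and the per-iteration additive $\delta\cdot f(OPT)/k$ loss stemming from the threshold cutoff $\delta W / k$ inside \FillM---but both are accounted for cleanly by splitting the telescoped sum into its positive and negative contributions as above.
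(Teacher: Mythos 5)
Your proposal is correct and reaches the identical intermediate bound $[(1-\delta)(1-1/k)^{k-1} - \delta]\cdot f(OPT)$ that the paper obtains, differing only in that you unroll the recurrence from Lemma~\ref{le:improvement_cardinality_lazy} explicitly and sum the resulting geometric series, whereas the paper guesses the closed form $\mathbb{E}[f(S_i)] \geq \tfrac{i}{k}\bigl[(1-\delta)(1-1/k)^{i-1}-\delta\bigr]\cdot f(OPT)$ and verifies it by induction. These are two presentations of the same calculation (your telescoped sum is in fact a marginally tighter expression, of which the paper's inductive invariant is a Bernoulli-type weakening), and both conclude with the same elementary facts $(1-1/k)^{k-1}\geq e^{-1}$ and $\delta e^{-1}\leq \delta$.
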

\begin{proof}
\inConference{For simplicity, let us denote $\Phi(i) = (1 - \delta)(1 - 1/k)^{i - 1} - \delta$. }
We first prove by induction on $i$ that for $0 \leq i \leq k$: $\mathbb{E}[f(S_i)] \geq \frac{i}{k} \cdot \inArxiv{[(1 - \delta)(1 - 1/k)^{i - 1} - \delta]}\inConference{\Phi(i)} \cdot f(OPT)$. For $i = 0$, the claim is trivial since: $\mathbb{E}[f(S_0)] \geq 0 = \frac{0}{k} \cdot \inArxiv{[(1 - \delta)(1 - 1/k)^{0 - 1} - \delta]}\inConference{\Phi(0)} \cdot f(OPT)$. Next, assume the claim holds for $i - 1 \geq 0$, and let us prove it for $i$. By Lemma~\ref{le:improvement_cardinality_lazy},
\begin{align*}
	\mathbb{E}[f(S_i)]
	\geq{} &
	\frac{\inArxiv{[(1-\delta)(1 - 1/k)^{i - 1} - \delta]}\inConference{\Phi(i)} \cdot f(OPT) + (k - 1) \cdot \mathbb{E}[f(S_{i - 1})]}{k}\\
	\geq{} &
	\inArxiv{
		\frac{[(1-\delta)(1 - 1/k)^{i - 1} - \delta] \cdot f(OPT) + (k - 1) \cdot \frac{i - 1}{k} \cdot [(1 - \delta)(1 - 1/k)^{i - 2} - \delta] \cdot f(OPT)}{k}\\
	}
	\inConference{
		\frac{\Phi(i) \cdot f(OPT)}{k} \\ & + \frac{(k - 1) \cdot \frac{i - 1}{k} \cdot \Phi(i - 1) \cdot f(OPT)}{k}\\
	}
	\geq{} &
	\frac{i}{k} \cdot \inArxiv{[(1 - \delta)(1 - 1/k)^{i - 1} - \delta]}\inConference{\Phi(i)} \cdot f(OPT)
	\enspace.
\end{align*}
For $i = k$, we get:
\begin{align*}
	\inConference{&}
	\mathbb{E}[f(S_k)]
	\geq\inArxiv{{} &}
	\frac{k}{k} \cdot \inArxiv{[(1 - \delta)(1 - 1/k)^{k - 1} - \delta]}\inConference{\Phi(k)} \cdot f(OPT)\\
	\geq{} &
	[(1 - \delta)e^{-1} - \delta] \cdot f(OPT)
	\geq
	[e^{-1} - 2\delta] \cdot f(OPT)
	\enspace.
	\qedhere
\end{align*}
\end{proof}

Choosing $\delta = \ee/2$, the above corollary yields the approximation ratio guaranteed by Theorem~\ref{th:cadinality_lazy}. Let us now analyze the number of value oracle queries made by Algorithm~\ref{alg:RandomLazyGreedySimple}.

\begin{lemma} \label{le:complexity_simple}
Algorithm~\ref{alg:RandomLazyGreedySimple} uses $O(k^2 + n\delta^{-1} \ln (k/\delta)) = O(k^2 + n\ee^{-1} \ln (k/\ee))$ value oracle queries.
\end{lemma}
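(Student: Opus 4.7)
}
My plan is to bound the queries by source: the initialization, the ``cleanup'' loop that runs after each pick $u_i$, and the function \FillM{} itself, and then add them up. The only subtle point is that \FillM{} uses resumption semantics, so I must be careful not to multiply its internal work by $k$.

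First, I would observe that computing $W = \max_{u \in \NN} f(u)$ during initialization costs exactly $n$ value oracle queries, which is $O(n)$. Second, in each of the $k$ iterations of the main loop, after the random pick $u_i$, the algorithm iterates over the elements currently in $M$ and tests $f(u \mid S_i) \leq w(1 - \delta)$; since $|M| \leq k$ at all times (the {\Yield} fires as soon as $|M| = k$, and the ``\Do forever'' block refills $M$ with dummies up to size $k$), this contributes $O(k)$ queries per iteration and $O(k^{2})$ in total.

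Third, and this is the heart of the accounting, I would carefully exploit the resumption semantics of \FillM{} to bound its cumulative query count across all $k$ calls. The key observation is that the outer loop variable $w$ is \emph{never reset}: it is initialized to $W$ before the first call and decreases monotonically by the factor $(1 - \delta)$ across every subsequent invocation, until either it falls below $\delta W / k$ (at which point the ``\Do forever'' fallback takes over and makes no further value oracle queries) or the algorithm terminates. Hence the total number of distinct values $w$ ever takes, summed over all calls to \FillM{}, is at most
\[
	\lceil \log_{1 - \delta}(\delta / k) \rceil \leq 1 + \frac{\ln (k/\delta)}{\delta} = O\!\left(\delta^{-1} \ln (k/\delta)\right).
\]
For each such value of $w$, the inner \ForEach{} loop scans the $n$ elements of $\NN$ and issues one value oracle query per element. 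Consequently, the total number of value oracle queries charged to \FillM{} across the entire execution is $O(n \delta^{-1} \ln (k/\delta))$.

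Summing the three contributions gives $O(n) + O(k^{2}) + O(n \delta^{-1} \ln (k/\delta)) = O(k^{2} + n\delta^{-1} \ln(k/\delta))$, and substituting $\delta = \ee/2$ (as chosen earlier in the analysis) yields the claimed $O(k^{2} + n \ee^{-1} \ln(k/\ee))$. No real obstacle is expected here; the one place a careless proof could go wrong is failing to notice that the outer threshold loop of \FillM{} is persistent, which would otherwise inflate the bound by a factor of $k$.
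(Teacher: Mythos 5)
Your proposal is correct and follows essentially the same accounting as the paper: $O(k)$ queries per main-loop iteration from scanning $M$ (hence $O(k^2)$ total), plus $O(n)$ queries for each of the at most $\lceil \ln_{1-\delta}(\delta/k)\rceil \leq 1 + \delta^{-1}\ln(k/\delta)$ values that the persistent threshold $w$ ever takes inside \FillM. Your explicit remarks on the resumption semantics (so the $w$-loop is not restarted on each call) and on the $n$ initialization queries only make explicit what the paper's proof leaves implicit.
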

\begin{proof}
The main body of Algorithm~\ref{alg:RandomLazyGreedySimple} uses $O(k)$ value oracle queries per iteration. Hence, it uses $O(k^2)$ value oracle queries in total. The function {\FillM} uses $O(n)$ value oracle queries for every value $w$ takes. The lemma follows by observing that the total number of values $w$ can take is at most:
\[
	\lceil \ln_{1 - \delta} (\delta / k) \rceil
	\leq
	1 + \frac{\ln (k / \delta)}{-\ln(1 - \delta)}
	\leq
	1 + \frac{\ln (k / \delta)}{\delta}
	\enspace.
	\qedhere
\]
\end{proof}

The $O(k^2)$ term in the guarantee of Lemma~\ref{le:complexity_simple} stems from the fact that Algorithm~\ref{alg:RandomLazyGreedySimple} scans $M$ at the end of each iteration for elements whose marginal contribution became too small. Algorithm~\ref{alg:RandomLazyGreedyImproved} is a variant of Algorithm~\ref{alg:RandomLazyGreedySimple}  which attempts to reduce the number of times $M$ is scanned by first selecting a random element from $M$, and then scanning $M$ only if the selected random element happens to have a small marginal contribution.

\begin{algorithm*}[h!t]
\caption{\textsf{Random Lazy Greedy Improved}$(f, k, \delta)$} \label{alg:RandomLazyGreedyImproved}
\DontPrintSemicolon

\tcp{Initialization}
Let $M, S_0 \gets \varnothing$.\\
Let $w, W \gets \max_{u \in \NN} f(u)$.\\

\BlankLine

\tcp{Main Loop}
Call \FillM($M$).\\
\For{$i$ = $1$ \KwTo $k$}
{
	Uniformly pick a random element $u'_i$ from $M$.\\
	\lIf{$u'_i$ is a dummy element or $f(u'_i \mid S) > (1 - \delta)w$}
	{
		$u_i \leftarrow u'_i$.
	}
	\Else
	{
		\ForEach{$u \in M$}
		{
			\lIf{$u$ is not a dummy element and $f(u \mid S) \leq w(1 - \delta)$}
			{
				Remove $u$ from $M$.
			}
		}
		Call \FillM($M$), and let $\hat{M}$ be the set of elements added to $M$.\\
		Uniformly pick a random element $u_i$ from $\hat{M}$.\\
	}
	Let $S_i \gets S_{i - 1} \cup \{u_i\}$.\\
}
\Return{$S_k$}.\\

\BlankLine

\Function{\FillM($M$)}
{
	\For{($w = W$; $w > \delta W / k$; $w \leftarrow w(1 - \delta)$)}
	{
		\ForEach{$u \in \NN$}
		{
			\If{$f(u \mid S) > w(1 - \delta)$}
			{
				Add $u$ to $M$.\\
				\lIf{$|M| = k$}{\Yield.}
			}
		}
	}
	\Do{forever}
	{
		Add $k - |M|$ dummy elements of $D \setminus M$ to $M$.\\
		\Yield.\\
	}
}
\end{algorithm*}

\begin{observation} \label{ob:at_most_1_over_k}
In every given iteration, Algorithm~\ref{alg:RandomLazyGreedyImproved} adds every element to its solution with probability at most $1/k$.
\end{observation}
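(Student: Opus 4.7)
The plan is to fix the state of the algorithm at the start of the iteration (the current solution $S$, the set $M$, and the value of $w$) and do a short case analysis on where a candidate element $u$ sits; the resulting bound then passes through averaging over the state. The key structural observation is that \FillM\ maintains the invariant $|M|=k$ (padding with dummies as necessary), so if $R$ denotes the (deterministic) set of non-dummy elements of $M$ with $f(u\mid S)\le w(1-\delta)$---precisely the elements the cleanup loop would discard---then after cleanup shrinks $M$ by $|R|$ elements and the subsequent \FillM\ call restores $|M|=k$, the set $\hat M$ of newly added elements satisfies $|\hat M|=|R|$ and is disjoint from $M\setminus R$.

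The first pick $u'_i$ is uniform over $M$, hence $\Pr[u'_i\in R]=|R|/k$. On the event $u'_i\notin R$ the algorithm sets $u_i=u'_i$; on the complementary event, cleanup and \FillM\ execute deterministically (given the state) and $u_i$ is drawn uniformly from $\hat M$. For a fixed element $u$ I would then check the cases: if $u\in M\setminus R$, then $u$ can be selected only via the first branch, so $\Pr[u_i=u]=\Pr[u'_i=u]=1/k$; if $u\in\hat M$ (and hence $u\notin M\setminus R$ by disjointness), then only the second branch can select $u$, giving $\Pr[u_i=u]\le\Pr[u'_i\in R]\cdot (1/|\hat M|)=(|R|/k)(1/|R|)=1/k$; every remaining $u$ satisfies $\Pr[u_i=u]=0$. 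Averaging over the initial state preserves this bound.

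The one mildly subtle point I anticipate is that an element $u\in R$ may be re-added to $M$ by the second \FillM\ call (its yield-style resumption revisits elements at smaller thresholds), so that $u\in R\cap\hat M$. This falls under the $u\in\hat M$ case above and is handled cleanly: whenever $u'_i=u$ we enter the second branch rather than setting $u_i=u'_i$, so the first-branch contribution is $0$ and only the $1/k$ contribution from the second branch remains. No other step looks like it will cause trouble.
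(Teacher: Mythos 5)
Your proof is correct and follows essentially the same route as the paper's: the paper fixes the iteration, counts the $a_i = k - |R|$ dummy/high-marginal elements of $M$ (which can never enter $\hat{M}$ and are picked only as $u'_i$, with probability $1/k$), and bounds every other element by $\frac{k-a_i}{k}\cdot\frac{1}{|\hat{M}|} = \frac{1}{k}$, exactly your two-branch case analysis with $|\hat{M}| = |R|$. Your explicit treatment of elements of $R$ that get re-added to $\hat{M}$ is a point the paper handles only implicitly, but the underlying argument is the same (just make sure the ``state'' you condition on includes the resume position of \textsf{FillM}, so that $\hat{M}$ is indeed deterministic given it).
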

\begin{proof}
Fix an iteration $1 \leq i \leq k$, let $M'_i$ and $w'_i$ be the set $M$ at the moment the algorithm selects $u'_i$ and the value of $w$ at that moment. Finally, let $a_i$ be the number of dummy elements and elements with a marginal larger than $(1 - \delta)w_i$ in $M'_i$. The observation clearly holds for every element counted by $a_i$, since such an element cannot get into $\hat{M}$ on this iteration. For an element $u \in \NN$ which is not counted by $a_i$ to enter the solution in this iteration, two events have to happen. First, the algorithm should not select $u_i \gets u'_i$, which happens with probability $(k - a_i) / k$. Second, $u$ has to be selected from $\hat{M}$, which happens with probability at most: $1 / |\hat{M}| = (k - a_i)^{-1}$ (given that the first event happened).
\end{proof}

Using Observation~\ref{ob:at_most_1_over_k}, it is possible to apply to Algorithm~\ref{alg:RandomLazyGreedyImproved} the same analysis used above to lower bound the approximation ratio of Algorithm~\ref{alg:RandomLazyGreedySimple}. For the analysis to work, we need to redefine some notation:
\begin{compactitem}
	\item $M_i$ is a random set determined only by the random decisions of the algorithm before iteration $i$. Given these random decisions, $M_i$ is the set of elements that have a positive probability (in fact $1/k$) to become $u_i$.
	\item $w_i$ is a random value determined only by the random decisions of the algorithm before iteration $i$. Given these random decisions, $w_i$ is the (unique) value that $w$ will take if {\FillM} is called during this iteration.
\end{compactitem}

\inArxiv{The following lemma completes the proof of Theorem~\ref{th:cadinality_lazy}.}
\inConference{Theorem~\ref{th:cadinality_lazy} now follows by the next lemma.}

\begin{lemma} \label{le:complexity_improved}
Algorithm~\ref{alg:RandomLazyGreedyImproved} uses\inArxiv{, in expectation,} $O(k\sqrt{n\delta^{-1} \ln (k/\delta)} + n\delta^{-1} \ln (k/\delta))$ value oracle queries\inConference{, in expectation}.
\end{lemma}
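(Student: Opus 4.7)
The plan is to separate the total query cost into three parts: (i) all queries made inside calls to {\FillM}; (ii) the single query made per iteration to evaluate $f(u'_i \mid S)$; and (iii) the $O(k)$ queries needed to scan $M$ in iterations where $u'_i$ is rejected. Following the same amortized argument already used in Lemma~\ref{le:complexity_simple}, part~(i) costs $O(n\delta^{-1}\ln(k/\delta))$, because the {\Yield} mechanism ensures that no pair $(u,w)$ of element and threshold is ever examined twice across all invocations of {\FillM}. Part~(ii) contributes at most $k$ queries in total. All the real work therefore lies in bounding the expected cost of part~(iii).

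Let $X_i$ be the indicator of a rescan in iteration $i$, let $R = \sum_{i=1}^{k} X_i$, and let $w_i$ denote the value of $w$ at the moment $u'_i$ is chosen. Define $b_i$ as the number of non-dummy elements $u$ present in $M$ at the start of iteration $i$ for which $f(u \mid S_{i-1}) \leq (1-\delta)w_i$; by the acceptance test these are exactly the elements that would be removed if a rescan occurs. Conditioned on the history $\mathcal{F}_{i-1}$ up to the start of iteration $i$, both $M$ and $b_i$ are deterministic, and since $u'_i$ is chosen uniformly from the $k$ elements of $M$, a rescan is triggered with probability exactly $b_i/k$. In particular, $\mathbb{E}[R] = \mathbb{E}[\sum_i b_i]/k$.

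The key step is a charging argument. The total number of non-dummy elements ever removed from $M$ is exactly $\sum_i X_i b_i$, since removals only happen during rescans and each rescan removes exactly $b_i$ non-dummy elements. Every such removal must be preceded by a (non-dummy) addition, and every non-dummy addition to $M$ is charged to a single value oracle query inside some {\FillM} call. Hence deterministically $\sum_i X_i b_i \leq O(n\delta^{-1}\ln(k/\delta))$. Taking expectations and using $\mathbb{E}[X_i b_i \mid \mathcal{F}_{i-1}] = b_i \cdot \mathbb{E}[X_i \mid \mathcal{F}_{i-1}] = b_i^2/k$ yields $\mathbb{E}[\sum_i b_i^2] \leq k \cdot O(n\delta^{-1}\ln(k/\delta))$.

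The bound is closed via Cauchy--Schwarz: because at most $k$ iterations occur, $(\sum_i b_i)^2 \leq k \sum_i b_i^2$ pointwise; combining this with Jensen's inequality gives $(\mathbb{E}[\sum_i b_i])^2 \leq k \cdot \mathbb{E}[\sum_i b_i^2] = O(k^2 n\delta^{-1}\ln(k/\delta))$, whence $\mathbb{E}[R] = O(\sqrt{n\delta^{-1}\ln(k/\delta)})$ and part~(iii) costs $O(k\sqrt{n\delta^{-1}\ln(k/\delta)})$ queries in expectation. Summing the three parts recovers the bound stated in the lemma. The subtlety I expect to be most delicate is the bookkeeping around dummy elements, which never leave $M$ and incur no queries: one must verify that they neither contribute to $b_i$ nor to the addition/removal counting, and that when {\FillM} is re-entered after a rescan it truly continues with previously unexamined $(u,w)$ pairs rather than re-querying old ones against the updated $S$.
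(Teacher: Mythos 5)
Your proof is correct and follows essentially the same approach as the paper's: bound the expected rescan cost by exploiting the quadratic relationship between $b_i$ (the paper's $X_i$) and the number of additions/removals in $M$, cap the latter deterministically by the total work of {\FillM}, and close with Cauchy--Schwarz. The only cosmetic differences are that you count removals rather than additions and apply pointwise Cauchy--Schwarz followed by Jensen, whereas the paper applies Jensen per iteration (to get $(\mathbb{E}[X_i])^2 \leq \mathbb{E}[X_i^2]$) and then Cauchy--Schwarz on the resulting expectations; both orderings yield the same bound.
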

\begin{proof}
The function {\FillM} \inArxiv{uses}\inConference{makes use of} $O(n)$ value oracle queries for every value $w$ takes. Thus, it uses in total $O(n\delta^{-1} \ln (k/\delta))$ queries since the total number of values $w$ can take is at most:
\[
	\lceil \ln_{1 - \delta} (\delta / k) \rceil
	\leq
	1 + \frac{\ln (k / \delta)}{-\ln(1 - \delta)}
	\leq
	1 + \frac{\ln (k / \delta)}{\delta}
	\enspace.
\]

The rest of the proof bounds the expected number of value oracle queries made by the main part of Algorithm~\ref{alg:RandomLazyGreedyImproved}. For every $1 \leq i \leq k$, let $X_i$ be the (random) number of non-dummy elements in $M$ at the beginning of iteration $i$ whose marginal is $w(1 - \delta)$ or less (and thus, will force the algorithm to make value oracle queries if selected as $u'_i$). Clearly, the main part of Algorithm~\ref{alg:RandomLazyGreedyImproved} makes, in expectation, $O(k) \cdot \sum_{i = 1}^k \mathbb{E}[X_i/k] = O(1) \cdot \sum_{i = 1}^k \mathbb{E}[X_i]$ value oracle queries. On the other hand, the expected number of elements added to $M$ in iteration $i$ can be lower bounded by:
\[
	\sum_{j = 0}^k \left[\Pr[X_i = j] \cdot \frac{j}{k} \cdot j \right]
	=
	\frac{\mathbb{E}[X_i^2]}{k}
	\geq
	\frac{(\mathbb{E}[X_i])^2}{k}
	\enspace.
\]

Since the main part of Algorithm~\ref{alg:RandomLazyGreedyImproved} never removes dummy elements from $M$, {\FillM} might add to $M$ up to $2k$ dummy elements and up to $O(n\delta^{-1} \ln (k/\delta))$ other elements. Thus, the following must hold:
{
\inConference{\allowdisplaybreaks}
\begin{align*}
	&
	\sum_{i = 1}^k \frac{(\mathbb{E}[X_i])^2}{k} =	O(n\delta^{-1} \ln (k/\delta)) \inConference{\\}
	\Rightarrow\inConference{{} &}
	\sum_{i = 1}^k \mathbb{E}[X_i] = O(k\sqrt{n\delta^{-1} \ln (k/\delta)})
	\enspace.
	\qedhere
\end{align*}
}
\end{proof}
\section{Conclusion}

We presented fast algorithms for maximizing submodular functions subject to various constraints. Our algorithm for a general matroid constraint has the interesting property that the number of value oracle queries it uses can be reduced at the cost of more independence oracle queries, and vice versa.

As far as we know, such a property did not appear in any previously known algorithm for this problem (or other related problems). Thus, it can be interesting to determine whether this unusual property represents the real nature of the problem's complexity, or is an artifact of our algorithm. 

\apptocmd{\sloppy}{\hbadness 10000\relax}{}{}
\bibliographystyle{plain}
\bibliography{submodular}

\begin{thebibliography}{10}

\bibitem{AS99}
A.~A. Ageev and M.~I. Sviridenko.
\newblock An 0.828 approximation algorithm for the uncapacitated facility
  location problem.
\newblock {\em Discrete Appl. Math.}, 93:149--156, July 1999.

\bibitem{ABG13}
Per Austrin, Siavosh Benabbas, and Konstantinos Georgiou.
\newblock Better balance by being biased: A 0.8776-approximation for max
  bisection.
\newblock In {\em SODA}, pages 277--294, 2013.

\bibitem{Bach13}
Francis Bach.
\newblock Learning with submodular functions: A convex optimization
  perspective.
\newblock {\em Foundations and Trends in Machine Learning}, 6(2-3):145--373,
  2013.

\bibitem{BV14}
Ashwinkumar Badanidiyuru and Jan Vondr\'{a}k.
\newblock Fast algorithms for maximizing submodular functions.
\newblock In {\em SODA}, pages 1497--1514, 2014.

\bibitem{BJ01}
Y.~Y. Boykov and M.~P. Jolly.
\newblock {Interactive graph cuts for optimal boundary \& region segmentation
  of objects in N-D images}.
\newblock In {\em ICCV}, volume~1, pages 105--112, 2001.

\bibitem{B69}
Richard~A. Brualdi.
\newblock Comments on bases in dependence structures.
\newblock {\em Bull. of the Australian Math. Soc.}, 1(02):161--167, 1969.

\bibitem{BFNS14}
Niv Buchbinder, Moran Feldman, Joseph~(Seffi) Naor, and Roy Schwartz.
\newblock Submodular maximization with cardinality constraints.
\newblock In {\em SODA}, pages 1433--1452, 2014.

\bibitem{CCPV11}
Gruia Calinescu, Chandra Chekuri, Martin Pal, and Jan Vondr\'{a}k.
\newblock Maximizing a monotone submodular function subject to a matroid
  constraint.
\newblock {\em SIAM Journal on Computing}, 40(6):1740--1766, 2011.

\bibitem{CK05}
Chandra Chekuri and Sanjeev Khanna.
\newblock A polynomial time approximation scheme for the multiple knapsack
  problem.
\newblock {\em SIAM J. Comput.}, 35(3):713--728, September 2005.

\bibitem{CVZ10}
Chandra Chekuri, Jan Vondr\'{a}k, and Rico Zenklusen.
\newblock Dependent randomized rounding via exchange properties of
  combinatorial structures.
\newblock In {\em FOCS}, pages 575--584, 2010.

\bibitem{CKR06}
Reuven Cohen, Liran Katzir, and Danny Raz.
\newblock An efficient approximation for the generalized assignment problem.
\newblock {\em Information Processing Letters}, 100(4):162--166, 2006.

\bibitem{CFN77a}
G.~Cornuejols, M.~L. Fisher, and G.~L. Nemhauser.
\newblock Location of bank accounts to optimize float: an analytic study of
  exact and approximate algorithms.
\newblock {\em Management Sciences}, 23:789--810, 1977.

\bibitem{CFN77b}
G.~Cornuejols, M.~L. Fisher, and G.~L. Nemhauser.
\newblock On the uncapacitated location problem.
\newblock {\em Annals of Discrete Mathematics}, 1:163--177, 1977.

\bibitem{D11}
Benjamin Doerr.
\newblock Analyzing randomized search heuristics: Tools from probability
  theory.
\newblock In Anne Auger and Benjamin Doerr, editors, {\em Theory of Randomized
  Search Heuristics}, chapter~1. World Scientific Publishing, 2011.

\bibitem{F98}
Uriel Feige.
\newblock A threshold of $\ln n$ for approximating set cover.
\newblock {\em J. ACM}, 45(4):634–--652, 1998.

\bibitem{FG95}
Uriel Feige and Michel~X. Goemans.
\newblock Aproximating the value of two prover proof systems, with applications
  to max 2sat and max dicut.
\newblock In {\em ISTCS}, pages 182--189, 1995.

\bibitem{FMV11}
Uriel Feige, Vahab~S. Mirrokni, and Jan Vondr\'{a}k.
\newblock Maximizing non-monotone submodular functions.
\newblock {\em SIAM Journal on Computing}, 40(4):1133--1153, 2011.

\bibitem{FV06}
Uriel Feige and Jan Vondr\'{a}k.
\newblock Approximation algorithms for allocation problems: Improving the
  factor of $1 - 1/e$.
\newblock In {\em FOCS}, pages 667--–676, 2006.

\bibitem{FNS11}
Moran Feldman, Joseph~(Seffi) Naor, and Roy Schwartz.
\newblock A unified continuous greedy algorithm for submodular maximization.
\newblock In {\em FOCS}, 2011.

\bibitem{FGMS06}
Lisa Fleischer, Michel~X. Goemans, Vahab~S. Mirrokni, and Maxim Sviridenko.
\newblock Tight approximation algorithms for maximum general assignment
  problems.
\newblock In {\em SODA}, pages 611--–620, 2006.

\bibitem{FJ95}
Alan~M. Frieze and Mark Jerrum.
\newblock Improved approximation algorithms for max k-cut and max bisection.
\newblock In {\em IPCO}, pages 1--13, 1995.

\bibitem{GV11}
Shayan~Oveis Gharan and Jan Vondr\'{a}k.
\newblock Submodular maximization by simulated annealing.
\newblock In {\em SODA}, pages 1098--1117, 2011.

\bibitem{GW95}
Michel~X. Goemans and David~P. Williamson.
\newblock Improved approximation algorithms for maximum cut and satisfiability
  problems using semidefinite programming.
\newblock {\em Journal of the ACM}, 42(6):1115--1145, 1995.

\bibitem{HZ01}
Eran Halperin and Uri Zwick.
\newblock Combinatorial approximation algorithms for the maximum directed cut
  problem.
\newblock In {\em SODA}, pages 1--7, 2001.

\bibitem{HMS08}
Jason Hartline, Vahab Mirrokni, and Mukund Sundararajan.
\newblock Optimal marketing strategies over social networks.
\newblock In {\em WWW}, pages 189--198, 2008.

\bibitem{H01}
Johan H{\.{a}}stad.
\newblock Some optimal inapproximability results.
\newblock {\em J. ACM}, 48:798--859, July 2001.

\bibitem{JB11}
S.~Jegelka and J.~Bilmes.
\newblock Submodularity beyond submodular energies: Coupling edges in graph
  cuts.
\newblock {\em 2012 IEEE Conference on Computer Vision and Pattern
  Recognition}, 0:1897--1904, 2011.

\bibitem{K72}
Richard~M. Karp.
\newblock Reducibility among combinatorial problems.
\newblock In R.~E. Miller and J.~W. Thatcher, editors, {\em Complexity of
  Computer Computations}, pages 85--103. Plenum Press, 1972.

\bibitem{KKT03}
David Kempe, Jon Kleinberg, and \'{E}va Tardos.
\newblock Maximizing the spread of influence through a social network.
\newblock In {\em SIGKDD}, pages 137--146, 2003.

\bibitem{KKMO07}
Subhash Khot, Guy Kindler, Elchanan Mossel, and Ryan O'Donnell.
\newblock Optimal inapproximability results for max-cut and other 2-variable
  csps?
\newblock {\em SIAM J. Comput.}, 37:319--357, April 2007.

\bibitem{KMN99}
S.~Khuller, A.~Moss, and J.~Naor.
\newblock The budgeted maximum coverage problem.
\newblock {\em Information Processing Letters}, 70(1):39--45, 1999.

\bibitem{KSG08}
Andreas Krause, AjitSingh, and Carlos Guestrin.
\newblock Near-optimal sensor placements in gaussian processes: Theory,
  efficient algorithms and empirical studies.
\newblock {\em J. Mach. Learn. Res.}, 9:235--284, January 2008.

\bibitem{KG05}
Andreas Krause and Carlos Guestrin.
\newblock Near-optimal nonmyopic value of information in graphical models.
\newblock In {\em UAI}, page~5, 2005.

\bibitem{KLGVF08}
Andreas Krause, Jure Leskovec, Carlos Guestrin, Jeanne VanBriesen, and Christos
  Faloutsos.
\newblock Efficient sensor placement optimization for securing large water
  distribution networks.
\newblock {\em Journal of Water Resources Planning and Management},
  134(6):516--526, November 2008.

\bibitem{LB10}
Hui Lin and Jeff Bilmes.
\newblock Multi-document summarization via budgeted maximization of submodular
  functions.
\newblock In {\em North American chapter of the Association for Computational
  Linguistics/Human Language Technology Conference (NAACL/HLT-2010)}, Los
  Angeles, CA, June 2010.

\bibitem{LB11}
Hui Lin and Jeff Bilmes.
\newblock A class of submodular functions for document summarization.
\newblock In {\em HLT}, pages 510--520, 2011.

\bibitem{NW78}
G.~L. Nemhauser and L.~A. Wolsey.
\newblock Best algorithms for approximating the maximum of a submodular set
  function.
\newblock {\em Mathematics of Operations Research}, 3(3):177--188, 1978.

\bibitem{NWF78}
G.~L. Nemhauser, L.~A. Wolsey, and M.~L. Fisher.
\newblock An analysis of approximations for maximizing submodular set
  functions—i.
\newblock {\em Mathematical Programming}, 14:265--294, 1978.

\bibitem{S03}
A.~Schrijver.
\newblock {\em Combinatorial Optimization: Polyhedra and E±ciency}.
\newblock Springer, 2003.

\bibitem{TSSW00}
Luca Trevisan, Gregory~B. Sorkin, Madhu Sudan, and David~P. Williamson.
\newblock Gadgets, approximation, and linear programming.
\newblock {\em SIAM J. Comput.}, 29:2074--2097, April 2000.

\bibitem{V13}
Jan Vondr{\'a}k.
\newblock Symmetry and approximability of submodular maximization problems.
\newblock {\em SIAM J. Comput.}, 42(1):265--304, 2013.

\end{thebibliography}

\appendix
\section{Proof of Theorem~\ref{th:monotone_cardinality}} \label{app:monotone_cardinality}

In this section we prove the folklore result given by Theorem~\ref{th:monotone_cardinality}. Notice that for $\ee \in (0, e^{-k}]$ the the standard greedy algorithm of~\cite{NWF78} fulfills all the requirements the theorem, and for $\ee \geq 1 - e^{-1}$ the theorem is void. Thus, from this point on we assume $\ee \in (e^{-k}, 1 - e^{-1})$. The algorithm we use to prove Theorem~\ref{th:monotone_cardinality} is Algorithm~\ref{alg:PartialGreedy} with the parameters $s = 1$ and $p = \ln \ee^{-1} / k$. Notice that $p \in (0, 1]$ since $\ee > e^{-k}$. We restate Algorithm~\ref{alg:PartialGreedy} with these parameter values as Algorithm~\ref{alg:PartialGreedyMonotone}. Since the objective function $f$ is assumed to be monotone in Theorem~\ref{th:monotone_cardinality}, the restatement can safely omit the check on Line~\ref{ln:avoid_negative} of Algorithm~\ref{alg:PartialGreedy}.

\begin{algorithm}[h!t]
\caption{\textsf{Random Sampling Algorithm for Monotone Objectives}$(f, k)$} \label{alg:PartialGreedyMonotone}
\DontPrintSemicolon
Initialize: $S_0 \leftarrow \varnothing$.\\
\For{$i$ = $1$ \KwTo $k$}
{
    Let $M_i$ be a uniformly random set containing $\lceil \frac{n \cdot \ln \ee^{-1}}{k} \rceil$ elements of $\NN$.\\
		Let $u_i$ be the element of $M_i$ with the largest marginal contribution to $S_{i - 1}$.\\
		$S_i \leftarrow S_{i - 1} \cup \{u_i\}$.\\
}
\Return{$S_k$}.
\end{algorithm}

First, the following observation follows by plugging our chosen value for $p$ into Observation~\ref{ob:genral_complexity}.

\begin{observation}
Algorithm~\ref{alg:PartialGreedyMonotone} uses $O(n \ln \ee^{-1})$ value oracle queries.
\end{observation}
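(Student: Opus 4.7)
The plan is to obtain the bound as a direct specialization of Observation~\ref{ob:genral_complexity}, which already gives a generic $O(k + nkp)$ value oracle query bound for Algorithm~\ref{alg:PartialGreedy} regardless of the choice of $s$. Since Algorithm~\ref{alg:PartialGreedyMonotone} is precisely Algorithm~\ref{alg:PartialGreedy} instantiated with $s = 1$ and $p = \ln \ee^{-1} / k$ (together with the harmless removal of the non-negativity check on Line~\ref{ln:avoid_negative}, which is safe under monotonicity), I would first check that these parameter values are admissible: $s = 1$ trivially lies in $[1, \lceil pn \rceil]$, while $p \in (0, 1]$ holds because under the running assumption $\ee > e^{-k}$ we have $\ln \ee^{-1} < k$.

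Next I would substitute $p = \ln \ee^{-1} / k$ into the generic bound $O(k + nkp)$ of Observation~\ref{ob:genral_complexity} to get $O(k + n \ln \ee^{-1})$. The only remaining step is to absorb the $k$ term into the second term. This uses the other running assumption $\ee < 1 - e^{-1}$, which implies $\ln \ee^{-1} > -\ln(1 - e^{-1})$, a fixed positive absolute constant. Combined with the trivial bound $k \leq n$, this yields $k = O(n) = O(n \ln \ee^{-1})$, so the total is $O(n \ln \ee^{-1})$, as required.

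There is essentially no obstacle in this argument; the only point that requires a moment's thought is verifying that $\ln \ee^{-1}$ is bounded below by a positive constant in the relevant regime, which is exactly what the section's preliminary case analysis on $\ee$ (covering $\ee \in (0, e^{-k}]$ by the greedy algorithm of~\cite{NWF78} and declaring the theorem void for $\ee \geq 1 - e^{-1}$) ensures.
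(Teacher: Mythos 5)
Your proposal is correct and matches the paper's own argument, which likewise obtains the bound by plugging $p = \ln \ee^{-1}/k$ into the generic $O(k + nkp)$ bound of Observation~\ref{ob:genral_complexity}. Your extra care in absorbing the $O(k)$ term via $k \leq n$ and the fact that $\ln \ee^{-1}$ is bounded below by a positive constant when $\ee < 1 - e^{-1}$ simply makes explicit a step the paper leaves implicit.
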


The following lemma lower bounds the expected improvement in the solution of Algorithm~\ref{alg:PartialGreedyMonotone} in every iteration.

\begin{lemma} \label{le:gain}
For every $1 \leq i \leq k$, $\mathbb{E}[f(u_i \mid S_{i - 1})] \geq (1 - \ee) \cdot \frac{f(OPT) - \mathbb{E}[f(S_{i - 1})]}{k}$.
\end{lemma}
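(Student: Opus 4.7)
The plan is to condition on $S_{i-1}$ (so all randomness before iteration $i$ is fixed) and establish the pointwise inequality $\mathbb{E}[f(u_i \mid S_{i-1}) \mid S_{i-1}] \geq (1-\ee)(f(OPT) - f(S_{i-1}))/k$; the stated bound then follows by taking expectation over $S_{i-1}$ and applying the tower property, since the bound is linear in $f(S_{i-1})$. The sampling rate $p = \ln(\ee^{-1})/k$ is chosen precisely so that $M_i$ is very unlikely to miss \emph{all} of the $k$ highest-marginal elements, and the argument revolves around exploiting this.

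First, I would sort $\NN$ as $v_1, v_2, \dotsc, v_n$ by decreasing marginal $g_j := f(v_j \mid S_{i-1})$. By monotonicity and submodularity, $f(OPT) - f(S_{i-1}) \leq f(OPT \cup S_{i-1}) - f(S_{i-1}) \leq \sum_{u \in OPT} f(u \mid S_{i-1})$, and since $v_1, \dotsc, v_k$ carry the top $k$ marginals and $|OPT| \leq k$, this gives $\sum_{j=1}^k g_j \geq f(OPT) - f(S_{i-1})$. Let $T = M_i \cap \{v_1, \dotsc, v_k\}$ and let $E$ be the event $T \neq \varnothing$. Since $u_i$ is the element of $M_i$ with the largest marginal and the $g_j$ are non-negative, on the event $E$,
\[
    f(u_i \mid S_{i-1}) \geq \max_{v_j \in T} g_j \geq \frac{1}{|T|} \sum_{v_j \in T} g_j \enspace.
\]

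Next I would analyze $\mathbb{E}[\tfrac{1}{|T|}\sum_{v_j \in T} g_j \mid E]$ by a symmetry argument. Conditional on $|T| = t$, the set $T$ is distributed uniformly over the size-$t$ subsets of $\{v_1, \dotsc, v_k\}$, since the distribution of $M_i$ (a uniform size-$\lceil pn\rceil$ subset of $\NN$) is invariant under any permutation of $\{v_1, \dotsc, v_k\}$. Hence each $v_j$ with $j \leq k$ lies in $T$ with probability $t/k$ given $|T| = t$, so
\[
    \mathbb{E}\!\left[\tfrac{1}{|T|} \sum_{v_j \in T} g_j ~\middle|~ |T| = t\right] = \tfrac{1}{k} \sum_{j=1}^k g_j \geq \frac{f(OPT) - f(S_{i-1})}{k} \enspace,
\]
and averaging over $t \geq 1$ yields the same bound for $\mathbb{E}[f(u_i \mid S_{i-1}) \mid E]$. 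Finally, the miss probability is $\Pr[\overline{E}] = \binom{n-k}{\lceil pn\rceil}/\binom{n}{\lceil pn\rceil} = \prod_{j=0}^{k-1}\bigl(1 - \lceil pn\rceil/(n-j)\bigr) \leq (1-p)^k \leq e^{-pk} = \ee$. Combining these, $\mathbb{E}[f(u_i \mid S_{i-1}) \mid S_{i-1}] \geq \Pr[E]\cdot \mathbb{E}[f(u_i \mid S_{i-1}) \mid E] \geq (1-\ee)(f(OPT) - f(S_{i-1}))/k$.

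The main obstacle I anticipate is purely bookkeeping around the symmetry step and the hypergeometric tail bound—nothing like the careful Chebyshev/exchange argument used for the non-monotone Lemma~\ref{le:improvement_fixed} is required here, because monotonicity rules out negative marginals and lets us drop non-top-$k$ contributions freely. A natural alternative, mirroring Lemma~\ref{le:improvement_fixed} more closely, would write $f(u_i \mid S_{i-1}) = \sum_{j=1}^n X_j g_j$ with $X_j = \mathbb{1}[u_i = v_j]$, apply Chebyshev's sum inequality to the non-increasing sequences $\mathbb{E}[X_j]$ and $g_j$ for $j \leq k$ (with $\mathbb{E}[X_j]$ non-increasing shown by the same exchange argument as Lemma~\ref{le:non-decreasing}), and bound $\sum_{j=1}^k \mathbb{E}[X_j] = \Pr[E] \geq 1-\ee$ exactly as above.
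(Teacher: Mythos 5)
Your proof is correct, and it takes a genuinely different route from the paper's. The paper's argument for Lemma~\ref{le:gain} writes $f(u_i \mid S_{i-1})$ via an Abel/summation-by-parts decomposition using the indicators $X_j = \mathbb{1}[M_i \cap \{v_1,\dotsc,v_j\} \neq \varnothing]$ (so $X_j - X_{j-1}$ picks out the first $v_j$ hit), lower bounds each $\mathbb{E}[X_j] \geq 1 - (1-p)^j$, collapses the sum to $p\sum_{j=1}^k (1-p)^{j-1} f(v_j \mid S_{i-1})$, and then applies Chebyshev's sum inequality to the two non-increasing sequences $(1-p)^{j-1}$ and $f(v_j \mid S_{i-1})$. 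You instead replace $f(u_i \mid S_{i-1})$ by the average of $g_j$ over $T = M_i \cap \{v_1,\dotsc,v_k\}$ (using max $\geq$ mean), exploit exchangeability of uniform sampling to show that conditioned on $|T| = t$ each $v_j$ lies in $T$ with probability exactly $t/k$, which gives the \emph{exact} conditional expectation $\frac{1}{k}\sum_j g_j$, and finish by bounding the miss probability $\Pr[T = \varnothing]$ with the hypergeometric tail. Your route is arguably cleaner: it avoids the Chebyshev rearrangement step entirely and gives the clean factorization $\Pr[E]\cdot\frac{1}{k}\sum_j g_j$, which is a slightly tighter intermediate bound than the paper's $(1-(1-p)^k)\cdot\frac{1}{k}\sum_j g_j$ (since $\Pr[\bar E] \leq (1-p)^k$). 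The one thing you rely on that the paper sidesteps is the observation that $T$ is uniform conditional on its size — this is where the ``apples-to-apples'' symmetry lives; it is correct but worth stating explicitly since it is the crux of the argument. The paper's Abel approach is more mechanical but perhaps more portable, as it avoids reasoning about the conditional distribution of $T$ altogether. Your alternative sketch (exchange argument for monotonicity of $\mathbb{E}[X_j]$ with $X_j = \mathbb{1}[u_i = v_j]$, then Chebyshev) is essentially the paper's proof of Lemma~\ref{le:improvement_fixed} for the non-monotone case, not the paper's proof of this monotone lemma, so it too differs from what is actually written.
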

\begin{proof}
Let $A_i$ be an event specifying the random decisions of Algorithm~\ref{alg:PartialGreedyMonotone} up to iteration $i$ (excluding). If the lemma holds conditioned on every given event $A_i$, then it holds also unconditionally. Hence, in the rest of the proof we fix an event $A_i$ and prove the lemma conditioned on this event. All the probabilities and expectations in the proof are implicitly conditioned on $A_i$. Notice that $S_{i - 1}$ is a deterministic set when conditioned on $A_i$.

Let $v_1, v_2, \ldots, v_k$ be the $k$ elements \inArxiv{with}\inConference{having} the largest marginal contributions to $S_{i - 1}$, sorted in a non-increasing marginal contribution order. Additionally, let $X_j$ be an indicator for the event $M_i \cap \{v_1, v_2, \dotsc, v_j\} \neq \varnothing$. Using this notation, it is possible to lower bound $f(u_i \mid S_{i - 1})$ as follows.
\begin{align*}
    f(u_i \mid S_{i - 1}&)
    \geq
    X_k \cdot f(v_k \mid S_{i - 1}) \inConference{\\ &}+ \sum_{j = 1}^{k - 1} X_j \cdot [f(v_j \mid S_{i - 1}) - f(v_{j + 1} \mid S_{i - 1})]
    \enspace.
\end{align*}
On the other hand, for every $1 \leq j \leq k$, we can lower bound $\mathbb{E}[X_j]$ as follows. If $j + \lceil pn \rceil > n$, then $\mathbb{E}[X_j] = 1 \geq 1 - (1 - p)^j$. Otherwise,
\begin{align*}
	\mathbb{E}[X_j]
	={} &
	1 - \frac{\binom{n - j}{\lceil pn \rceil}}{\binom{n}{\lceil pn \rceil}}
	=
	1 - \prod_{r = 0}^{j - 1} \frac{n - \lceil pn \rceil - r}{n - r}\inConference{\\}
	\geq\inConference{{} &}
	1 - \prod_{r = 0}^{j - 1} \frac{n - pn}{n}
	=
	1 - (1 - p)^j
	\enspace.
\end{align*}
Combining the two above observations with the linearity of the expectation, we get:
{
\inConference{\allowdisplaybreaks}
\begin{align*}
    \mathbb{E}[f(\inConference{&}u_i \mid S_{i - 1})]
    \geq\inArxiv{{} &}
    \mathbb{E}[X_k] \cdot f(v_k \mid S_{i - 1}) \inConference{\\ &} + \sum_{j = 1}^{k - 1} \mathbb{E}[X_j] \cdot [f(v_j \mid S_{i - 1}) - f(v_{j + 1} \mid S_{i - 1})]\\
		\geq{} &
		[1 - (1 - p)^k] \cdot f(v_k \mid S_{i - 1}) \inConference{\\ &} + \sum_{j = 1}^{k - 1} [1 - (1 - p)^j] \cdot [f(v_j \mid S_{i - 1}) - f(v_{j + 1} \mid S_{i - 1})]\\
    ={} &
    p \cdot \sum_{j = 1}^k (1 - p)^{j - 1} f(v_j \mid S_{i - 1})
    \enspace,
\end{align*}
}
where the second inequality holds since $f(v_j \mid S_{i - 1}) - f(v_{j + 1} \mid S_{i - 1})$ is always non-negative.

Consider the sum on the rightmost hand side of the above inequality. Every term of this sum is a multiplication of two non-increasing functions of $j$: $(1 - p)^{j - 1}$ and $f(v_j \mid S_{i - 1})$. This allows us to use Chebyshev's sum inequality to bound this sum as follows:
\begin{align*}
    \mathbb{E}[f(u_i \mid S_{i - 1})]\inConference{& \\}
    \geq{} &
    p \cdot \frac{\sum_{j = 1}^k f(v_j \mid S_{i - 1}) \cdot \sum_{j = 1}^k (1 - p)^{j - 1}}{k}\inConference{\\}
    =\inConference{{} &}
    (1 - (1 - p)^k) \cdot \frac{\sum_{j = 1}^k f(v_j \mid S_{i - 1})}{k}\\
    \geq{} &
    (1 - e^{-kp}) \cdot \frac{\sum_{j = 1}^k f(v_j \mid S_{i - 1})}{k}\inConference{\\}
		=\inConference{{} &}
		(1 - \ee) \cdot \frac{\sum_{j = 1}^k f(v_j \mid S_{i - 1})}{k}
    \enspace.
\end{align*}

The lemma now follows by observing that by the definition of the $v_j$'s and the submodularity and monotonicity of $f$,
\begin{align*}
		&
    \sum_{j = 1}^k f(v_j \mid S_{i - 1})
    \geq
    \sum_{u \in OPT} f(u \mid S_{i - 1})\inConference{\\}
    \geq\inConference{{} &}
    f(OPT \cup S_{i - 1}) - f(S_{i - 1})
    \geq
    f(OPT) - f(S_{i - 1})
    \enspace.
		\qedhere
\end{align*}
\end{proof}

\begin{corollary}
For every $0 \leq i \leq k$, $\mathbb{E}[f(S_i)] \geq \left[1 - e^{-\frac{i \cdot (1 - \ee)}{k}}\right] \cdot f(OPT)$.
\end{corollary}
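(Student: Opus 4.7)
The plan is to proceed by induction on $i$, using Lemma~\ref{le:gain} as the engine that drives the recursion. The base case $i = 0$ is immediate since $\mathbb{E}[f(S_0)] = f(\varnothing) \geq 0$ and the right hand side evaluates to $[1 - e^0] \cdot f(OPT) = 0$.

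For the inductive step, assume the bound holds for $i - 1$ and observe that $f(S_i) = f(S_{i-1}) + f(u_i \mid S_{i-1})$ by definition. Taking expectations and applying Lemma~\ref{le:gain} yields
\[
	\mathbb{E}[f(S_i)] \geq \mathbb{E}[f(S_{i-1})] + \frac{1 - \ee}{k}\bigl(f(OPT) - \mathbb{E}[f(S_{i-1})]\bigr)
	= \Bigl(1 - \tfrac{1 - \ee}{k}\Bigr) \mathbb{E}[f(S_{i-1})] + \tfrac{1 - \ee}{k} f(OPT)
	\enspace.
\]
The coefficient $1 - (1 - \ee)/k$ is non-negative (because $\ee < 1$ and $k \geq 1$), so I can substitute the inductive hypothesis into the right-hand side without reversing the inequality.

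After substitution, the bound becomes
\[
	\mathbb{E}[f(S_i)] \geq \Bigl(1 - \tfrac{1 - \ee}{k}\Bigr)\bigl[1 - e^{-(i-1)(1-\ee)/k}\bigr] f(OPT) + \tfrac{1 - \ee}{k} f(OPT)
	\enspace.
\]
Expanding and collecting terms leaves a factor of $f(OPT)$ multiplying $1 - \bigl(1 - (1-\ee)/k\bigr) e^{-(i-1)(1-\ee)/k}$, and the only remaining step is the standard inequality $1 - x \leq e^{-x}$ applied with $x = (1-\ee)/k$, which upgrades $(1 - (1-\ee)/k) e^{-(i-1)(1-\ee)/k}$ to $e^{-i(1-\ee)/k}$, completing the induction.

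I do not expect any real obstacle: the argument is a textbook telescoping of the greedy recursion that converts an additive improvement of $(1-\ee)/k$ times the gap to $f(OPT)$ into an exponential convergence bound. The only minor care is to ensure the coefficient $1 - (1-\ee)/k$ is non-negative so that the inductive hypothesis can be inserted in the correct direction, and to invoke $1 - x \leq e^{-x}$ at the last step. No further probabilistic argument is needed, since Lemma~\ref{le:gain} has already absorbed all randomness into its expectation bound.
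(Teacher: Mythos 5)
Your proof is correct and takes essentially the same route as the paper: both rest on Lemma~\ref{le:gain} to set up the one-step recursion $f(OPT) - \mathbb{E}[f(S_i)] \leq (1 - (1-\ee)/k)\bigl(f(OPT) - \mathbb{E}[f(S_{i-1})]\bigr)$ and then invoke $1 - x \leq e^{-x}$. The only cosmetic difference is that you thread the exponential bound through each inductive step, whereas the paper first telescopes to $\mathbb{E}[f(S_i)] \geq [1-(1-\alpha)^i] f(OPT)$ and applies $1-\alpha \leq e^{-\alpha}$ once at the end; the substance is identical.
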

\begin{proof}
Let us denote $\alpha = k^{-1}(1 - \ee)$. Then, by Lemma~\ref{le:gain}, for every $1 \leq i \leq k$,
\begin{align*}
    \mathbb{E}[f(S_i) - f(S_{i - 1})]
    ={} &
    \mathbb{E}[f(u_i \mid S_{i - 1})]\inConference{\\}
    \geq\inConference{{} &}
    \alpha[f(OPT) - \mathbb{E}[f(S_{i - 1})]]
    \enspace.
\end{align*}
Rearranging, we get:
\[
    f(OPT) - \mathbb{E}[f(S_i)]
    \leq
    (1 - \alpha) \cdot [f(OPT) - \mathbb{E}[f(S_{i - 1})]]
    \enspace.
\]
Combining the above inequalities gives:
\begin{align*}
    f(OPT) - \mathbb{E}[f(S_i)]
    \leq{} &
    (1 - \alpha)^i \cdot [f(OPT) - \mathbb{E}[f(S_0)]]\inConference{\\}
    \leq\inConference{{} &}
    (1 - \alpha)^i \cdot f(OPT)
    \enspace.
\end{align*}
Rearranging once more, yields:
\begin{align*}
    \mathbb{E}[f(S_i)]
    \geq{} &
    \left[1 - (1 - \alpha)^i \right] \cdot f(OPT)\inConference{\\}
		\geq\inConference{{} &}
		\left[1 - e^{-i \cdot \alpha} \right] \cdot f(OPT)
    \enspace.
		\qedhere
\end{align*}
\end{proof}

The above corollary implies that $\mathbb{E}[f(S_k)] \geq (1 - e^{\ee-1}) \cdot f(OPT)$. Theorem~\ref{th:monotone_cardinality} follows by combining this inequality with the following lemma.

\begin{lemma} \label{le:inequality}
$1 - e^{\ee-1} \geq 1 - e^{-1} - \ee$.
\end{lemma}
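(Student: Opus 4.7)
\textbf{Proof plan for Lemma~\ref{le:inequality}.} The plan is to first algebraically simplify the claim. Rearranging $1 - e^{\ee - 1} \geq 1 - e^{-1} - \ee$ and multiplying through by $e$, the inequality is equivalent to
\[
	e^{\ee} \leq 1 + e\ee.
\]
So the task reduces to establishing this single-variable exponential inequality on the relevant range of $\ee$.

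Next I would use the range restriction already in force. Recall that the preceding discussion in Appendix~\ref{app:monotone_cardinality} assumes $\ee \in (e^{-k}, 1 - e^{-1})$, which in particular is contained in $[0,1]$. On $[0,1]$, the convexity of $e^x$ gives the standard secant bound
\[
	e^{\ee} \leq (1-\ee)\cdot e^0 + \ee \cdot e^1 = 1 + (e-1)\ee \leq 1 + e\ee,
\]
which is exactly what we need. Alternatively, one can set $g(\ee) = 1 + e\ee - e^{\ee}$ and verify $g(0)=0$ together with $g'(\ee) = e - e^{\ee} \geq 0$ for $\ee \leq 1$, so that $g$ is non-negative on $[0,1]$; this yields the same conclusion without invoking convexity by name.

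There is really no obstacle here: the whole lemma is a one-line calculus fact once the claim is rewritten in the form $e^{\ee} \leq 1 + e\ee$. The only mild subtlety is to notice that the restriction $\ee < 1 - e^{-1} < 1$ is used implicitly, since the inequality $e^{\ee} \leq 1 + e\ee$ fails for large $\ee$ (beyond roughly the point where $g'$ turns negative for long enough, i.e., past $\ee = 1$). Within the assumed range this is safely satisfied, which completes the proof.
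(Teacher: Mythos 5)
Your proof is correct. You rewrite the claim as $e^{\ee} \leq 1 + e\ee$ and establish it on $[0,1]$ by the secant (convexity) bound $e^{\ee} \leq 1 + (e-1)\ee$, with the derivative argument as a backup. The paper takes the dual route: it rewrites the same inequality as $\ee - 1 \leq \ln(e^{-1}+\ee)$ and proves it by lower-bounding $\ln(e^{-1}+\ee) = \int_1^{e^{-1}+\ee}\!\frac{dx}{x}$, splitting the integral at $e^{-1}$ and using $\frac{1}{x} \geq \frac{1}{e^{-1}+\ee}$ on the second piece, which needs the hypothesis $\ee < 1 - e^{-1}$ to close. Your version is arguably a touch cleaner, since the convexity bound gives $1 + (e-1)\ee \leq 1+e\ee$ with room to spare and only needs $\ee \leq 1$ rather than the sharper $\ee < 1-e^{-1}$; the paper's version, on the other hand, uses the stated range more directly and keeps the manipulation in terms of $\ln$ rather than $\exp$. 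Both are elementary one-variable calculus arguments and either would do. One very small imprecision in your write-up: the inequality $e^{\ee}\leq 1+e\ee$ actually persists somewhat past $\ee=1$ (note $g(1)=1>0$ with your $g$), so ``fails past $\ee=1$'' overstates it slightly, though this has no bearing on the proof itself.
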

\begin{proof}
Notice that:
\begin{align*}
		1 - e^{\ee-1} \geq 1 - e^{-1} - \ee
		\Leftrightarrow{} &
    e^{\ee-1} \leq e^{-1} + \ee\inConference{\\}
    \Leftrightarrow\inConference{{} &}
    \ee - 1 \leq \ln(e^{-1} + \ee)
    \enspace.
\end{align*}
By the definition of $\ln$:
\begin{align*}
    \ln(e^{-1} + \ee)
    ={} &
    \int_1^{e^{-1} + \ee} \frac{dx}{x}
    =
    \int_1^{e^{-1}} \frac{dx}{x} + \int_{e^{-1}}^{e^{-1} + \ee} \frac{dx}{x}\inConference{\\}
    \geq\inConference{{} &}
    \ln e^{-1} + \ee \cdot \frac{1}{e^{-1} + \ee}
    \geq
    -1 + \ee
    \enspace,
\end{align*}
where the last inequality holds since $\ee < 1 - e^{-1}$.
\end{proof}
\section{Proof of Lemma~\ref{le:crude_approximation}} \label{app:crude_approximation}

In this section we prove Lemma~\ref{le:crude_approximation}, \ie, we describe a $(1/3)$-approximation algorithm for the problem $\max \{f(S) \mid S \in \II\}$ using $O(n \ln k)$ value and independence oracle queries. The algorithm we describe (given as Algorithm~\ref{alg:LazyGreedy}) is a close variant of an algorithm suggested by~\cite{BV14} for the case of a cardinality constraint. We assume in the analysis of the algorithm that $\ee \in (0, 1)$.

\begin{algorithm}
\caption{\textsf{Thresholding Greedy}$(f, \MM, \ee)$} \label{alg:LazyGreedy}
\DontPrintSemicolon
Let $S \leftarrow \varnothing$.\\
Let $W, w \leftarrow \max_{u \in \NN} f(u)$.\\
\For{$(w \leftarrow W; w > \ee W / k; w \leftarrow w(1 - \ee))$}
{
	\ForEach{$u \in \NN$}
	{
		\lIf{$S \cup \{u\} \in \II$ and $f(u \mid S) \geq w$}
		{
			Add $u$ to $S$.
		}
	}
}
\Return{$S$}.\\
\end{algorithm}

\begin{observation}
Algorithm~\ref{alg:LazyGreedy} outputs an independent set and uses $O(n\ee^{-1} \ln (k / \ee))$ value and independence oracle queries.
\end{observation}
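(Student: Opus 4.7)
The statement has two parts, both of which are essentially bookkeeping. My plan is to dispatch each in turn, with no significant obstacle expected.

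For the first part (the output is independent), I would argue by induction on the sequence of additions to $S$. Initially $S = \varnothing \in \II$, and whenever the algorithm adds $u$ to $S$, it does so only after verifying the condition $S \cup \{u\} \in \II$. Hence every prefix of $S$ is independent, and so is the final returned set.

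For the second part (the query bound), I would bound the total work as (iterations of the outer $w$-loop) $\times$ (queries per $w$). Each iteration of the inner \textbf{foreach} loop performs exactly one independence-oracle query and at most one value-oracle query, and the inner loop runs $n$ times, so each value of $w$ contributes $O(n)$ queries of each type. It therefore suffices to bound the number of values $w$ attains. Since $w$ starts at $W$ and is multiplied by $(1-\ee)$ until it drops at or below $\ee W/k$, the number of iterations is
\[
	\left\lceil \log_{1-\ee}\!\left(\frac{\ee}{k}\right) \right\rceil
	\;\le\; 1 + \frac{\ln(k/\ee)}{-\ln(1-\ee)}
	\;\le\; 1 + \frac{\ln(k/\ee)}{\ee},
\]
using the standard inequality $-\ln(1-\ee) \ge \ee$ for $\ee \in (0,1)$. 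Multiplying by $O(n)$ queries per outer iteration gives the claimed $O(n\ee^{-1}\ln(k/\ee))$ bound for both oracle types, and also covers the single $\max_{u \in \NN} f(u)$ evaluation at initialization, which is only $O(n)$ queries.

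There is no real obstacle here: both claims follow directly from the structure of Algorithm~\ref{alg:LazyGreedy} together with the standard logarithm estimate. The only minor point to state explicitly is that, because each inner iteration is dominated by a single pair of oracle calls, the same counting bounds independence-oracle and value-oracle queries simultaneously.
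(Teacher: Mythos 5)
Your proof is correct and follows essentially the same route as the paper: independence is immediate from the explicit check $S \cup \{u\} \in \II$ before each addition, and the query count is the product of the $O(n)$ per-threshold work with the $\lceil \log_{1-\ee}(\ee/k)\rceil \leq 1 + \ee^{-1}\ln(k/\ee)$ bound on the number of threshold values. The only (harmless) addition on your side is explicitly accounting for the $O(n)$ initialization queries for $W$.
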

\begin{proof}
The first part of the observation holds since Algorithm~\ref{alg:LazyGreedy} does not add an element $u$ to $S$ unless $S \cup \{u\} \in \II$ before the addition. The second part of the observation follows by multiplying three values:
\begin{compactitem}
	\item Each iteration of the internal loop makes $O(1)$ queries to each oracle.
	\item The internal loop repeats $n$ times.
	\item The number of iterations performed by the external loop is:
	\[
		\lceil \ln_{1 - \ee} (\ee / k) \rceil
		\leq
		1 - \frac{\ln (k / \ee)}{\ln(1 - \ee)}
		\leq
		1 + \frac{\ln (k / \ee)}{\ee}
		\enspace.
		\qedhere
	\]
\end{compactitem}
\end{proof}

Next, let us analyze the approximation ratio of Algorithm~\ref{alg:LazyGreedy}. Let $\ell$ be the size of the solution produced by the algorithm, and let $S_i$ be the set $S$ after $i$ elements were added to it. For consistency, we also define $S_0 = \varnothing$. For every $0 \leq i \leq \ell$, let $OPT_i$ be the maximum value independent set containing $S_i$. The following lemma lower bounds the gain of $S_i$ (as a function of $i$) in terms of the loss of $OPT_i$ (again, as a function of $i$).

\begin{lemma} \label{le:improvement_lost_balance}
For every $1 \leq i \leq \ell$, $(1 - \ee) \cdot [f(OPT_{i - 1}) - f(OPT_i)] \leq  f(S_i) - f(S_{i - 1})$.
\end{lemma}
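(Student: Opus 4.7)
My plan is to fix an iteration $1 \leq i \leq \ell$, denote by $u_i$ the element added at that iteration, and let $w_i$ be the value of $w$ at the moment of addition, so the algorithm guarantees $f(u_i \mid S_{i-1}) \geq w_i$. Without loss of generality I take $OPT_{i-1}$ to be a base of $\MM$ containing $S_{i-1}$ (using monotonicity to extend if necessary), and split into two cases. If $u_i \in OPT_{i-1}$, then $OPT_{i-1}$ already contains $S_i = S_{i-1} \cup \{u_i\}$, so $f(OPT_i) \geq f(OPT_{i-1})$ by the maximality in the definition of $OPT_i$, and the claimed inequality holds because $f(S_i) - f(S_{i-1}) \geq 0$.

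In the main case $u_i \notin OPT_{i-1}$, I will use a matroid exchange to find $v \in OPT_{i-1} \setminus S_{i-1}$ with $OPT_{i-1} \cup \{u_i\} \setminus \{v\} \in \II$. The set $OPT_{i-1} \cup \{u_i\}$ contains a unique circuit $C$ with $u_i \in C$, and this circuit cannot sit inside $S_{i-1} \cup \{u_i\}$ (which is independent), so some $v \in C \setminus \{u_i\}$ must lie in $OPT_{i-1} \setminus S_{i-1}$. The set $OPT_{i-1} \cup \{u_i\} \setminus \{v\}$ is then independent and contains $S_i$, giving $f(OPT_i) \geq f(OPT_{i-1} \cup \{u_i\} \setminus \{v\})$. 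Expanding via marginals, discarding the non-negative term $f(u_i \mid OPT_{i-1} \setminus \{v\})$ by monotonicity, and enlarging the conditioning set via submodularity (using $S_{i-1} \subseteq OPT_{i-1} \setminus \{v\}$), I obtain
\[
	f(OPT_{i-1}) - f(OPT_i) \leq f(v \mid OPT_{i-1} \setminus \{v\}) \leq f(v \mid S_{i-1}).
\]

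It then suffices to show $f(v \mid S_{i-1}) \leq w_i/(1-\ee)$, since this yields $(1-\ee)\,f(v \mid S_{i-1}) \leq w_i \leq f(u_i \mid S_{i-1}) = f(S_i) - f(S_{i-1})$. If $w_i = W$ this is immediate from $f(v \mid S_{i-1}) \leq f(v) \leq W = w_i$. Otherwise, the outer-loop threshold immediately preceding $w_i$ is exactly $w_i/(1-\ee)$, and at some moment during that threshold's inner pass the algorithm considered $v$ with an intermediate set $S^{**} \subseteq S_{i-1}$. Because $v \in OPT_{i-1}$ and $S^{**} \subseteq OPT_{i-1}$, the independence test $S^{**} \cup \{v\} \in \II$ was satisfied, so the only reason $v$ was not added must be $f(v \mid S^{**}) < w_i/(1-\ee)$; submodularity then upgrades this to $f(v \mid S_{i-1}) \leq f(v \mid S^{**}) < w_i/(1-\ee)$, finishing the argument.

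The main obstacle is the bookkeeping in the last step: one has to argue that $v$ really reached the rejection test at the preceding threshold and that its rejection was due to a low marginal rather than an independence failure. Choosing $v$ from $OPT_{i-1} \setminus S_{i-1}$ is precisely what rules out the second possibility, since every intermediate set built by the algorithm before iteration $i$ is a subset of $OPT_{i-1}$ and hence remains independent when $v$ is appended.
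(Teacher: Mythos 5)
Your proof is correct and follows essentially the same route as the paper's: a matroid exchange swapping $u_i$ for an element $v \in OPT_{i-1} \setminus S_{i-1}$ of the created circuit, combined with the thresholding argument that $f(v \mid S_{i-1}) \leq w_i/(1-\ee)$ because $v$ passed the independence test (its addition keeps the set inside an independent set) but must have failed the marginal test at the preceding threshold. The only cosmetic differences are that the paper applies the threshold bound to the maximum-marginal element $u^*_i$ of $OPT_{i-1} \setminus S_{i-1}$ rather than directly to the exchanged element, and treats the no-circuit case explicitly instead of assuming $OPT_{i-1}$ is a base.
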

\begin{proof}
Let $u_i = S_i \setminus S_{i - 1}$ be the $i^{th}$ element added by Algorithm~\ref{alg:LazyGreedy}, and let $w_i$ denote the value of $w$ in the iteration when $u_i$ is picked by the algorithm. By the definition of the algorithm, $f(u \mid S_{i - 1}) \geq w_i$. Let $u^*_i = \arg \max_{u \in OPT_{i - 1} \setminus S_{i - 1}} f(u \mid S_{i - 1})$ be an element with the maximal marginal contribution in $OPT_{i - 1} \setminus S_{i - 1}$. By the definition of $OPT_{i - 1}$, $u^*_i$ can be added to $S_{i - 1}$. Since $u^*_i$ was not added before $u_i$, $f(u^*_u \mid S_{i - 1}) \leq w_i / (1 - \ee)$.

If $u_i \in OPT_i$, then $f(OPT_{i - 1}) = f(OPT_i)$ and the lemma clearly holds since $f(S_i) - f(S_{i - 1}) \geq 0$. Otherwise, consider the set $OPT'_i$ obtained by adding $u_i$ to $OPT_{i - 1}$ and removing an element of $OPT_i \setminus S_i$ from the cycle created (if no cycle is created, we remove no element). Clearly $OPT'_i$ is an independent set containing $S_i$. Thus,
\begin{align*}
	&
	f(OPT_{i - 1}) - f(OPT_i)
	\leq
	f(OPT_{i - 1}) - f(OPT'_i)\inConference{\\}
	\leq\inConference{{} &}
	\max_{u \in OPT_{i - 1} \setminus S_{i - 1}} f(u \mid OPT_{i - 1} \setminus \{u\}) \\
	\leq{} &
	\max_{u \in OPT_{i - 1} \setminus S_{i - 1}} f(u \mid S_{i - 1})
	=
	f(u^*_i \mid S_{i - 1})
	\leq
	\frac{w_i}{1 - \ee}\inConference{\\}
	\leq\inConference{{} &}
	\frac{f(u_i \mid S_{i - 1})}{1 - \ee}
	=
	\frac{f(S_i) - f(S_{i - 1})}{1 - \ee}
	\enspace.
	\qedhere
\end{align*}
\end{proof}

To get an interesting result from the last lemma, we need to show that $f(OPT_\ell)$ is not too large, \ie, $f(OPT_i)$ decreases significantly as a function of $i$.

\begin{lemma} \label{le:final_opt_final_solution}
$f(OPT_\ell) \leq f(S_\ell) + \ee \cdot f(OPT)$.
\end{lemma}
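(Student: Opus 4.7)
The plan is to show that every element of $OPT_\ell \setminus S_\ell$ has a tiny marginal contribution to $S_\ell$, and then conclude by summing over all such elements using submodularity.

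First, I would observe that any $u \in OPT_\ell \setminus S_\ell$ is a feasible insertion candidate throughout the entire execution: since $OPT_\ell \supseteq S_\ell \supseteq S$ for every intermediate solution $S$, we have $S \cup \{u\} \subseteq OPT_\ell \cup \{u\} = OPT_\ell \in \II$, so $S \cup \{u\} \in \II$ by downward closure of matroid independence. The inner loop scans every element of $\NN$, so $u$ is examined in every outer iteration. Since $u$ is never added, the condition $f(u \mid S) \geq w$ must have failed at each examination. In particular, at the final outer iteration with threshold $w^*$ the current $S \subseteq S_\ell$ satisfies $f(u \mid S) < w^*$, and by submodularity $f(u \mid S_\ell) \leq f(u \mid S) < w^*$.

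Next I would bound $w^*$. The outer loop continues as long as $w > \ee W/k$ and shrinks $w$ by a factor of $(1-\ee)$ between iterations. Hence the final threshold obeys $w^* \leq \tfrac{\ee W}{k(1-\ee)}$, since otherwise the loop would execute one more iteration. Using $|OPT_\ell \setminus S_\ell| \leq k$, the submodular telescoping bound, and the standing assumption $W = \max_{u \in \NN} f(u) \leq f(OPT)$, I would conclude
\[
	f(OPT_\ell) - f(S_\ell)
	\leq \sum_{u \in OPT_\ell \setminus S_\ell} f(u \mid S_\ell)
	< k \cdot \frac{\ee W}{k(1-\ee)}
	\leq \frac{\ee}{1-\ee} \cdot f(OPT).
\]

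There is no genuine obstacle here; the argument is a direct unwinding of the termination condition together with the greedy selection rule. The only mild cosmetic issue is the $\tfrac{1}{1-\ee}$ gap between the bound above and the stated $\ee \cdot f(OPT)$; this can be absorbed by a trivial reparameterization of $\ee$, which is immaterial for the intended application (where $\ee$ is a constant chosen so that $O(n\ee^{-1}\ln(k/\ee)) = O(n\ln k)$).
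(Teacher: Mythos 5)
Your argument is correct and follows the same route as the paper: each $u \in OPT_\ell \setminus S_\ell$ remains addable throughout the run (by downward closure from $OPT_\ell$), so its marginal to $S_\ell$ is bounded by the last threshold $w^*$, and summing over at most $k$ such elements and using $W \leq f(OPT)$ gives the claim. You also correctly note that the loop's termination condition only guarantees $w^* \leq \ee W/(k(1-\ee))$, whereas the paper's proof implicitly uses the stronger (and not directly justified) bound $f(u \mid S_\ell) \leq \ee W/k$; your $\frac{\ee}{1-\ee}\cdot f(OPT)$ is the honest constant, and as you observe the extra $(1-\ee)^{-1}$ slack is immaterial --- it is absorbed by the downstream choice $\ee = 1/6$ and still yields the $(1/3)$-approximation asserted in Lemma~\ref{le:crude_approximation}.
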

\begin{proof}
Every element of $OPT_\ell \setminus S_\ell$ can be added to $S_\ell$ (since $S_\ell$ is a subset of the independent set $OPT_\ell$). Since none of them is added by Algorithm~\ref{alg:LazyGreedy}, we must have $f(u \mid S_\ell) \leq \ee W/k$ for every $u \in OPT_\ell \setminus S_\ell$. Hence,
\begin{align*}
	f(OPT_\ell) - f(S_\ell\inConference{&})
	\leq
	\sum_{u \in OPT_\ell \setminus S_\ell} f(u \mid S_\ell)\inConference{\\}
	\leq\inConference{{} &}
	\sum_{u \in OPT_\ell \setminus S_\ell} \frac{\ee W}{k}
	\leq
	\ee W
	\leq 
	\ee \cdot f(OPT)
	\enspace,
\end{align*}
where the last inequality follows from the assumption that $f(u) \leq f(OPT)$ for every $u \in \NN$.
\end{proof}

Combining the above lemmata imply the following corollary.

\begin{corollary}
\inArxiv{Algorithm~\ref{alg:LazyGreedy} is a $(1/2 - \ee)$-approximation algorithm for $\max\{f(S) \mid S \in \II\}$.}
\inConference{Algorithm~\ref{alg:LazyGreedy} has an approximation ratio of at least $(1/2 - \ee)$ for $\max\{f(S) \mid S \in \II\}$.}
\end{corollary}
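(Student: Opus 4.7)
The plan is to combine the two preceding lemmata telescopically, and then do a short algebraic cleanup. First, summing Lemma~\ref{le:improvement_lost_balance} over $i = 1, 2, \dotsc, \ell$, both sides telescope, yielding
\[
    (1 - \ee) \cdot [f(OPT_0) - f(OPT_\ell)] \leq f(S_\ell) - f(S_0) = f(S_\ell)
    \enspace.
\]
Since $S_0 = \varnothing$, every independent set contains $S_0$, so $OPT_0$ may be taken to be an optimum independent set and $f(OPT_0) = f(OPT)$.

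Next, I would use Lemma~\ref{le:final_opt_final_solution} to relate $f(OPT_\ell)$ back to $f(S_\ell)$: substituting $f(OPT_\ell) \leq f(S_\ell) + \ee \cdot f(OPT)$ into the telescoped bound gives
\[
    (1 - \ee) \cdot [(1 - \ee) f(OPT) - f(S_\ell)] \leq f(S_\ell)
    \enspace,
\]
which rearranges to $(1 - \ee)^2 f(OPT) \leq (2 - \ee) f(S_\ell)$, i.e.
\[
    f(S_\ell) \geq \frac{(1 - \ee)^2}{2 - \ee} \cdot f(OPT)
    \enspace.
\]

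The only remaining piece is the elementary inequality $\frac{(1-\ee)^2}{2-\ee} \geq \frac{1}{2} - \ee$, which I would verify by clearing denominators: multiplying through by $2(2-\ee) > 0$ reduces it to $2 - 4\ee + 2\ee^2 \geq 2 - 5\ee + 2\ee^2$, i.e.\ $\ee \geq 0$, which holds. This completes the proof. I do not anticipate a real obstacle here; the only slightly non-routine step is recognizing that one should apply Lemma~\ref{le:final_opt_final_solution} \emph{inside} the telescoped inequality rather than directly, and that the final algebraic comparison between $\frac{(1-\ee)^2}{2-\ee}$ and $\frac{1}{2} - \ee$ is cleanest after cross-multiplying.
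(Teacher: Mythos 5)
Your proposal is correct and follows essentially the same route as the paper: telescoping Lemma~\ref{le:improvement_lost_balance}, plugging in Lemma~\ref{le:final_opt_final_solution}, arriving at the same intermediate bound $f(S_\ell) \geq \frac{(1-\ee)^2}{2-\ee} \cdot f(OPT)$, and finishing with the same elementary comparison to $1/2 - \ee$. The only difference is the cosmetic order of the algebraic rearrangement, which does not change the argument.
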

\begin{proof}
Lemma~\ref{le:improvement_lost_balance} implies:
\begin{align*}
	\inConference{&}
	(1 - \ee) \cdot [f(OPT_0) - f(OPT_\ell)]\inConference{\\}
	={} &
	(1 - \ee) \cdot \sum_{i = 1}^\ell [f(OPT_{i - 1}) - f(OPT_i)]\\
	\leq{} &
	\sum_{i = 1}^\ell [f(S_i) - f(S_{i - 1})]
	=
	f(S_\ell) - f(S_0)
	\leq
	f(S_\ell)
	\enspace.
\end{align*}

Rearranging and using Lemma~\ref{le:final_opt_final_solution} and the observation $f(OPT_0) = f(OPT)$, we get:
\begin{align*}
	f(OPT)
	\leq\inConference{{} &}
	f(OPT_\ell) + \frac{f(S_\ell)}{1 - \ee}\inConference{\\}
	\leq\inConference{{} &}
	[f(S_\ell) + \ee \cdot f(OPT)] + \frac{f(S_\ell)}{1 - \ee}\inConference{\\}
	=\inConference{{} &}
	\ee \cdot f(OPT) + \frac{2 - \ee}{1 - \ee} \cdot f(S_\ell)
	\enspace.
\end{align*}
Hence,
\begin{align*}
	f(S_\ell)
	\geq{} &
	\frac{(1 - \ee)^2}{2 - \ee} \cdot f(OPT)\inConference{\\}
	\geq\inConference{{} &}
	\frac{1 - 2\ee}{2} \cdot f(OPT)
	=
	(1/2 - \ee) \cdot f(OPT)
	\enspace.
	\qedhere
\end{align*}
\end{proof}

Lemma~\ref{le:crude_approximation} now follows by choosing $\ee = 1/6$.

\end{document}